\newif\ifabstract
\newif\iffull
\newtheorem{theorem}{Theorem}[section]
\newtheorem{lemma}[theorem]{Lemma}
\newtheorem{corollary}[theorem]{Corollary}
\newtheorem{definition}[theorem]{Definition}
\newtheorem{remark}{Remark}[section]
\newcommand{\len}{\mathsf{len}}
\newcommand{\genus}{\mathsf{genus}}
\newcommand{\crossingnumber}{\mathsf{cr}}
\newcommand{\edgeplanarization}{\mathsf{edge\text{-}planarization}}
\newcommand{\vertexplanarization}{\mathsf{vertex\text{-}planarization}}
\newcommand{\eg}{\mathsf{eg}}
\newcommand{\dmax}{\Delta}
\newcommand{\OPT}{\mathsf{OPT}}
\newcommand{\framed}{\mathsf{framed}}
\renewcommand{\phi}{\varphi}
\newcommand{\eps}{\varepsilon}
\newcommand{\calS}{{\cal S}}
\newcommand{\poly}{\operatorname{poly}}
\newcommand{\myparagraph}[1]{\smallskip \noindent {\bf #1 }}
\newcommand{\proofof}{of }
\newcommand{\myparagraph}[1]{\paragraph{#1}}
\newcommand{\proofof}{Proof of }
\title{Approximation algorithms for Euler genus and related problems\footnote{An extended abstract of this paper appeared in FOCS 2013. This version contains some minor fixes to the previous one.}}
\author{Chandra Chekuri\thanks{Dept.\ of Computer Science, University
of Illinois, Urbana, IL 61801. {\tt chekuri@illinois.edu}.
Supported in part by NSF grant CCF-1016684.} 
\and 
Anastasios Sidiropoulos\thanks{Departments of Computer Science \& Engineering and Mathematics, The Ohio State University, Columbus, OH 43210. {\tt sidiropoulos.1@osu.edu}.
Supported in part by David and Lucille Packard Fellowship, NSF AF award CCF-0915984, and NSF grant CCF-0915519.}
}
\date{\today}
\begin{document}

\pagenumbering{gobble}

\maketitle

\begin{abstract}
  The Euler genus of a graph is a fundamental and well-studied
  parameter in graph theory and topology. Computing it has been shown
  to be NP-hard by Thomassen \cite{Thomassen89,Thomassen93a}, and it
  is known to be fixed-parameter tractable. However, the
  approximability of the Euler genus is wide open. While the existence
  of an $O(1)$-approximation is not ruled out, only an
  $O(\sqrt{n})$-approximation \cite{ChenKK97} is known even in bounded
  degree graphs. In this paper we give a polynomial-time algorithm
  which on input a bounded-degree graph of Euler genus $g$, computes a
  drawing into a surface of Euler genus $g^{O(1)} \cdot \log^{O(1)}
  n$. Combined with the upper bound from \cite{ChenKK97}, our result
  also implies a $O(n^{1/2 - \alpha})$-approximation, for some
  constant $\alpha>0$.
  
  Using our algorithm for approximating the Euler genus as a
  subroutine, we obtain, in a {\em unified} fashion, algorithms with
  approximation ratios of the form $\OPT^{O(1)} \cdot \log^{O(1)} n$
  for several related problems on bounded degree graphs. These include
  the problems of orientable genus, crossing number, and planar edge
  and vertex deletion problems. Our algorithm and proof of correctness
  for the crossing number problem is simpler compared to the long and
  difficult proof in the recent breakthrough by Chuzhoy
  \cite{Chuzhoy11}, while essentially obtaining a qualitatively
  similar result. For planar edge and vertex deletion problems our
  results are the first to obtain a bound of form $\poly(\OPT,\log
  n)$.

  We also highlight some further applications of our results in the
  design of algorithms for graphs with small genus. Many such
  algorithms require that a drawing of the graph is given as part of
  the input. Our results imply that in several interesting cases, we
  can implement such algorithms even when the drawing is unknown.
\end{abstract}

\newpage
\setcounter{page}{1}
\pagenumbering{arabic}

\section{Introduction}
A \emph{drawing} of a graph $G$ into a surface ${\cal S}$ is a mapping
that sends every vertex $v\in V(G)$ into a point $\phi(v)\in {\cal
  S}$, and every edge into a simple curve connecting its endpoints, so
that the images of different edges are allowed to intersect only at
their endpoints. In this paper we deal with closed surfaces (compact
and without boundary). A surface ${\cal S}$ can be orientable or
non-orientable.
The \emph{Euler genus} $\eg({\cal S})$ of a nonorientable surface
${\cal S}$ is defined to be the nonorientable genus of ${\cal S}$.
The Euler genus $\eg({\cal S})$ of an orientable surface ${\cal S}$ is
equal to $2\gamma$, where $\gamma$ is the orientable genus of ${\cal
  S}$.  For a graph $G$, the Euler genus of $G$, denoted by $\eg(G)$,
is defined to be equal to the infimal Euler genus of a surface ${\cal
  S}$, such that $G$ can be drawn into ${\cal S}$. The orientable
genus of a graph $G$, denoted by $\genus(G)$, is the infimal genus of
an orientable surface ${\cal S}$ into which $G$ can be drawn.

Drawings of graphs into various surfaces are of central importance in
graph theory (e.g.~\cite{gross2001topological, mohar2001graphs}),
topology, and mathematics in general (e.g.~\cite{white2001graphs}),
and have been the subject of intensive study. In particular, surface
embedded graphs are an important ingredient in the seminal work of
Robertson and Seymour on graph minors and the proof of Wagner's
conjecture. Surface embedded graphs are also important in computer
science, and engineering, since they can be used to model a wide
variety of natural objects.

We consider two simple-to-state and fundamental optimization problems
in topological graph theory: given a graph $G$, compute $\eg(G)$ and
$\genus(G)$. Thomassen \cite{Thomassen89} showed that computing these
quantities is NP-hard. Deciding whether a graph has Euler genus 0,
i.e.~planarity testing, can be done in linear time by the seminal
result of Hopcroft \& Tarjan \cite{Hopcroft}. Deciding if $\eg(G) \le
g$ is fixed-parameter tractable. In fact, Mohar \cite{Mohar99} gave a
linear time algorithm for this problem, and subsequently a relatively
simple linear-time algorithm was given by Kawarabayashi, Mohar \& Reed
\cite{KawarabayashiMR08}.  The dependence of the running time in the
above mentioned algorithms is at least exponential in $g$.  We note
that, for any fixed $g$, the set of all graphs with genus at most $g$,
denoted by ${\cal G}_g$ is minor-closed.  From the work of Robertson
and Seymour \cite{RobertsonS90b}, ${\cal G}_g$ is characterized as the
class of graphs that exclude as a minor all graphs from a finite
family of graphs ${\cal H}_g$. However, ${\cal H}_g$ is not known
explicitly even for small values of $g$ and $|{\cal H}_g|$
can be very large. We remark that ${\cal H}_1$ is known explicity.

In this paper we consider the case when $g$ is not a fixed constant
and examine the {\em approximability} of $\eg(G)$ and $\genus(G)$.
Perhaps surprisingly, this problem is very poorly understood.  We
briefly describe the known results and illustrate the technical
difficulties. In general, $\eg(G)$ can be as large as $\Omega(n^2)$
where $n$ is the number of nodes of $G$ (e.g.~for the complete graph
$K_n$), and Euler's characteristic implies that any $n$-vertex graph
of Euler genus $g$ has at most $O(n+g)$ edges.  Since any graph can be
drawn into a surface that has one handle for every edge, this
immediately implies an $O(n/g)$-approximation, which is a
$\Theta(n)$-approximation in the worst case. In other words, even
though we currently cannot exclude the existence of an
$O(1)$-approximation, the state of the art only gives a trivial
$O(n)$-approximation.  Using the fact that graphs of small genus have
small balanced vertex-separators, Chen, Kanchi \& Kanevsky
\cite{ChenKK97} obtained a simple $O(\sqrt{n})$-approximation for
graphs of {\em bounded degree} which is currently the best known
approximation ratio. In fact, if we do not assume bounded degree,
nothing better than the trivial $O(n/g)$-approximation is
known. Consider the case of apex graphs which are graphs that contain
a single vertex whose removal makes them planar.  Mohar
\cite{mohar2001face} showed that the genus problem for even these
graphs is NP-hard. He also gave an elegant characterization of the
genus for apex graphs, which can in turn be used to obtain a
$O(1)$-approximation for such graphs. It is worth mentioning that
essentially nothing is known for graphs with a constant number of
apices!  We also remark that by Euler's formula, there is a trivial
$O(1)$-approximation if the average degree is at least $6+\eps$, for
some fixed $\eps>0$. Finally, we mention a recent result by
Makarychev, Nayyeri \& Sidiropoulos \cite{genus_hamiltonian}, who gave
an algorithm that given a Hamiltonian graph $G$ along with a
Hamiltonian path $P$, draws the graph into a surface of Euler genus
$g^{O(1)} \log^{O(1)} n$ where $g$ is the orientable genus of $G$. We
note that their algorithm does not assume bounded degree which is its
strength but assumes Hamiltonicity which is a limitation. Moreover,
the techniques in \cite{genus_hamiltonian} rely heavily on using the
given Hamiltonian path $P$ while our techniques here are based on
treewidth related ideas among several others. 

Our algorithms for approximating genus also give us, in a unified
fashion, algorithms for two related problems on drawing a graph on a
planar surface, namely crossing number and planar edge/vertex
deletion. The guarantees of our algorithms are of the form
$\OPT^{O(1)} \log^{O(1)} n$. These problems have also been
well-studied and have the common feature that the know hardness
results are weak (either NP-Hardness or APX-Hardness) while known
approximation bounds are polynomial in $n$ even in bounded-degree
graphs. In this context we make some remarks on why the bounded-degree
assumption is interesting despite being a limitation in some
ways. First, we can assume that the graph has bounded average degree
since otherwise the lower bound on the instance is very high and it
becomes easy to approximate (see previous comment on genus).  It is
not uncommon in applications such as VLSI design and graph layout to
assume some form of an upper bound on the degree; heuristically
algorithms that work for bounded degree graphs can be extended to
handle the case of graphs that can be made bounded degree by the
removal of a small number of edges.  Second, from a theoretical point
of view, understanding the approximability even when all degrees are
bounded (in fact at most $3$) is non-trivial and there has been very
limited progress over two decades.  It is only very recently that
Chuzhoy, in a breaktrough and technically difficult work, obtained a
bound of the form $\OPT^{O(1)} \log^{O(1)} n$ for crossing number
problem in bounded degree graphs. We now describe our results
formally.

\myparagraph{Our results.}
Our main result is an approximation algorithm for the Euler genus of
bounded degree graphs. More specifically, given a graph $G$ of Euler
genus $g$, our algorithm computes a drawing of $G$ into a surface of
Euler genus $\dmax^{O(1)}g^{O(1)} \log^{O(1)} n$ where $\Delta$ is
the maximum degree.  The algorithm's running time is
polynomial in both $g$ and $n$.  Combined with the simple
$O(n^{1/2})$-approximation from \cite{ChenKK97}, our result
gives a $O(n^{1/2-\alpha})$-approximation for some fixed
constant $\alpha>0$. The following theorem summarizes our main result.


\begin{theorem}[Main result]\label{thm:main}
  There is a polynomial-time algorithm which given a graph $G$ of
  maximum degree $\dmax$, and an integer $g\geq 0$, either outputs a
  drawing of $G$ into a surface of Euler genus $O(\dmax^2 g^{12}
\log^{19/2}n)$, or correctly decides that the Euler genus of $G$ is
  greater than $g$.
\end{theorem}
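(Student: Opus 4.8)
The plan is to reduce the problem of drawing $G$ on a low-genus surface to the problem of finding a small set of edges whose removal leaves a planar graph, and then iterating a "cut open along the non-planar part" argument in a bounded number of rounds. Concretely, I would first use the fact (Euler's formula) that a graph of Euler genus $g$ has at most $O(n)$ edges, and more importantly that it admits a balanced vertex separator of size $O(\sqrt{gn})$ — this is the same structural fact used by \cite{ChenKK97}. However, rather than recursing on separators naively (which loses a $\sqrt n$ factor), I would instead aim to directly certify a near-optimal \emph{decomposition} of $G$ into planar pieces glued along a small number of vertices or edges.

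The key steps, in order, are as follows. \textbf{(1)} Using treewidth-based arguments, reduce to the case where $G$ has a tree decomposition whose pieces interact in a controlled way; in particular, either the treewidth is $g^{O(1)}\log^{O(1)} n$ (in which case one can handle the instance by dynamic programming over the decomposition, since genus is "MSO-expressible" on bounded-treewidth graphs, or by an explicit embedding argument), or there is a large well-connected region that must itself contain a topologically nontrivial curve, contributing to the genus. \textbf{(2)} In the latter case, apply the \edgeplanarization{} (or \vertexplanarization) subroutine: find a set $F$ of $\poly(g,\log n)$ edges such that $G - F$ is planar; the existence of such a set with $|F| = O(g \log n)$ or so follows because a genus-$g$ drawing can be cut along $O(g)$ non-separating curves to planarize, and the bounded-degree assumption converts "cutting curves" into "deleting few edges." \textbf{(3)} Take a planar drawing of $G-F$, then re-insert the edges of $F$ one at a time; each re-insertion forces at most a bounded number of handles/crosscaps, so the total Euler genus added is $O(|F| \cdot \dmax) = \poly(g,\dmax,\log n)$. \textbf{(4)} Track the exponents carefully through the recursion/iteration to land at the stated bound $O(\dmax^2 g^{12}\log^{19/2} n)$, and note that whenever any subroutine (separator, planarization, treewidth) fails to return an object consistent with genus $\le g$, we output "$\eg(G) > g$."

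The main obstacle — and the technical heart — will be step (2): producing a \emph{small} planarizing edge set with a bound depending polynomially on $g$ rather than on $n$, \emph{without} already knowing a low-genus drawing. The difficulty is that the natural certificate for "$\eg(G) \le g$" is precisely a drawing, so we are trying to approximately recover the drawing from purely combinatorial data. I expect to handle this via an irreducible-triangulation / "pattern" argument: decompose $G$ along balanced separators into $\poly(\log n)$ levels, argue that only $O(g)$ of the separator-bounded regions can be "non-planar" (since each such region consumes a unit of the genus budget), planarize each bad region locally by deleting $g^{O(1)}$ edges, and then stitch the planar pieces together, paying one handle per gluing interface. Controlling the interaction between the recursion depth ($O(\log n)$) and the per-level genus cost (each a polynomial in $g$) is what produces the $\log^{O(1)} n$ and $g^{O(1)}$ factors, and getting these exponents down to the stated constants will require the careful bookkeeping deferred to the full proof.
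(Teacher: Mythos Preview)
Your proposal has a genuine gap at the core: you have no mechanism to handle the high-treewidth case. Your step~(1) dichotomy says ``either treewidth is $g^{O(1)}\log^{O(1)} n$, or there is a large well-connected region that contributes to the genus,'' but the second branch is not an algorithm---a large well-connected region in a genus-$g$ graph does \emph{not} certify $\eg(G)>g$ (indeed, a toroidal grid has treewidth $\Theta(\sqrt n)$ and genus $1$). Your planarizing-set argument in the final paragraph is essentially the paper's Lemma~\ref{lem:planarization}, and it only yields $|F|=O(g\,t\,\log^{O(1)} n)$ where $t$ is the treewidth; when $t$ is large this is useless. So the whole scheme collapses to: planarize via separators, then add handles for deleted edges---which is exactly the trivial $O(\sqrt n)$-approximation of \cite{ChenKK97} you are trying to beat. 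Separately, in the low-treewidth branch, ``genus is MSO-expressible'' or ``dynamic programming'' is not polynomial in $g$: the known DP/MSO approaches have at least singly-exponential dependence on the genus bound, so you cannot invoke them when $g$ is part of the input.

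What the paper actually does, and what your outline is missing entirely, is a \emph{treewidth-reduction step that preserves extendability}. The paper iteratively removes the interiors of carefully chosen \emph{universal patches}---planar pieces bounded by cycles that are provably contractible in \emph{every} optimal drawing (via Mohar's planarly-nested-sequence machinery and a flat-grid-minor computation)---until the remaining \emph{skeleton} has treewidth $g^{O(1)}\log^{O(1)} n$. Only then does it planarize the skeleton by separator recursion (your step~(2) idea), and the crucial \emph{framing} trick (attaching width-$3$ cylinders along patch boundaries) guarantees that the resulting approximate planar drawing of the skeleton can be extended back to the removed patches at bounded extra genus. Your step~(3), ``re-insert deleted edges, one handle each,'' is the right endgame, but it only becomes useful after this skeleton/patch reduction, which is the technical heart of the result and is absent from your plan.
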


\begin{remark}
  \label{remark:opt-ratio}
  Kawarabayashi, Mohar and Reed \cite{KawarabayashiMR08} obtain an
  exact algorithm to compute the Euler genus of a given graph in time
  $2^{O(\OPT)} n$ time, which in particular implies a polynomial-time
  algorithm when $\OPT = O(\log n)$; this simplifies and improves a
  previous linear-time algorithm of Mohar~\cite{Mohar99} which had a
  doubly-exponential dependence on $\OPT$. Theorem~\ref{thm:main},
  when combined with the algorithm in \cite{KawarabayashiMR08}, implies a
  polynomial-time algorithm
that given a graph $G$ outputs a drawing
  on a sufrace with Euler genus $O(\dmax^3 \OPT^{O(1)})$.
\end{remark}

We build on our main result to obtain several other non-trivial
results; we describe the outline of the unified methodology in
Section~\ref{sec:reductions}. First, we obtain an
algorithm to approximate $\genus(G)$, the orientable genus of a given
graph $G$, summarized in the theorem below. Note that $\genus(G)$ can
be $\Omega(\sqrt{|V(G)|})$ even when $\eg(G) =
O(1)$~\cite{JGT:JGT3190200305}.

\begin{theorem}[Approximating the orientable
  genus]\label{thm:intro_orientable_genus}
  There exists a polynomial-time algorithm which given a graph $G$ of
  maximum degree $\dmax$, and an integer $g>0$, either correctly
  decides that $\genus(G)>g$, or outputs a drawing of $G$ into a
  surface of orientable genus $O(\dmax^3 g^{14} \log^{19/2} n)$.
\end{theorem}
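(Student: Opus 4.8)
The plan is to reduce the orientable case to the Euler genus algorithm of Theorem~\ref{thm:main}, using the elementary bound $\eg(G)\le 2\genus(G)$ to set up the call and then converting the resulting (possibly non‑orientable) embedding into an orientable one.

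\textbf{Step 1: reduction to Theorem~\ref{thm:main}.} If $\genus(G)\le g$ then $\eg(G)\le 2\genus(G)\le 2g$. Run the algorithm of Theorem~\ref{thm:main} on $G$ with the integer $2g$. If it reports that $\eg(G)>2g$, then a fortiori $\genus(G)>g$ and we output this. Otherwise we obtain a drawing $\phi$ of $G$ into a surface $S$ with $\eg(S)\le k := O(\dmax^2 g^{12}\log^{19/2}n)$. If $S$ is orientable we are already done, since its orientable genus is $\eg(S)/2\le k/2$, well within the claimed bound; so assume $S$ is non‑orientable. Note that the target bound $O(\dmax^3 g^{14}\log^{19/2}n)$ leaves a slack of a factor $O(\dmax g^2)$ over $k$.

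\textbf{Step 2: making the embedding orientable.} From $\phi$ extract a planarizing system: using $\eg(S)\le k$, compute (by the standard polygonal‑schema construction from a spanning tree together with a $\mathbb{Z}_2$‑homology basis) at most $k$ cycles $C_1,\dots,C_t$ in $G$ such that cutting $S$ along them — equivalently, splitting each vertex lying on some $C_i$ into the appropriate number of copies — turns $\phi$ into a planar drawing $\bar\phi$ of a planar graph $\bar G$. Recovering $G$ from $\bar G$ consists of $t\le k$ re‑identifications, one per $C_i$; each such gluing is either orientation‑preserving (adding a handle) or orientation‑reversing (adding a crosscap). Since $S$ is non‑orientable at least one $C_i$ is necessarily of the latter type, and — as the $\Omega(\sqrt{|V(G)|})$‑versus‑$O(1)$ gap between $\genus$ and $\eg$ shows — a crosscap cannot be traded for a handle for free. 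This is where the promise $\genus(G)\le g$ must be used: because $G$ admits an orientable embedding of genus at most $g$, the ``non‑orientable part'' of $\phi$ is confined to few edges, so that after re‑routing $O(\poly(g,\dmax))$ edges incident to the split vertices (through $O(\poly(g,\dmax))$ additional handles) every re‑identification can be made orientation‑preserving. Summing over the $t\le k$ cycles then yields an orientable embedding of $G$ of genus $O(\dmax g^2\cdot k)=O(\dmax^3 g^{14}\log^{19/2}n)$, and the procedure is polynomial‑time.

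\textbf{Main obstacle.} The hard part is Step 2. Step 1 and the extraction of a planarizing cycle system are routine topological graph theory, but the conversion to orientability is genuinely delicate: one must isolate a structural statement to the effect that bounded orientable genus localizes the non‑orientability of \emph{any} drawing, and then turn it into an explicit, polynomial‑time re‑gluing whose genus cost is only $\poly(g,\dmax)$ per cycle — in particular small enough not to inflate the $\log^{O(1)}n$ factor inherited from Theorem~\ref{thm:main}.
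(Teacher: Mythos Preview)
Your Step~1 is fine and matches the paper's setup. The genuine gap is Step~2, and you essentially flag it yourself: the assertion that ``the non-orientable part of $\phi$ is confined to few edges'' because $\genus(G)\le g$ is doing all the work, and you do not prove it. As stated it is not even well-posed. The cycles $C_1,\ldots,C_t$ in a polygonal schema for $S$ can have total length $\Theta(n)$; there is no a~priori reason why the number of split vertices along an orientation-reversing $C_i$ should be $\poly(g,\dmax)$, nor why re-routing a bounded number of edges near $C_i$ should flip its parity. The $\Omega(\sqrt{n})$-vs-$O(1)$ gap you cite is evidence \emph{against} any naive localization: a single crosscap can sit inside a structure that forces $\Omega(\sqrt{n})$ orientable genus, so turning that crosscap into a handle by local edits is exactly what can fail. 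In short, your Step~2 is a hope, not an argument, and the proof as written does not establish the theorem.

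The paper's approach is quite different and supplies precisely the missing structural tool. Starting from the (possibly nonorientable) drawing $\phi$ into $S$ with $\eg(S)\le g'=O(\dmax^2 g^{12}\log^{19/2}n)$, it iterates on the \emph{representativity} $\rho$ of the current drawing. If $\rho\le \alpha g^2$ for a suitable constant $\alpha$, there is a noncontractible noose meeting at most $\alpha g^2$ vertices; delete those vertices and cut $S$ along the noose, dropping the Euler genus by at least one. If instead $\rho>\alpha g^2$, then either $S$ is the projective plane and Fiedler et al.\ (Lemma~\ref{lem:projective_genus}) gives $\genus(G)\ge \lfloor \rho/2\rfloor>g$, or $S$ has nonorientable genus $\ge 2$ and Lemma~\ref{lem:mobius_grid_minor} produces a $(\Omega(\sqrt{\rho})\times\Omega(\sqrt{\rho}))$-M\"obius grid minor, which by Corollary~\ref{cor:mobius_genus} certifies $\genus(G)>g$. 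Thus either we certify $\genus(G)>g$, or after at most $O(g')$ cuts we reach an orientable surface having removed a vertex set $X$ with $|X|=O(g'\cdot g^2)$; re-inserting $X$ with one handle per incident edge costs $O(\dmax|X|)=O(\dmax^3 g^{14}\log^{19/2}n)$. The decisive idea you are missing is exactly this dichotomy: low representativity gives a cheap cut, while high representativity in a nonorientable surface yields a large M\"obius grid minor, hence a \emph{certificate} that $\genus(G)>g$ rather than an attempt to repair $\phi$.
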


\noindent
{\em Crossing number.}
In the crossing number problem the input is a graph $G$ which may not be
planar and the goal is to draw it into
the Euclidean plane with as few edge crossings as possible. When we
deal with this problem, we will allow the edges in a graph drawing to
intersect in their interiors.  The point where the interiors of two
edges intersect, is called a \emph{crossing} of the drawing.  We do
not allow the interiors of edges to intersect vertices, and we also
assume that there are no three edges, with their interiors
intersecting at the same point.  The \emph{crossing number} of a graph
$G$, denoted by $\crossingnumber(G)$, is defined to be the smallest
integer $k$, such that $G$ admits a drawing into the plane, with at
most $k$ crossings.

The crossing number problem has also been a difficult problem to
approximate, and the focus has been primarily on bounded degree
graphs.  It is an NP-hard problem but for each fixed $k$ there is a
linear time algorithm to decide if $\crossingnumber(G) \le k$
\cite{KawarabayashiR07}. In a recent breakthrough paper, Chuzhoy
\cite{Chuzhoy11} described an algorithm that given a graph $G$ outputs
a drawing into the plane with $O(\crossingnumber(G)^{10} \poly(\dmax
\log n))$ crossings; as a corollary she obtained the first algorithm
that had an approximation ratio that is sub-linear in $|V(G)|$. The
algorithm and proof in \cite{Chuzhoy11} occupy almost 80 pages. It is
a simple observation that if the crossing number of a graph $G$ is $k$
then $\genus(G) \le k$ since one can add a handle for each edge that
participates in a crossing. We can apply our approximation algorithm
to find a drawing of $G$ into an orientable surface, via
Theorem~\ref{thm:intro_orientable_genus}, of genus $O(\dmax^4 k^{9}
\log^{13/2} n)$. Interestingly, having a drawing on a relatively low
genus surface, allows us to obtain a rather simple algorithm for
crossing number. Our result is summarized below.

\begin{theorem}[Approximating the crossing number]
\label{thm:intro_crossing_number}
  There exists a polynomial-time algorithm which given a graph $G$ of maximum degree $\dmax$,
  and an integer $k\geq 0$, either correctly decides that
  $\crossingnumber(G)>k$, or outputs a drawing of $G$ into the plane
  with at most $O(\dmax^{9} k^{30} \log^{19} n)$ crossings.
\end{theorem}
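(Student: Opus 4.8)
The plan is to reduce the problem to the one solved by Theorem~\ref{thm:intro_orientable_genus}, and then to convert a low‑genus embedding of $G$ into a plane drawing while paying only polylogarithmically many (in $n$) crossings per handle. First observe that $\genus(G)\le\crossingnumber(G)$: given any plane drawing of $G$ with $c$ crossings, attaching a tiny handle at each crossing point and rerouting one of the two edges through that handle removes the crossing, producing an embedding of $G$ on an orientable surface of genus $c$. Hence $\crossingnumber(G)\le k$ implies $\genus(G)\le k$, and we may run the algorithm of Theorem~\ref{thm:intro_orientable_genus} with parameter $k$: if it reports $\genus(G)>k$ we conclude $\crossingnumber(G)\ge\genus(G)>k$ and are done, and otherwise it returns an explicit (combinatorial) embedding of $G$ on an orientable surface $\Sigma$ of genus $\gamma=O(\dmax^4 k^{9}\log^{13/2}n)$.

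It remains to planarize this embedding. We destroy the $\gamma$ handles of $\Sigma$ one at a time: pick a non‑separating simple closed curve $\alpha$ on $\Sigma$, cut $\Sigma$ along it (dropping the genus by one, leaving the two sides of $\alpha$ as boundary circles), and reroute every edge of $G$ that used to cross $\alpha$; after $\gamma$ steps $\Sigma$ is a sphere and $G$ has become a plane drawing carrying the crossings accumulated along the way. The cost of destroying one handle is, up to $\poly(\dmax)$ factors from the reroutings near vertices, roughly (number of edges of $G$ crossing $\alpha$) times (number of arcs those reroutings must pass over), so the whole scheme costs about $\gamma$ times the square of the largest number of edges crossed by any chosen curve --- provided the reroutings done in one step can be confined so as not to interfere with later steps.

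The crux --- and what I expect to be the heart of the proof --- is that the curves $\alpha$ can always be chosen to meet $G$ in few edges; here the genus bound alone is not enough (an $n$‑vertex toroidal grid has bounded degree and genus $1$, yet every non‑separating curve crosses $\Omega(\sqrt n)$ of its edges), so we exploit the hypothesis $\crossingnumber(G)\le k$. Fix an optimal plane drawing of $G$ and let $F\subseteq E(G)$ be a set of at most $k$ edges whose deletion makes that drawing crossing‑free, so $G-F$ is planar. Since $G-F$ is planar while $\Sigma$ has positive genus, the embedding of $G-F$ inherited from $\Sigma$ cannot be made cellular on $\Sigma$ (a planar graph embeds cellularly only on the sphere), so some face of $G-F$ has positive genus; inside such a face there is a non‑separating simple closed curve of $\Sigma$ that is disjoint from $G-F$, hence crosses only edges of $F$, and a shortest such curve crosses $G$ only $O(k)$ times (up to $\poly(\dmax)$ factors). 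Thus at every stage a valid curve crossing $G$ only $\poly(k,\dmax)$ times exists, and one at least this good is computable by a shortest‑curve computation in the embedded graph; moreover such a curve, living inside a face of $G-F$, lets the corresponding reroutings be performed entirely inside that face, creating no crossings with $G-F$ and not disturbing the rest of the embedding, so the costs of the $\gamma$ handle removals merely add. Composing these bounds, and tracking the $\poly(\dmax)$ losses through the surgeries, yields a plane drawing of $G$ with $O(\dmax^{9} k^{30}\log^{19}n)$ crossings, computed in polynomial time. The remaining obstacle is exactly the bookkeeping of this last sentence: making the per‑handle surgery genuinely local and verifying that the reroutings and degree blow‑ups compound only polynomially --- and it is precisely here, rather than in Step~1, that planarity of $G-F$ (not merely the genus bound) is used.
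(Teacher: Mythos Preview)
Your high-level plan matches the paper's: reduce to orientable genus via Theorem~\ref{thm:intro_orientable_genus}, then repeatedly cut along short non-contractible curves until the surface is a sphere, then reinsert the deleted material. The gap is in your argument that a short cut always exists. The claim ``a planar graph embeds cellularly only on the sphere'' is false: by the genus interpolation theorem, any connected graph $H$ with cycle rank $\beta(H)=|E(H)|-|V(H)|+1$ has a cellular embedding on every orientable surface of genus between $\genus(H)$ and $\lfloor\beta(H)/2\rfloor$; in particular $K_4$ embeds cellularly on the torus. So the inherited embedding of $G-F$ on $\Sigma$ may well be cellular, and there need be no face of $G-F$ containing a non-contractible curve of $\Sigma$. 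Both your existence argument for the short curve and your locality argument for the reroutings (``performed entirely inside that face, creating no crossings with $G-F$'') rest on this false premise. Even when such a face does exist, a curve in it can cross a single edge of $F$ many times, so ``crosses only edges of $F$'' does not yield an $O(k)$ bound.

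The paper establishes the needed representativity bound by a different route. If the shortest non-contractible noose in the current embedding has length exceeding $\alpha k$, then results of Brunet--Mohar--Richter produce $\Omega(k)$ pairwise disjoint homotopic non-separating cycles and $\Omega(k)$ disjoint paths between the outermost two, giving an $(\Omega(k)\times\Omega(k))$-augmented-cylinder minor in $G$; such a minor has crossing number $\Omega(k)$, and since crossing number is preserved up to a factor $4$ under degree-$\le 4$ minors (Garcia-Moreno--Salazar), this certifies $\crossingnumber(G)>k$. Hence either the algorithm outputs this certificate, or every noose has length $O(k)$ and the cutting proceeds. For the final step the paper does not reroute handle by handle; it collects all vertices removed along the nooses into one planarizing set, draws the remaining graph in the plane, and invokes the Chimani--Hlin\v{e}n\'y edge-reinsertion theorem, which bounds the total crossings by $O(\dmax\cdot\ell\cdot(\ell+\crossingnumber(G)))$ where $\ell$ is the number of reinserted edges. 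Your per-handle rerouting scheme, besides depending on the faulty locality claim, does not supply a comparable bound against crossings with the rest of $G$.
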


We note that the dependence on $k$ in our theorem is worse than that
in \cite{Chuzhoy11}. However, we believe that our approach, in
addition to giving a simpler proof, is interesting because it appears
to differ from that in \cite{Chuzhoy11} in going via a somewhat
indirect route through a low genus drawing.
We refer the interested reader to \cite{ChuzhoyMS11,Chuzhoy11,ChimaniH11} for
various pointers to the extensive work on crossing number and related
problems.

\medskip
\noindent
{\em Planar Edge and Vertex Deletion.}
We extend our approach via genus to obtain an approximation algorithm
for the minimum planar edge/vertex deletion problems.  In these
problems we are given a graph $G$ and the goal is to remove the
smallest number of edges/vertices to make it planar.  We denote by
$\edgeplanarization(G)$ the minimum size of such a planarizing set of
edges and similarly by $\vertexplanarization(G)$ for vertices. The
best known approximation for this problems is $O(\sqrt{n \log n})$
due to Tragoudas via the separator algorithms \cite{LeightonR99}, and recently
Chuzhoy \cite{Chuzhoy11} gives an algorithm that outputs a solution of size
$O({\crossingnumber(G)}^5 \poly(\dmax \cdot \log n))$; we observe that
the $\crossingnumber(G)$ could be $\Omega(\sqrt{n})$ even though there
may be a single edge $e$ such that $G-e$ is planar. We obtain the
first non-trivial approximation algorithm for this problem.  Our
result is summarized in the following Theorem.

\begin{theorem}[Approximating the minimum planar edge/vertex deletion]
  There exists a polynomial-time algorithm which given a graph $G$ of
  maximum degree $\dmax$, and an integer $k>0$, either correctly
  decides that $\edgeplanarization(G)>k$, or outputs a set $Y\subseteq
  E(G)$, with $|Y| = O(\dmax^5 k^{15} \log^{19/2} n)$, such that
  $G\setminus Y$ is planar. Similarly, there is a polynomial-time
  algorithm that either correctly decides that $\vertexplanarization(G) > k$
  or outputs a set $X \subset V$ with $|X| = O(\dmax^4 k^{15} \log^{19/2} n)$
  such that $G\setminus X$ is planar.
\end{theorem}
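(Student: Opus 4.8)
The plan is to reduce both problems to the Euler genus approximation of Theorem~\ref{thm:main}. First, an upper bound on the planarization number forces an upper bound on the Euler genus: if $\edgeplanarization(G)\le k$, fix a set $Y^*$ of $k$ edges with $G\setminus Y^*$ planar, draw $G\setminus Y^*$ in the sphere, and reinsert each edge of $Y^*$ through its own handle, so that $\eg(G)=O(k)$; similarly $\vertexplanarization(G)\le k$ gives $\edgeplanarization(G)\le \dmax k$ and hence $\eg(G)=O(\dmax k)$. I would therefore run the algorithm of Theorem~\ref{thm:main} with parameter $g=\Theta(k)$ (resp.\ $\Theta(\dmax k)$): if it reports that the Euler genus of $G$ exceeds $g$, then by the contrapositive of the above we may correctly output ``$\edgeplanarization(G)>k$'' (resp.\ ``$\vertexplanarization(G)>k$''); otherwise we obtain a drawing $\psi$ of $G$ into a surface $\Sigma$ of Euler genus $g'=O(\dmax^2 g^{12}\log^{19/2}n)=\poly(\dmax,k)\cdot\log^{19/2}n$.

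The heart of the argument is to convert $\psi$ into a small planarizing set. I would simplify the surface iteratively: while the current embedding of $G$ has positive Euler genus, compute in polynomial time a shortest non-separating non-contractible cycle $C$ (via the shortest-non-contractible-cycle machinery of Thomassen and Cabello--Mohar), delete $E(C)$, and update the embedding; since deleting the edges of a non-separating non-contractible cycle strictly lowers the Euler genus, after at most $g'$ rounds we are left with a planar subgraph $G\setminus Y$, where $Y$ is the union of the deleted cycles. Bounding $|Y|$ is the delicate point: the genus of $\psi$ alone does not suffice --- a toroidal grid has Euler genus $2$ but planarization number $\Theta(\sqrt n)$ --- so here I would use the promise $\edgeplanarization(G)\le k$ together with bounded degree $\dmax$: working with the unknown optimum $Y^*$ and a BFS-layering / separator argument inside each genus-reducing step, one argues that at every step a non-contractible cycle of length $\poly(\dmax,k)$ is available, giving $|Y|=O(g'\cdot\poly(\dmax,k))=\poly(\dmax,k)\cdot\log^{19/2}n$. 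I would add a safeguard: if the extracted $Y$ exceeds the target size --- which, by the previous sentence, can only happen when the promise is false --- output ``$\edgeplanarization(G)>k$''.

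For the vertex version one runs the analogous procedure on the low-genus drawing obtained with parameter $\Theta(\dmax k)$ and then passes from edges to vertices: take the planarizing edge set $Y$ just produced and let $X$ consist of one endpoint of each edge of $Y$; then every edge of $Y$ has an endpoint in $X$, so $G\setminus X$ is a subgraph of $G\setminus Y$ and hence planar, with $|X|\le|Y|$. A more careful analysis tailored to $\vertexplanarization$, exploiting that deleting a vertex removes up to $\dmax$ incident edges at once, is what yields the slightly smaller exponent of $\dmax$ in the stated bound. All steps run in polynomial time: Theorem~\ref{thm:main} is polynomial, and each of the at most $g'$ simplification rounds is a single shortest-cycle computation.

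I expect the main obstacle to be precisely the size estimate of the second paragraph --- controlling the lengths of the non-contractible cycles removed during the iterative simplification. This genuinely requires using the low-genus drawing and the a priori bound on the planarization number together with bounded degree; neither the drawing nor the promise alone is enough (as the toroidal grid shows for the drawing, and as the $\log n$ factor in the theorem shows for the promise). The exact powers of $\dmax$ and $k$ in the final bound will follow from the precise form of this length estimate and from the choice of parameter fed into Theorem~\ref{thm:main}.
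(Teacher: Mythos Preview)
Your high-level plan is right --- use the genus approximation to get a low-genus drawing, then iteratively cut along short non-contractible curves --- but the key step you flag as ``the delicate point'' is genuinely missing, and the paper's fix is different from what you sketch.

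First, the paper does not stay with Euler genus: it passes through the \emph{orientable} genus approximation (Theorem~\ref{thm:intro_orientable_genus}), because the lemma it needs is stated for orientable surfaces. Second, it cuts along \emph{nooses} (closed curves meeting the drawing only at vertices) and removes the intersected \emph{vertices}, rather than deleting the edges of a graph cycle. Consequently it proves the vertex version first and derives the edge version by the trivial factor-$\dmax$ relation --- the opposite order from yours.

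The crucial ingredient you are missing is the length bound. The paper does not use any BFS-layering or separator argument. Instead it uses two facts: (i) removing $|X|$ vertices from an embedded graph can lower the representativity by at most $|X|$ (an elementary rerouting argument), and (ii) Mohar's theorem that if an embedding into an orientable surface has representativity greater than $2\genus(G)+2$, then that surface realises $\genus(G)$. Put together: if the current representativity exceeds $(2\dmax+1)k+2$, then after deleting any $k$ vertices the representativity still exceeds $2\dmax k+2\ge 2\genus(G')+2$, so the remaining graph cannot be planar --- a certificate that $\vertexplanarization(G)>k$. Otherwise the shortest noose has length $O(\dmax k)$, and you cut. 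This is the precise mechanism that converts the promise $\vertexplanarization(G)\le k$ into a bound on each cut; your ``BFS-layering'' suggestion does not obviously yield this, and the toroidal-grid example you cite already shows that something using the promise is essential.

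So the approach is essentially the same as the paper's, but you should (a) route through orientable genus, (b) work with nooses and vertices, and (c) replace the unspecified separator argument by the representativity-drop lemma plus Mohar's genus-minimality theorem.
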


\begin{remark}
  Our approach via genus gives algorithms with ratios
  $O(\dmax^{O(1)}\OPT^{O(1)})$ for crossing number and planar
  edge/vertex deletion. It is useful to note that, unlike
  for genus, crossing number and planar edge/vertex deletion do not
  yet have fixed-parameter-tractable algorithms that have a
  singly-exponential dependence on $\OPT$.
\end{remark}

\paragraph{Further algorithmic applications.}
Our approximation algorithm for Euler genus has further consequences
in the design of algorithms for problems on graphs of small
genus. Most algorithms that take advantage of the fact that a graph
can be drawn on a surface of small genus require a drawing of the
input graph be given as part of the input. If the genus $g = O(\log
n)$ then one can use existing exact algorithms that run in $2^{O(g)}
\text{poly}(n)$ time to obtain a drawing. Our result implies that we
can obtain a drawing even when $g = \Omega(\log n)$, that while not
being optimal, nevertheless yields interesting results. A concrete
example of this application is the following.  Recently, Erickson and
Sidiropoulos \cite{erickson_sid_atsp} have obtained a $O(\log g / \log
\log g)$-approximation for Asymmetric TSP on graph of Euler genus $g$,
when a drawing of the graph is given as part of the input; this
improves the bounds of Oveis-Gharan and Saberi
\cite{DBLP:conf/soda/GharanS11} who gave an $O(\sqrt{g} \log g)$-approximation 
and also required the drawing as an input.
Our result implies the following corollary: There exists a
polynomial-time $O(\log g / \log \log g)$-approximation for ATSP on
bounded-degree graphs of genus $g$, even when a drawing of the graph
is not given as part of the input\footnote{More precisely, there
  exists a polynomial-time algorithm which given a bounded-degree
  graph $G$ (the instance of ATSP), and an integer $g$, either
  correctly decides that $\eg(G)>g$, or it outputs a $O(\log g / \log
  \log g)$-approximate TSP tour in $G$.}.


\medskip
\noindent
The proof of Theorem~\ref{thm:main} is somewhat technical and uses
several ingredients. To aid the reader we first give an overview of 
the algorithmic ideas and highlight the ingredients that are needed. We
assume that the reader is familiar with the notion of the treewidth
of a graph. Section~\ref{sec:reductions} highlights the high-level idea
that allows us to leverage an algorithm for Euler genus for the other
problems considered in the paper.

\subsection{Overview of the algorithm for Euler genus}
\label{sec:overview}
It is convenient to work with a promise version of the problem
where we assume that $\eg(G)$ is at most a given number $g$.
This allows us to assume certain properties that $G$ needs to satisfy.
Our algorithm may find that $G$ does not satisfy such a property in which
case it obtains a certificate that $\eg(G) > g$.

\myparagraph{An idea from exact algorithms.}
Our algorithm is inspired by fixed-parameter algorithms that run in
polynomial time for any fixed genus
\cite{RobertsonS90b,Mohar99,KawarabayashiMR08}.  It is instructive to
briefly describe how such algorithms work.  Let $G$ denote the input
graph, and suppose we want to find a drawing into a surface of Euler
genus $g$, if one exists.  If $G$ happens to have bounded treewidth,
say $f(g)$ for some function $f$, then one can compute its Euler genus
exactly via a dynamic program, in time roughly $2^{O(f(g))} n^{O(1)}$.
If on the other hand $G$ has treewidth larger than $f(g)$, by choosing
$f$ to be sufficiently large, a theorem of Robertson, and Seymour
\cite{DBLP:journals/jct/RobertsonS86,DBLP:journals/jct/RobertsonS03a}, asserts that $G$ contains a \emph{flat} $((2g+1)\times
(2g+1))$-grid minor $H$.  Here, being flat means that the graph $H$
admits a planar drawing, such that all edges leaving $H$ are incident
to the outer face.  The central vertex $v$ of such a grid can be shown
to be \emph{irrelevant}, i.e.~such that $G$ admits a drawing into a
surface of Euler genus $g$, if and only if $G - v$ does.  Therefore,
we can simply remove $v$ from $G$, and recurse on the remaining graph.
We continue removing irrelevant vertices in this fashion, until the
treewidth becomes at most $f(g)$.  We call the resulting low-treewidth
graph a \emph{skeleton} of $G$ (see Figure \ref{fig:exact_skeleton}).
After drawing the skeleton, we can extend the drawing to all the
removed irrelevant vertices.

\begin{figure}
\begin{center}
\iffull
\scalebox{0.9}{\includegraphics{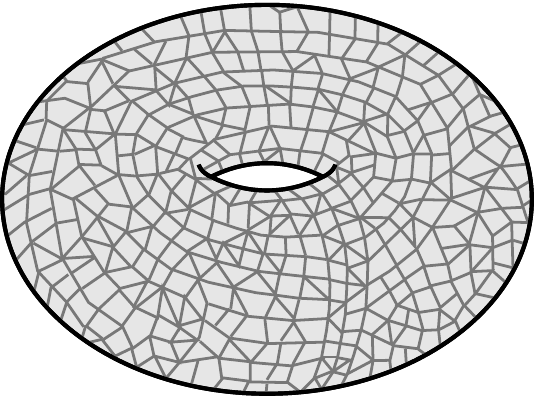}}
\hspace{1cm}
\scalebox{0.9}{\includegraphics{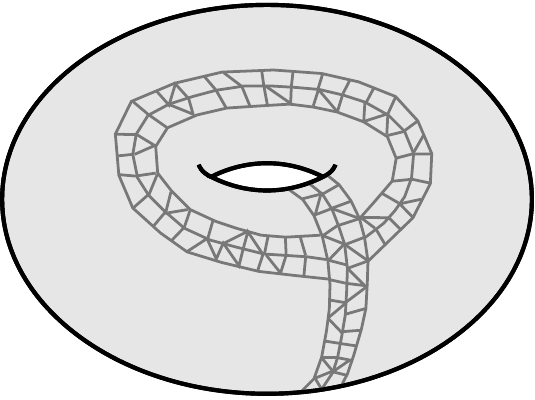}}
\fi
\ifabstract
\scalebox{0.6}{\includegraphics{figs/torus1}}
\hspace{1cm}
\scalebox{0.6}{\includegraphics{figs/torus2}}
\fi
\caption{A high-treewidth graph drawn into the torus (left), and the low-treewidth skeleton obtained after removing irrelevant vertices (right).\label{fig:exact_skeleton}}
\end{center}
\end{figure}

\myparagraph{Challenges when $g$ is not  a fixed constant.}
Our algorithm is based on modifying the above approach, so that it
works in the approximate setting when $g$ is part of the input.  We
now briefly describe the main challenges towards this goal.  Let us
begin with considering the case of a bounded-degree graph $G$ of small
treewidth, say at most $g^{O(1)}$, where $g$ is the Euler genus of
$G$.  By repeatedly cutting along balanced separators, we can compute
in polynomial time a set of at most $\dmax^{O(1)}g^{O(1)} \log^{O(1)} n$ edges
$E^*\subset E(G)$, such that $G\setminus E^*$ is planar.  By
introducing one handle for every edge in $E^*$, we get a drawing of
$G$, into a surface of Euler genus (in fact orientable genus)
$\dmax^{O(1)}g^{O(1)} \log^{O(1)} n$ .

Let us now consider the general case when treewidth of the graph $G$
is larger than $g^{c}$ for some sufficiently large constant $c$. Let
us assume for now that we can again find an irrelevant vertex in $G$.
It might seem at first that we are done, by proceeding as in the exact
case and recursing on the reduced instance.  However, this is the
critical point where things break down in the approximate setting. 
Suppose that we remove a set
$U\subset V(G)$ of irrelevant vertices, such that the skeleton
$G\setminus U$ has treewidth $g^{O(1)}$. We know that the skeleton
can be embedded with genus $g$ iff $G$ can. However, we only have
an approximate algorithm for handling a low-treewidth graph.
Using such an algorithm, we can compute a drawing $\phi$
of $G\setminus U$ into a surface of Euler genus $\dmax^{O(1)} g^{O(1)} \log^{O(1)}
n$.  Unfortunately, now we are stuck!  Since the drawing $\phi$ is not
into a surface of Euler genus $g$, there might be no way of
extending $\phi$ to $U$.

\begin{figure}
\begin{center}
\iffull
  \subfigure[A collection of patches (in bold).]{
    \scalebox{0.9}{\includegraphics{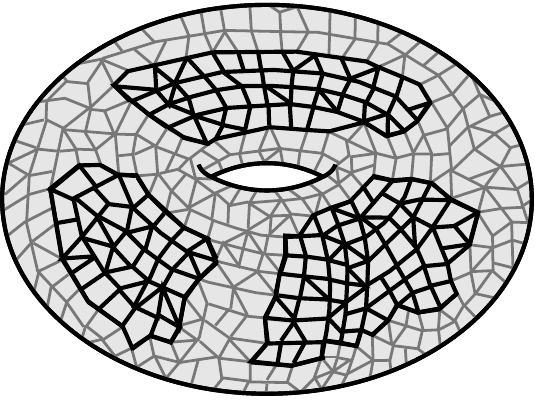}}
    \label{fig:intro1}
  }
  \subfigure[The skeleton obtained after removing the interiors of all patches.]{
    \scalebox{0.9}{\includegraphics{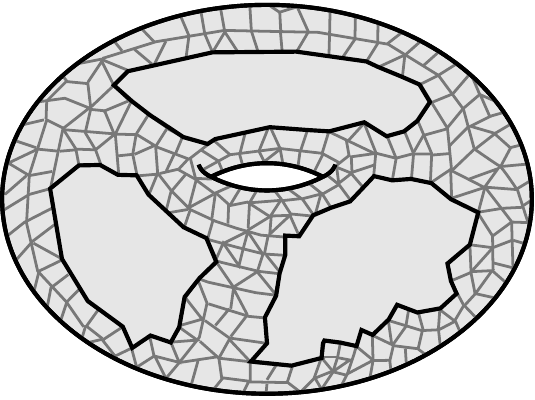}}
    \label{fig:intro2}
  }
  \subfigure[The graph $G''$ obtained after attaching a width-3 cylinder along every boundary cycle.]{
    \scalebox{0.9}{\includegraphics{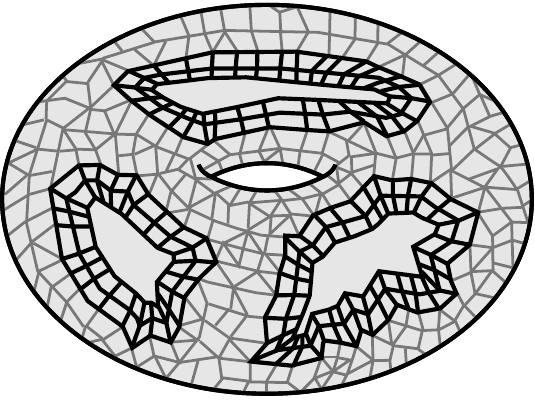}}
    \label{fig:intro3}
  }
\fi
\ifabstract
  \subfigure[A collection of patches (in bold).]{
    \scalebox{0.6}{\includegraphics{figs/intro1}}
    \label{fig:intro1}
  }
  \subfigure[The skeleton obtained after removing the interiors of all patches.]{
    \scalebox{0.6}{\includegraphics{figs/intro2}}
    \label{fig:intro2}
  }
  \subfigure[The graph $G''$ obtained after attaching a width-3 cylinder along every boundary cycle.]{
    \scalebox{0.6}{\includegraphics{figs/intro3}}
    \label{fig:intro3}
  }
\fi
\end{center}
\caption{Constructing a skeleton\iffull by removing patches\fi.
\label{fig:intro}}
\end{figure}

\myparagraph{Ensuring extendability.}
We overcome the above issue by carefully computing irrelevant parts,
that have some extra structure.  This structure guarantees that the
resulting approximate drawing of the skeleton can be extended to the
whole $G$, by introducing only a small number of additional handles.
To that end, we define a structure that we call a \emph{patch}.  A
patch is simply a subgraph $X\subset G$, together with a cycle $C$,
which we can think of as its ``boundary''.  We also think of
$X\setminus C$ as the ``interior'' of the patch.  Our goal is to
compute patches $X_1,\ldots,X_k$, satisfying the following two
conditions:
\begin{description}
\item{(C1)}
After removing the interiors of all patches, the resulting skeleton has treewidth at most $g^{O(1)}$.
\item{(C2)} There exists a drawing $\phi_{\OPT}$ of $G$ into a surface
  ${\cal S}$ of genus $\eg(G)$, such that every patch $X_i$ is drawn
  inside a disk ${\cal D}_i$, with its boundary being mapped to the
  boundary of ${\cal D}_i$.  Moreover, the disks ${\cal D}_i$ have
  pairwise disjoint interiors, and there is no part of $G$ drawn
  inside each ${\cal D}_i$, other than $X_i$ (see Figure
  \ref{fig:intro1}).  We remark that we do not explicitly know
  $\phi_{OPT}$, but we can nevertheless guarantee its existence.
\end{description}

Let us suppose for now that we can compute such a skeleton, with a
corresponding collection of patches $X_1,\ldots,X_k$.  Let $C_i$ be
the boundary cycle of each $X_i$.  Let $G'$ be the skeleton $G
\setminus \bigcup_{i=1}^k (X_i \setminus C_i)$ (see Figure
\ref{fig:intro2}).  Let us now revisit the algorithm for low-treewidth
graphs: We repeatedly remove balanced separators of size $g^{O(1)}
\log^{O(1)}n$, until all connected components become planar.  After
removing a set $E'$ of at most $g^{O(1)} \log^{O(1)}n$ edges, we end up
with a planar graph $H' = G' \setminus E'$.  Fix a planar drawing
$\phi'$ of $H'$.  We would like to extend $\phi'$ to a low-genus
drawing of the whole $G$.  To that end, ideally, we would like every
cycle $C_i$ to bound a face in $\phi'$.  There are two things that can
go wrong:
\begin{description}
\item{(P1)}
A cycle might be broken into several different paths.
\item{(P2)}
A maximal segment $P$ of a cycle $C_i$ in $H'$ might not be ``one-sided''.
That is, there might be no face of $\phi'$ containing $P$ as a subpath.
\end{description}

Problem (P1) can be easily addressed: If a cycle gets broken into $t$
pieces, then we can ``fix'' this by adding at most $t$ extra handles.
Since we remove only a small number of edges, and every edge can break
at most two cycles, it follows that we only need to add a small number
of new handles because of (P1).

Overcoming problem (P2) is somewhat more difficult: Intuitively, while
computing the drawing of the skeleton $G'$, we modify $G'$ by
attaching a cylinder of width 3 on each cycle $C_i$ (see Figure
\ref{fig:intro3}). This ensures that in the resulting planar drawing
of $H'$, each segment of every cycle is one-sided. In reality, things
are more complicated, but this is the high-level idea.  After
computing a planar drawing $\phi'$ as above, where every segment of a
cycle is one-sided, we can extend $\phi'$ to a low-genus drawing of
$G$.

\myparagraph{Computing the skeleton.}
The missing ingredient is an algorithm to compute the skeleton
satisfying the conditions described above.  The challenging part is to
satisfy condition (C2).  One difficulty is that we can only compute
patches iteratively.  Hence, if we compute the patches naively, it is
possible that a patch can ``interfere'' with previous patches.  We
avoid this by showing that every new patch, either is
interior-disjoint from all previous ones, or it contains some of them
in its interior.  In the latter case, we can simply merge all internal
patches with the current one.  This is the technical part of the
paper.  Our proof uses several tools from the theory of graph minors,
and topological graph theory, such as the grid-minor/treewidth
duality, Whitney flips, and results on the so-called
\emph{planarly-nested sequences} \cite{Mohar_local_planarity}.

\subsection{Orientable genus, Crossing number and 
Planar edge/vertex deletion}
\label{sec:reductions}
Our algorithms for $\genus(G)$, $\crossingnumber(G)$,
$\edgeplanarization(G)$ and $\vertexplanarization(G)$ rely on the
algorithm for $\eg(G)$. Interestingly having a drawing (even if it is
into a non-orientable surface) helps via the following conceptually
simple methodology. First we consider the problem of computing
$\genus(G)$.  Suppose we have a drawing $\phi$ of $G$ into a surface
$\calS$ whose Euler genus is $g^{O(1)}\log^{O(1)} n$ where $g$ is
$\genus(G)$. Note that $\eg(G) \le \genus(G)$ and hence $\eg(G)$
provides a lower bound for $\genus(G)$. We can efficiently check if
$\calS$ is orientable or non-orientable.  If $\calS$ is orientable
then we are done. Suppose not. Then we compute $\rho$, the {\em
  representativity} (equivalently facewidth) of the drawing $\phi$
which captures how ``densely'' $G$ is embedded in the surface --- see
Section~\ref{sec:orientable_genus} for a formal definition. If $\rho$
is ``small'' relative to $g$ we can cut a small number of edges along
a non-contractible cycle and reduce the genus of the surface.
We repeat this process until we obtain a drawing into a surface ${\cal
  S}'$, such that either ${\cal S}'$ is orientable, or ${\cal S}'$ is
nonorientable, and the representativity is ``large''.  If $\calS'$ is
orientable then we can add handles for all the edges cut along the way
and obtain a drawing of the original graph into an orientable
surface. The interesting case is when $\calS'$ is non-orientable and
has high representativity. In this case we can show via results in
\cite{DBLP:journals/jct/BrunetMR96} that $G$ has a large M\"{o}bius
grid minor that certifies that $\genus(G) > g$. 

A similar approach works for $\crossingnumber(G)$ and
$\edgeplanarization(G)$. It is an easy observation that for each of
these problems we have $\OPT \ge \genus(G)$ where $\OPT$ is the
optimum value for the corresponding problem.  Thus we can use our
algorithm for $\genus(G)$ to first obtain an embedding into an
orientable surface of genus comparable to $\OPT$. We once again use
the idea of representativity. Either we can iteratively keep cutting
along short non-contractible cycles to reduce the genus by at least
one in each step and obtain a planar graph, or we get stuck with an
embedding on a non-planar surface with large representativity. In the
latter case we find a certificate that $\OPT$ is large. In the former
case we need to handle the small number of edges removed to obtain the
planar graph. There is nothing to do for planar edge deletion since
they are part of the output. For crossing number we can add these
edges to the planar graph without incurring too many crossings via the
results in \cite{ChuzhoyMS11,ChimaniH11}.

\smallskip
\noindent {\em Discussion:} 
One could argue that the main reason for the difficulty in
approximating graph drawing problems is to get a suitable lower bound
on the optimum value. Previous algorithms were based on divide and
conquer based approach \cite{BhattL84,LeightonR99,ChenKK97}. However,
this approach incurs an additive term that depends on the size of a
graph and therefore one only obtains a polynomial-factor
approximation. On the other hand the problems are fixed parameter
tractable so when $\OPT$ is quite small, one can obtain an exact
algorithm.  Chuzhoy's algorithm for crossing number, and our results,
address the intermediate regime when $\OPT$ is not too small but is
not so large that an additive term that depends on $n$ can be ignored.
Chuzhoy's algorithm and analysis are technically very involved but in
essense her algorithm finds large rigid substructures in the given
graph (via well-linked sets and grid minors) that have to be
necessarily planar in any drawing with crossing number at most
$\crossingnumber(G)$.  Our algorithms for crossing number and planar
edge/vertex deletion, are indirect in that they are based on algoritms
for genus. Consequently, the bounds we obtain are quantitatively somewhat weaker than those of Chuzhoy for crossing number. However, our
algorithm offers a different perspective and approach which we
believe is more transparent and easier to understand. We hope this
will lead to a better understanding of the problem complexity and to
improved algorithms.

\iffull
\subsection{Organization}
The rest of the paper is organized as follows.  In Section
\ref{sec:notation} we introduce some basic definitions.  In Section
\ref{sec:normalization} we give a procedure for simplifying the input
graph. In Section \ref{sec:algorithm} we present our algorithm for
approximating the Euler genus, assuming an algorithm for computing the
skeleton.  In Section \ref{sec:skeleton} we give the algorithm for
computing the skeleton, assuming an algorithm for computing patches.
The computation of patches uses as a subroutine an algorithm for
computing flat grid minors, which is described in Section
\ref{sec:flat_grid_minors}.  The actual algorithm for computing
patches is given in Section \ref{sec:universal_patch}.  Finally, our
approximation algorithms for orientable genus, crossing number, and
planar edge/vertex deletion are given is Sections
\ref{sec:orientable_genus}, \ref{sec:crossing_number} and \ref{sec:planarization}
respectively.

\fi \ifabstract \medskip
\noindent
{\em Organization:} 
Due to space constraints several proofs and our algorithms for
orientable genus and crossing number are omitted.  We refer the reader
to the full version that has been made available along with this version.
Section \ref{sec:notation} has some basics and a
procedure to simplify the input graph. Section \ref{sec:algorithm}
contains the formal description of our algorithm for Euler genus;
it assumes an algorithm for computing the skeleton which is given in
Section \ref{sec:skeleton}. The skeleton computation is based on
computing patches which is described in Section \ref{sec:universal_patch}. 
Flat grid minors are used as subroutine in patch computation and
a brief outline is given in Section \ref{sec:flat_grid_minors}. 
\fi
\section{Preliminaries}\label{sec:notation}


For an orientable surface ${\cal S}$, let $\genus({\cal S})$ denote
its orientable genus.  Similarly, for a graph $G$, let $\genus(G)$
denote its orientable genus. For a graph $G$, and for $X, Y\subseteq V(G)$, 
we use $E(X,Y)$ to denote the set of edges with one end point in $X$
and the other in $Y$. For $X \subseteq V(G)$ we use $N_G(X)$ to denote
the neighbors of $X$, namely the set of vertices in $V(G) \setminus X$
that have an edge to some vertex in $X$.

A graph $H$ is a minor of a graph $G$ if it is obtained from $G$ by
a sequence of edge deletions, edge contractions, and deletions of isolated vertices.

\begin{definition}[Minor mapping]
  Let $G$ be a graph, and let $H$ be a minor of $G$.  Then there
  exists a function $\sigma : V(H) \to 2^{V(G)}$, satisfying the
  following conditions:
\begin{description}
\item{(1)} For every $v\in V(H)$, $\sigma(v)$ induces a connected
  subgraph in $G$.
\item{(2)} For any $u\neq v \in V(H)$, we have $\sigma(u) \cap
  \sigma(v) = \emptyset$.
\item{(3)} For any $\{u,v\} \in E(H)$, there exist $u' \in \sigma(u)$,
  and $v' \in \sigma(v)$, such that $\{u', v'\} \in E(G)$.
\end{description}
We refer to $\sigma$ as a \emph{minor mapping} (for $H$).  For a set
$U\subset V(H)$, we will use the notation $\sigma(U) = \bigcup_{v \in
  U} \{\sigma(v)\}$.
\end{definition}

\begin{definition}[Grids and cylinders]
  Let $r\geq 1$, $k\geq 3$.  We define the \emph{$(r\times
    k)$-cylinder} to be the Cartesian product of the $r$-path $P$,
  with the $k$-cycle $C$.  We fix an endpoint $v$ of $P$, and let $u$
  be the other endpoint.  We refer to the copy of the $k$-cycle
  $\{v\}\times C$, as the \emph{top}, and to $\{u\}\times C$, as the
  \emph{bottom} (of the cylinder).

  Similarly, for $s\geq 1$, $t\geq 1$, the \emph{$(s \times t)$-grid}
  is the Cartesian product of the $s$-path $P$, with the $t$-path $Q$.
  We fix an endpoint $v$ of $P$, and let $u$ be the other endpoint.
  We refer to the copy of the $t$-path $\{v\}\times Q$, as the
  \emph{top}, and to $\{u\}\times Q$, as the \emph{bottom} (of the
  grid).
\iffull  See Figure \ref{fig:grid_cylinder}. \fi
\end{definition}

\iffull
\begin{figure}
\begin{center}
\iffull
\scalebox{0.8}{\includegraphics{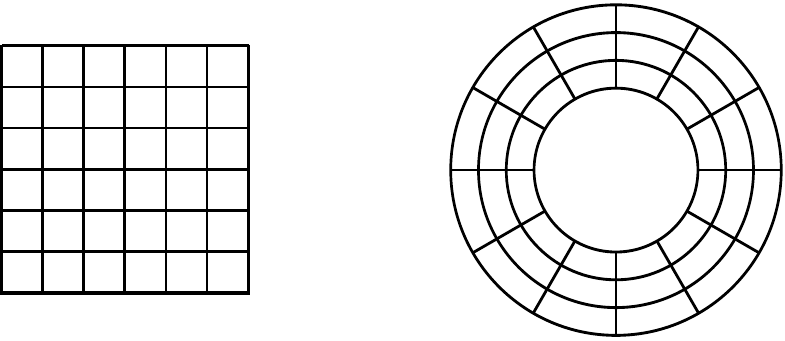}}
\fi
\ifabstract
\scalebox{0.7}{\includegraphics{figs/grid_cylinder}}
\fi
\caption{The $(7\times 7)$-grid  and the $(4\times 12)$-cylinder.\label{fig:grid_cylinder}}
\end{center}
\end{figure}
\fi

We will make use of the following result of Feige et al.~for computing
balanced vertex-separators.
\begin{theorem}[Feige et
  al.~\cite{DBLP:journals/siamcomp/FeigeHL08}]\label{thm:approx_treewidth}
  There exists a polynomial time $O(\sqrt{\log n})$-pseudo
  approximation for balanced vertex separators.  Moreover, given a
  graph $G$ of treewidth $t$, we can compute in polynomial time a tree
  decomposition of $G$ of width $O(t \sqrt{\log t})$.
\end{theorem}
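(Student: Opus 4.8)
The plan is to prove the two halves separately, with the treewidth statement built on top of the balanced-separator statement. For the balanced vertex separator I would use the semidefinite-programming route that yields an Arora--Rao--Vazirani-type $O(\sqrt{\log n})$ bound, adapted from edge cuts to vertex cuts. Phrase the problem as: partition $V(G)$ into $(A,S,B)$ with no edge between $A$ and $B$ and $\max\{|A|,|B|\}\le\tfrac23|V(G)|$, minimizing $|S|$. Relax by assigning to each vertex $v$ a unit vector $\bar v$ together with a scalar $z_v\in[0,1]$ indicating membership in $S$; impose the $\ell_2^2$ (negative-type) triangle inequalities on the $\bar v$, a spreading constraint $\frac{1}{n^2}\sum_{u,w}\|\bar u-\bar w\|^2\ge c$ forcing the embedding not to collapse (hence forcing balance), and for every edge $\{u,w\}$ the constraint $\|\bar u-\bar w\|^2\le 2(z_u+z_w)$, which encodes that an edge can be ``cut'' only if one of its endpoints pays to lie in the separator. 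The objective is $\min\sum_v z_v$, and this SDP is solvable in polynomial time and lower-bounds $\OPT$.

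For the rounding I would invoke the ARV structure theorem: any $\Omega(1)$-spread family of unit vectors satisfying the triangle inequalities contains subsets $L,R$, each of measure $\Omega(1)$, with $\|\bar u-\bar w\|^2\ge\delta$ for all $u\in L,w\in R$ and $\delta=\Omega(1/\sqrt{\log n})$. Sweeping the level sets of the function ``$\ell_2^2$-distance from $L$'' and charging, via a region-growing/averaging argument, against the fractional separator mass $\sum_v z_v$, one extracts a vertex set $S'$ that separates a constant fraction of the mass on each side with $|S'|=O(\sqrt{\log n})\cdot\OPT$; the price is that the balance degrades from $\tfrac23$ to a worse fixed constant (say $\tfrac34$), which is exactly why the guarantee is only a \emph{pseudo}-approximation. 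I expect this rounding step to be the main obstacle: adapting the ARV sweep so that it respects vertex rather than edge capacities, and controlling the number of separator vertices it produces, is the technical heart of the argument, and it is where one buys $\sqrt{\log n}$ instead of the $O(\log n)$ that a purely LP/metric-embedding argument would give.

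For the treewidth statement I would recurse using the separator algorithm. First recall the standard fact that a graph of treewidth $t$ has, for every vertex weighting, a balanced separator of size at most $t+1$ (a bag of the tree decomposition whose removal balances component weights). Hence on $G$ with $\mathrm{tw}(G)\le t$ the balanced-separator SDP has value $O(t)$, and one run of the pseudo-approximation returns a separator $S$ with $|S|=O(t\sqrt{\log n})$ and both sides of size at most $\beta|V(G)|$ for a fixed $\beta<1$. Build a tree decomposition by recursing on $G[A\cup S]$ and $G[B\cup S]$, placing $S$ in the root bag of each recursive decomposition and joining the two roots; since part sizes shrink geometrically the recursion has depth $O(\log n)$ and a routine charging shows every bag has size $O(t\sqrt{\log n})$. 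To sharpen $\sqrt{\log n}$ to $\sqrt{\log t}$ I would invoke the refinement of Feige, Hajiaghayi and Lee that the vertex-separator SDP can be rounded with loss only $O(\sqrt{\log\OPT})$ --- intuitively the structure theorem need only be applied to the $O(\OPT)$-dimensional ``footprint'' of the separator, not to all $n$ vectors --- so with $\OPT=O(t)$ the loss is $O(\sqrt{\log t})$ at every level of the recursion. This last refinement is the subtlest remaining point; the recursive assembly of the tree decomposition itself is just bookkeeping.
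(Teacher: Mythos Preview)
The paper does not prove this statement; it is quoted as a black-box result from Feige, Hajiaghayi, and Lee \cite{DBLP:journals/siamcomp/FeigeHL08} and used only as a tool (in Lemmas~\ref{lem:framed_planarization} and~\ref{lem:planarization} and in the proof of Lemma~\ref{lem:skeleton}). There is therefore no ``paper's own proof'' to compare against.

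Your sketch is a reasonable outline of the actual Feige--Hajiaghayi--Lee argument: an SDP relaxation with negative-type metric constraints and fractional separator indicators, ARV-style rounding to obtain an $O(\sqrt{\log n})$ pseudo-approximation for balanced vertex separators, and then the standard recursive scheme to assemble a tree decomposition. Two places deserve more care if you intend to flesh this out. First, in the recursion you write that ``a routine charging shows every bag has size $O(t\sqrt{\log n})$'', but naively the inherited boundary $S$ accumulates across levels; the standard fix is to use \emph{weighted} balanced separators that also balance the current boundary, so that the interface passed to each child stays $O(t\sqrt{\log n})$ rather than growing with depth. Second, the sharpening from $\sqrt{\log n}$ to $\sqrt{\log t}$ is not just ``apply ARV to an $O(\OPT)$-dimensional footprint''; FHL obtain it by a more delicate argument showing the rounding loss depends on $\log k$ where $k$ is the separator size, and this is genuinely the hardest step. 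Your plan correctly flags both of these as the nontrivial parts.
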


\iffull
\section{Graph normalization}\label{sec:normalization}
\fi
\ifabstract
\subsection{Graph normalization}\label{sec:normalization}
\fi

Before we begin with the description of our algorithm, we give a
procedure for simplifying the input graph.  Throughout the proof of
the main result, we will need to compute and maintain structures that
satisfy certain properties in \emph{any} optimal drawing.  In order to
achieve this, we need to enforce a certain type of ``local rigidity''
of drawings.  To that end, it suffices to ensure that there are no
planar components that can ``flip'' along a small vertex separator.
The following is a formal definition of precisely this situation.

\begin{definition}[Freedom]
Let $G$ be a graph, and let $H\subseteq G$ be a vertex-induced subgraph of $G$.
We say that $H$ is \emph{free} (in $G$) if it satisfies the following conditions:
\begin{description}
\item{(1)}
There exist at most two vertices in $V(H)$, called \emph{portals}, with neighbors in $V(G) \setminus V(H)$.
\item{(2)}
If $H$ has two portals $t,t'$, then $H$ is not a path between $t$ and $t'$.
\item{(3)} There exists a planar drawing of $H$ such that all portals lie in the boundary of the outer face.
\end{description}
If there exists only one portal, then we say that $H$ is a \emph{petal}, and
if there exist two portals, then we say that it is a \emph{clump}.
\end{definition}

\begin{definition}[Normalized graph]
  We say that a graph $G$ is \emph{normalized} if there exists no free
  subgraph $H\subseteq G$.
\end{definition}

The following lemma allows us to restrict our attention to normalized
graphs.  A similar statement is proven in
\cite{Mohar_local_planarity}.

\begin{lemma}
  Given a graph $G$ of maximum degree $\dmax$, we can compute in
  polynomial time a graph $G'$ of maximum degree at most $\dmax$,
  satisfying the following conditions:
\begin{description}
\item{(1)}
The graph $G'$ is normalized.
\item{(2)} Given a drawing of $G'$ into a surface ${\cal S}$, we can
  compute in polynomial time a drawing of $G$ into ${\cal S}$.
\end{description}
\end{lemma}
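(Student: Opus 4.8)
The plan is to iteratively identify and "collapse" free subgraphs until none remain. First I would observe that being free is a local, efficiently checkable condition: to find a candidate free subgraph $H$, it suffices to look for a small vertex cut $S$ (of size $1$ or $2$) such that one side of the cut, together with $S$, induces a planar graph with the cut vertices on the outer face, and (in the two-portal case) is not simply a path. I would look for such cuts by trying all single vertices and all pairs of vertices as $S$ (there are only $O(n^2)$ choices), removing $S$, and checking each resulting component $K$: the candidate is $H = G[V(K)\cup S]$, and we test planarity of $H$ (linear time, Hopcroft–Tarjan) together with the outer-face condition, which can be enforced by adding a new vertex adjacent to the portals and testing planarity of that augmented graph. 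If some free $H$ is found, we replace it by a small gadget; if none is found over all $O(n^2)$ cut choices, then $G$ is already normalized. Since each successful replacement strictly decreases $|V(G)| + |E(G)|$, the process terminates in polynomially many rounds, giving overall polynomial time.

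Next I would specify the replacement operation and argue it preserves the two required properties. If $H$ is a petal with single portal $t$, delete all of $V(H)\setminus\{t\}$ from $G$; since a planar petal can always be drawn inside a face incident to $t$, in any surface drawing of the reduced graph $G'$ we can re-insert $H$ in a small disk around (a point near) $t$, so a drawing of $G'$ into $\mathcal S$ extends in polynomial time to a drawing of $G$ into $\mathcal S$. If $H$ is a clump with portals $t, t'$, delete $V(H)\setminus\{t,t'\}$ and add a single new edge $\{t,t'\}$ (subdividing it once if the edge already exists, to keep the graph simple, or just adding a parallel edge and handling multigraphs — I would add an intermediate vertex to be safe, noting this keeps the maximum degree bounded by $\dmax$ since $t,t'$ lose at least one neighbor each in $H$, as condition (2) forbids $H$ being a single path so $H$ has at least one internal vertex or a vertex of degree $\ge 2$). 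The key topological fact is that a planar clump with both portals on the outer face can be drawn in a disk whose boundary meets $H$ exactly at $t$ and $t'$; hence in any drawing of $G'$, the new $t$–$t'$ edge traces a curve that we can thicken to a disk into which $H$ is drawn, extending the drawing to $G$ without changing the surface. We also need that the replacement does not create a larger surface requirement in the other direction — but we never need that direction, since property (2) is only about lifting a drawing of $G'$ back to $G$. I should double-check that the new gadget does not itself become (part of) a free subgraph that would be collapsed and then re-expanded in an infinite loop; this is handled by the strict decrease in $|V|+|E|$ per round, which holds because we remove at least one vertex and the gadget adds at most one vertex and two edges while $H$ had at least... — here I would be careful, and the clean fix is to use the petal/clump structure to guarantee $H$ contains at least two internal vertices or enough edges that the count drops; if not, I would argue that such tiny $H$ are not free by condition (2) or are trivial.

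The final step is to verify that the graph produced after no more free subgraphs can be found is genuinely normalized, which is immediate from the definition since "normalized" just means the absence of any free vertex-induced subgraph, and our search above is exhaustive over all relevant small separators. I expect the main obstacle to be the termination/bounded-degree bookkeeping in the clump case: one must choose the replacement gadget so that (a) the maximum degree never exceeds $\dmax$, (b) the total size strictly decreases each round, and (c) the gadget itself is not immediately re-identified as free (which would be harmless for correctness but must not cause non-termination). Getting a gadget that simultaneously satisfies all three — likely a path of length $2$ between $t$ and $t'$ through one fresh degree-$2$ vertex, justified by the non-path condition (2) ensuring the removed clump had strictly more than two internal vertices or more edges — is the delicate part; everything else is routine planarity testing and the standard "draw a planar piece inside a disk" argument. (A closely analogous normalization is carried out in \cite{Mohar_local_planarity}, which I would cite for the topological details.)
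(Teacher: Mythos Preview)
Your approach is essentially identical to the paper's: iteratively locate a free subgraph via a cut of size one or two, delete its interior, replace a clump by a gadget between the two portals, and later re-insert each deleted piece inside a disk when lifting a drawing of $G'$ back to $G$. The paper dissolves the obstacle you flag by using a \emph{single edge} $\{t,t'\}$ as the clump gadget rather than a subdivided path: then $|V|$ strictly decreases at every round (the removed component is nonempty), so termination is immediate and no ``gadget re-identified as free'' loop can occur; the degree bound survives because each portal loses at least one interior neighbor (if a portal had none, a strictly smaller cut would already separate the component, so one should always take a minimal cut), and the parallel-edge worry is handled either by working with multigraphs or by simply not adding $\{t,t'\}$ when it is already present.
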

\iffull
\begin{proof}
  If $G$ is normalized, then we can set $G'=G$.  Otherwise, we start
  by computing an integer $t\geq 0$, and a sequence of graphs
  $G=G_0,\ldots,G_t$.  The graph $G_t$ will be the desired normalized
  graph $G'$.  Suppose we have computed $G_i$.  If $G_i$ is
  normalized, then we set $t=i$.  Otherwise, we find $X_i\subseteq
  V(G)$, with $|X_i|\leq 2$, and such that some connected component
  $C_i$ of $G_i\setminus X_i$ is planar.  If $|X_i|=1$, then we set
  $G_{i+1}=G_i \setminus (V(C_i)\setminus X_i)$.  Otherwise, if
  $|X_i|=2$, we remove all vertices in $V(G_i)\setminus X_i$, and we
  add an edge $e_i$ between the two vertices in $X_i$.  Since
  $|V(G_{i+1})| < |V(G_i)|$, the above process terminates in
  polynomial time, with a normalized graph $G_t=G'$.

  It suffices to show that given a drawing $\phi_{i+1}$ of $G_{i+1}$
  into a surface ${\cal S}$, we can compute in polynomial time a
  drawing $\phi_i$ of $G_i$ into the same surface.  Let $H_i =
  G_i[X_i]$, and let $T_i = X_i \cap V(G_i)$ be the set of portals of
  $H_i$.  Since $H_i$ is free, it admits a planar drawing $\psi_i$, in
  which all its portals lie in the boundary of the outer face.
  Therefore, there exists a drawing $\psi_i'$ of $H_i$ into a disk
  ${\cal D}_i$, such that $\psi_i'(H_i) \cap \partial {\cal D}_i =
  \psi_i'(T_i)$.  If $H_i$ is a petal, then $T_i = \{t\}$.  There
  exists a disk ${\cal R}_i \subset {\cal S}$, intersecting
  $\phi_{i+1}(G_{i+1})$ only on $\phi_{i+1}(t)$.  Embedding the disk
  ${\cal D}_i$ onto ${\cal R}_i$ results into the desired drawing for
  $G_i$.  Similarly, if $H_i$ is a clump, then there exists an edge
  $e_i\{t_1,t_2\}\in E(G_{i+1})$, where $t_1,t_2$ are the two portals
  of $H_i$.  There exists a disk ${\cal R}_i' \subset {\cal S}$ such
  that $\phi_{i+1}(G_{i+1}) \cap {\cal R}_i' = \phi_{i+1}(e_i)$.
  Embedding the disk ${\cal D}_i$ onto ${\cal R}_i'$ results into the
  desired drawing for $G_i$.  This concludes the proof.
\end{proof}
\fi

\section{The algorithm}\label{sec:algorithm}

We begin by formally defining the notion of a patch, which we alluded to in Section \ref{sec:overview}.

\begin{definition}[Patch]
Let $G$ be a graph.
Let $X\subseteq G$ be a subgraph, and let $C\subsetneq X$ be a cycle.
Then, we say that the ordered pair $(X,C)$ is a \emph{patch} (of $G$).
\end{definition}

Note that the above definition of a patch is completely combinatorial, i.e.~it is completely independent from drawings of the graph $G$.
We will often refer to a patch, w.r.t. a specific drawing.
This is captured in the following definition.

\begin{definition}[$\phi$-Patch]
Let $G$ be a graph, and let $(X,C)$ be a patch of $G$.
Let $\phi$ be a drawing of $G$ into a surface ${\cal S}$.
We say that $(X, C)$ is a \emph{$\phi$-patch} (of $G$), if there exists a disk ${\cal D} \subset {\cal S}$, satisfying the following conditions:
\begin{description}
\item{(1)}
$\partial {\cal D} = \phi(C)$.
\item{(2)}
$\phi(G) \cap {\cal D} = \phi(X)$.
\end{description}
\end{definition}

The following definition captures the notion of a pair of ``interfering'' patches.

\begin{definition}[Overlapping patches]
Let $G$ be a graph, and let $(X_1,C_1)$, $(X_2,C_2)$ be patches of $G$.
We say that $(X_1,C_1)$, and $(X_2,C_2)$ are \emph{overlapping} if either $(X_1\setminus C_1) \cap X_2 \neq \emptyset$, or $(X_2\setminus C_2) \cap X_1 \neq \emptyset$.
In particular, if $(X_1,C_1)$, and $(X_2,C_2)$ are non-overlapping, then this definition implies $X_1\cap X_2 = C_1 \cap C_2$.
\end{definition}

Our general goal will be to compute patches that do not interfere precisely in the above sense.
We now have all the notation in place, to state the main result for computing a skeleton of the input graph.

\begin{lemma}[Computing a skeleton]\label{lem:skeleton}
There exists a polynomial-time algorithm which given a graph $G$ of treewidth $t\geq 1$, and maximum degree $\dmax$, and an integer $g>0$, either correctly decides that $\eg(G) > g$, or outputs a collection of pairwise non-overlapping patches $(X_1,C_1),\ldots,(X_r,C_r)$ of $G$, so that the following conditions are satisfied:
\begin{description}
\item{(1)}
If $\eg(G)\leq g$, then there exists a drawing $\phi$ of $G$ into a surface of Euler genus $g$, such that for any $i \in \{1,\ldots, r\}$, $(X_i, C_i)$ is a $\phi$-patch.
We emphasize the fact that $\phi$ is not explicitly computed by the algorithm.
\item{(2)}
The graph $G\setminus \left(\bigcup_{i=1}^r (X_i \setminus C_i) \right)$ has treewidth at most $O(\dmax g^{11} \log^{8} n)$.
\end{description}
\end{lemma}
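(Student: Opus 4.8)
The plan is to build the skeleton iteratively by repeatedly extracting a single patch and "absorbing" any previously-found patches that get engulfed by it, until the residual graph has treewidth at most $O(\dmax g^{11}\log^8 n)$. Concretely, I would maintain a current graph $G_j$ (with $G_0 = G$) together with a collection $\mathcal{P}_j$ of pairwise non-overlapping patches whose interiors have been removed from $G$ to produce $G_j$, and I would guarantee the invariant that any optimal drawing $\phi$ of the original $G$ (into a surface of Euler genus $g$) makes every patch in $\mathcal{P}_j$ a $\phi$-patch. The main engine is the grid-minor/treewidth duality: if $\mathrm{tw}(G_j)$ exceeds the target threshold $\tau = c\,\dmax g^{11}\log^8 n$, then (choosing $\tau$ large enough as a polynomial in $g$ times $\dmax\log^{O(1)} n$) $G_j$ contains a large grid minor, and since $\eg(G_j) \le \eg(G) \le g$, by the Robertson–Seymour structure for graphs of bounded genus this grid minor can be taken to be \emph{flat} in a strong quantitative sense — its central portion lies in a planar piece whose only attachments to the rest of $G_j$ are through its outer boundary. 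This is exactly the "compute a flat grid minor in a high-treewidth graph of small genus" subroutine referenced in Section~\ref{sec:flat_grid_minors}, and I would invoke it as a black box here (if it fails, we have a certificate that $\eg(G) > g$).

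Given such a flat grid, I would select a patch $(X,C)$ where $C$ is a cycle of the flat grid enclosing a sufficiently large sub-grid, and $X$ is the planar subgraph drawn inside $C$. The key structural claim — and this is the technical heart — is that in \emph{any} optimal drawing $\phi_{\OPT}$ of $G$, the cycle $C$ must be contractible and must bound a disk whose interior contains exactly $X$: i.e., $(X,C)$ is a $\phi_{\OPT}$-patch. This uses flatness (the grid has high representativity locally, so a large enough nested family of its cycles cannot all be non-contractible or cross handles in a genus-$g$ surface), together with the fact that $G$ is normalized so there are no free subgraphs that could be "flipped" out of the disk bounded by $C$; normalization is what rules out the annoying case where part of $X$ could escape $\mathcal D$ via a $2$-separator. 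Having produced $(X,C)$, I would compare it against the patches already in $\mathcal{P}_j$: I claim each old patch is either interior-disjoint from $X$ or entirely contained in the interior of $X$ (the "no partial interference" dichotomy alluded to in Section~\ref{sec:overview}), again because all these cycles are simultaneously contractible in $\phi_{\OPT}$ and hence their disks are laminar. In the containment case I discard the old patch (it is subsumed), and in either case I form $G_{j+1}$ by deleting $X\setminus C$ from $G_j$, and add $(X,C)$ to the collection.

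For termination and the treewidth bound I would argue that removing the interior of a large enough flat grid strictly decreases a suitable potential — e.g. the number of vertices, or more carefully a count that ensures we remove $\mathrm{poly}(g,\log n)$ vertices each time — so after polynomially many rounds we reach a graph $G_r$ with $\mathrm{tw}(G_r) \le \tau = O(\dmax g^{11}\log^8 n)$, which is condition~(2) with the patches being $\mathcal{P}_r$. Condition~(1) follows from the maintained invariant. The running time is polynomial because each round runs the flat-grid subroutine (polynomial by Section~\ref{sec:flat_grid_minors}, using the Feige et al.\ approximate tree decomposition of Theorem~\ref{thm:approx_treewidth} to test the treewidth threshold) and does a polynomial amount of bookkeeping. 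I expect the main obstacle to be making the structural claim precise: proving that a flat grid minor forces a $\phi_{\OPT}$-patch, and that newly-computed patches interact laminarly with old ones, in the \emph{approximate} regime where $g$ is not constant — this is where tools like Whitney flips and planarly-nested sequences from \cite{Mohar_local_planarity} enter, and where the bulk of the careful work (deferred to Sections~\ref{sec:skeleton} and \ref{sec:universal_patch}) must live.
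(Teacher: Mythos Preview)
Your outline matches the paper's strategy closely: iterate, extract a patch via a flat grid / planarly-nested argument in the current residual graph, absorb any old patches it overlaps, and stop once treewidth drops below the threshold. The paper does exactly this, using Lemma~\ref{lem:computing_a_universal_patch} to produce the patch and Lemma~\ref{lem:nested_patches} to do the merging.

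There is, however, one genuine gap in your proposal. You write ``since $\eg(G_j)\le\eg(G)\le g$'' and then argue that the patch you find in $G_j$ is a $\phi_{\OPT}$-patch for an optimal drawing of the \emph{original} $G$. This step does not go through with only the inequality. The planarly-nested machinery (Lemma~\ref{lem:Mohar_planarly_nested}) certifies that a cycle bounds a disk in a drawing of $G_j$ into a surface of Euler genus \emph{exactly} $\eg(G_j)$. If $\eg(G_j)<g$, the restriction $\phi_{\OPT}|_{G_j}$ is a drawing into a surface of Euler genus $g>\eg(G_j)$, i.e.\ a non-optimal drawing of $G_j$, and the lemma says nothing about it. Equally, the ``planarly nested'' property of your sequence is established with respect to $C_i$-components in $G_j$; once you reinsert the interiors of previously removed patches to pass to $G$, those components change, so you cannot simply apply Mohar's lemma to $G$ either. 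The paper fixes this by maintaining the stronger invariant \eqref{eq:skeleton_invariant}: $\eg(G^{(\ell)})=\eg(G)=g$ at every step. This equality is exactly condition~(2) in the definition of a universal patch (Definition~\ref{defn:universal_patch}), and it is what makes the restriction of $\phi_{\OPT}$ to $G^{(\ell)}$ an optimal drawing of $G^{(\ell)}$, so that the newly computed patch is a $\phi_{\OPT}|_{G^{(\ell)}}$-patch and hence, via Lemma~\ref{lem:nested_patches}, lifts to a $\phi_{\OPT}$-patch of $G$. Genus preservation is not automatic; it is guaranteed by Lemma~\ref{lem:planarly_nested_genus} and is built into Lemma~\ref{lem:computing_a_universal_patch}.

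A second, related omission: you invoke normalization of $G$, but what you actually need is that each $G_j$ is normalized, since the flat-grid-to-universal-patch step (Lemma~\ref{lem:unique_interior} and the proof of Lemma~\ref{lem:computing_a_universal_patch}) requires the \emph{current} graph to be normalized. Removing the interior of a patch can create new free subgraphs, so this must be maintained as an invariant too; the paper handles it by having Lemma~\ref{lem:computing_a_universal_patch} output a patch whose removal leaves the graph normalized, and this requires extra work (the iterative refinement $\Psi^{(0)},\Psi^{(1)},\ldots$ of the boundary cycle in that proof).
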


Lemma \ref{lem:skeleton} is the main technical part of the paper.  In
the interest of clarity, we postpone its proof to later sections, and
we instead show now how it can be used to obtain our approximation
algorithm for Euler genus.

Before we describe the actual algorithm, we need to define a local
``framing'' operation, which we use to modify the skeleton.
Intuitively, this is needed to ensure that when computing a drawing
for the skeleton, the boundaries of the patches are drawn in a
``nearly one-sided'' fashion.  This ``near one-sidedness'' in turn
will allow us to extend the drawing of the skeleton, to a drawing of
the whole graph. Note that framing is a combinatorial operation and
does not rely on a drawing.

\begin{definition}[Graph framing]
  Let $G$ be a graph, and let ${\cal C}=C_1,\ldots,C_k \subseteq G$ be
  a collection of cycles.  Let $G'$ be the graph obtained from $G$ by
  taking for every $i\in \{1,\ldots,k\}$, a copy $K_i$ of the
  $(3\times |V(C_i)|)$-cylinder, and identifying the top of $K_i$ with
  $C_i$.  We refer to $G'$ as the \emph{${\cal C}$-framing of $G$}
  (see Figure \ref{fig:framing_G}).

  More generally, we define the framing operation for subgraphs.  Let
  $H \subseteq G$ be a subgraph of $G$.  We define a graph $H'$ as
  follows.  Consider some $C_i \in {\cal C}$.  If $C_i\subseteq H$, then
  we take a copy of the $(3\times |V(C_i)|)$-cylinder, and we identify
  its top with $C_i$.  If $C_i \not\subseteq H$, then let
  $P_1,\ldots,P_a$ be the set of maximal subpaths of $C_i$ that are
  contained in $H$.  For every such $P_j$, we take a copy of the
  $(3\times |V(P_j)|)$-grid, and we identify its top with $P_j$.  We
  repeat this process for all $C_i\in {\cal C}$, and we define $H'$ to
  be the resulting graph.  We refer to $H'$ as the \emph{${\cal
      C}$-framing of $H$} (see Figure \ref{fig:framing_H}).  The
  reader can check that the definition of the ${\cal C}$-framing of
  $H$ agrees with the one given above, when $H=G$.
\end{definition}

\begin{figure}
\begin{center}
  \subfigure[$\{C_1,C_2\}$-Framing of a graph.]{
    \scalebox{0.72}{\includegraphics{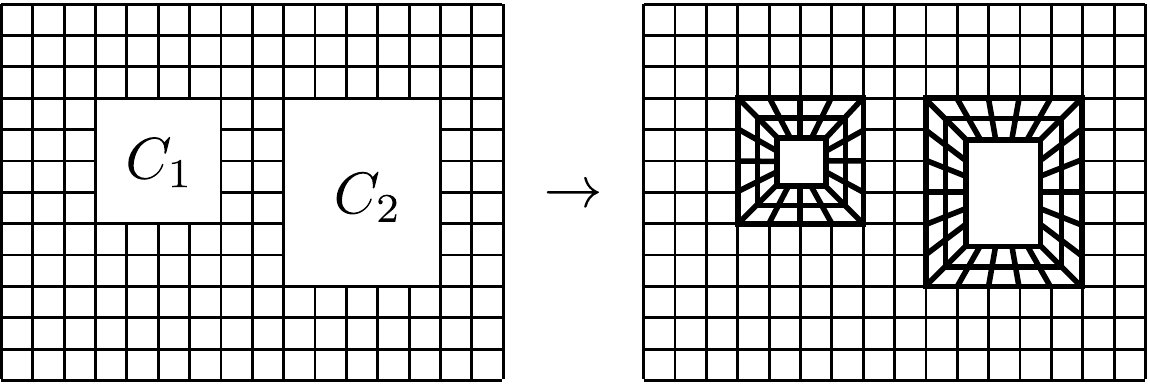}}
    \label{fig:framing_G}
  }
  \hspace{.5cm}
  \subfigure[$\{C_1,C_2\}$-Framing of a subgraph.]{
    \scalebox{0.72}{\includegraphics{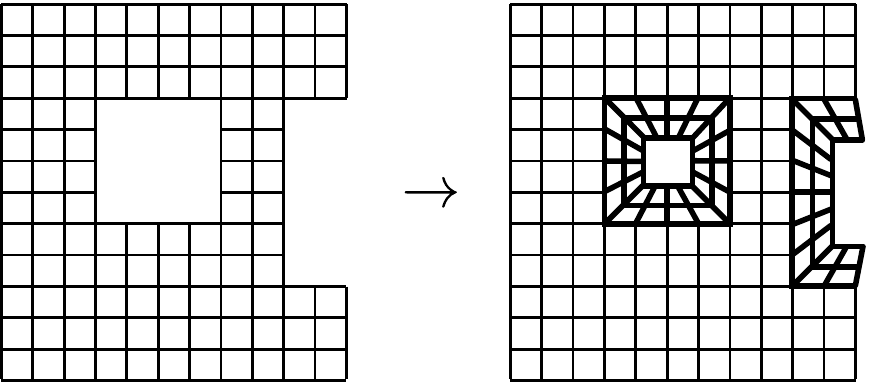}}
    \label{fig:framing_H}
  }
  \caption{Examples of graph framing.}
\end{center}
\end{figure}

We now state a basic property of framing whose proof follows directly
from the definition of a ${\cal C}$-framing.

\begin{lemma}\label{lem:framing_subgraphs}
  Let $G$ be a graph, and let ${\cal C}$ be a collection of cycles in
  $G$.  Let ${\cal H}$ be a collection of pairwise vertex-disjoint
  subgraphs of $G$.  Let $G'$ be the ${\cal C}$-framing of $G$, and
  for any $H\in {\cal H}$, let $H'$ be the ${\cal C}$-framing of $H$.
  Then, the graph $\bigcup_{H\in {\cal H}} H'$ is (isomorphic to) a
  subgraph of $G'$.
\end{lemma}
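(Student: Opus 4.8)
The plan is to exhibit an explicit injective graph homomorphism $\psi \colon \bigcup_{H \in \mathcal{H}} H' \to G'$, and then to invoke the elementary fact that an injective homomorphism identifies its domain with a subgraph of its codomain. Recall that $G' = G \cup \bigcup_i K_i$, where $K_i$ is a copy of the $(3\times |V(C_i)|)$-cylinder whose top row is identified with $C_i$ and whose remaining (middle and bottom) rows consist of fresh vertices, pairwise disjoint over distinct $i$. Likewise, each $H'$ consists of $H$ together with, for every $C_i \in \mathcal{C}$, either a copy of the $(3\times |V(C_i)|)$-cylinder glued along $C_i$ (when $C_i \subseteq H$) or a copy of the $(3\times |V(P_{i,j})|)$-grid glued along each maximal subpath $P_{i,j}$ of $C_i$ contained in $H$ (when $C_i \not\subseteq H$); here too the non-top vertices are fresh, and in $\bigcup_{H} H'$ the fresh vertices of distinct $H'$ are taken to be distinct, so the various $H'$ are glued only along their bases $H \subseteq G$.

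First I would define $\psi$ piecewise. On $\bigcup_H H \subseteq G \subseteq G'$ let $\psi$ be the identity. For a fixed $H$ and $C_i \in \mathcal{C}$: if $C_i \subseteq H$, let $\psi$ restrict to the canonical isomorphism from the cylinder of $H'$ glued on $C_i$ onto $K_i$ (identity on $C_i$, sending the two fresh rows of the former onto the two fresh rows of $K_i$ in order). If $C_i \not\subseteq H$, then for each maximal subpath $P_{i,j}$ of $C_i$ in $H$, note that $V(P_{i,j})$ spans a subpath of the $|V(C_i)|$-cycle, so the subgraph of $K_i$ induced on the columns lying over $P_{i,j}$ is a $(3\times |V(P_{i,j})|)$-grid with top $P_{i,j}$; let $\psi$ restrict to the canonical isomorphism from the grid of $H'$ on $P_{i,j}$ onto this subgraph of $K_i$.

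Next I would verify that $\psi$ is well defined and injective. Well-definedness: every vertex of $\bigcup_H H'$ either lies in $G$, in which case every applicable piecewise rule assigns it the identity, or it is a fresh vertex belonging to a unique grid/cylinder piece of a unique $H'$; hence the rules never conflict. For injectivity, the image of a fresh vertex always lies in some $V(K_i)\setminus V(G)$, while $\psi$ fixes $V(G)$, so a fresh vertex and a $G$-vertex can never collide, and two $G$-vertices collide only if equal. If $x,y$ are both fresh with $\psi(x)\in V(K_i)$ and $\psi(y)\in V(K_{i'})$, then since distinct $K$'s have disjoint fresh parts we may assume $i=i'$; the key point — and the only place the hypothesis enters — is that all pieces glued (by all members of $\mathcal{H}$) to the common cycle $C_i$ correspond to vertex-disjoint subpaths of $C_i$: if $C_i\subseteq H$ for some $H$ then pairwise vertex-disjointness forbids any other member of $\mathcal{H}$ from meeting $C_i$ at all, and otherwise each member contributes its own maximal subpaths of $C_i$, while distinct maximal subpaths of $C_i$ (from the same or from different members) are always vertex-disjoint. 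Hence the corresponding subgraphs of $K_i$ lie over disjoint column sets, so have disjoint fresh vertices, and $\psi(x)\neq\psi(y)$.

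Finally, $\psi$ is a graph homomorphism because each edge of $\bigcup_H H'$ lies either in some $H \subseteq G \subseteq G'$, where $\psi$ is the identity, or entirely in a single grid/cylinder piece, on which $\psi$ is an isomorphism onto a subgraph of some $K_i \subseteq G'$; in both cases the image edge lies in $E(G')$. An injective homomorphism between (simple) graphs is an isomorphism onto its image, and the image is by construction a subgraph of $G'$, which is exactly the claim. I do not anticipate a genuine obstacle: the only care needed is the bookkeeping in the injectivity step, i.e.\ reducing the disjointness of the images inside $K_i$ to the vertex-disjointness of the maximal subpaths of $C_i$, which is precisely where the assumption that $\mathcal{H}$ is a collection of pairwise vertex-disjoint subgraphs is used.
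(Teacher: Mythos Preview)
Your argument is correct and is precisely the natural unpacking of what the paper means by ``follows directly from the definition of a ${\cal C}$-framing'': the paper gives no further proof beyond that sentence, and your explicit injective homomorphism $\psi$ is exactly the canonical identification one has in mind. The only substantive step, as you note, is checking that the pieces glued by different members of $\mathcal{H}$ to a common cycle $C_i$ land in column-disjoint subgraphs of $K_i$, and this is exactly where pairwise vertex-disjointness of $\mathcal{H}$ is used; you handle this correctly.
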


We first argue that framing does not increase the genus of the skeleton.

\begin{lemma}[The genus of a framed skeleton]\label{lem:framing_genus}
  Let $G$ be a graph, and let $\phi$ be a drawing of $G$ into some
  surface ${\cal S}$.  Let ${\cal P}$ be a collection of pairwise
  non-overlapping $\phi$-patches of $G$.  Let $G' = G \setminus \left(
    \bigcup_{(X,C)\in {\cal P}} X\setminus C \right)$.  Let $G''$ be
  the $\{C\}_{(X,C)\in {\cal P}}$-framing of $G'$.  Then, $\eg(G'')
  \leq \eg(G)$.
\end{lemma}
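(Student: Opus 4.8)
The plan is to take the given drawing $\phi$ of $G$ in $\mathcal{S}$, delete the interiors of all patches, and then draw, inside the disk that each patch used to occupy, a \emph{planar} copy of the corresponding cylinder. Since a cylinder is planar and the relevant disks are (essentially) disjoint, this yields a drawing of $G''$ into $\mathcal{S}$, so $\eg(G'')\le\eg(\mathcal{S})$; applying this with $\phi$ a minimum‑genus drawing of $G$ gives $\eg(G'')\le\eg(G)$.

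In detail, for each $\phi$-patch $(X,C)\in\mathcal{P}$ fix a witnessing disk $\mathcal{D}_{(X,C)}\subseteq\mathcal{S}$ with $\partial\mathcal{D}_{(X,C)}=\phi(C)$ and $\phi(G)\cap\mathcal{D}_{(X,C)}=\phi(X)$. First I would observe that the disks may be taken with pairwise disjoint interiors. The boundary curves $\phi(C)$, being images of subgraphs under a crossing‑free drawing, pairwise intersect only along $\phi(C\cap C')\subseteq\phi(C)$ and never cross transversally; hence for $(X_1,C_1)\ne(X_2,C_2)$ the simple closed curves $\partial\mathcal{D}_1,\partial\mathcal{D}_2$ are either interior‑disjoint or one disk contains the other. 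The latter, say $\mathcal{D}_1\subseteq\mathcal{D}_2$, would force $X_1=\phi^{-1}(\phi(G)\cap\mathcal{D}_1)\subseteq\phi^{-1}(\phi(G)\cap\mathcal{D}_2)=X_2$, hence $X_1\setminus C_1\subseteq X_2$, contradicting that $(X_1,C_1),(X_2,C_2)$ are non‑overlapping. So the $\mathcal{D}_{(X,C)}$ have pairwise disjoint interiors.

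Next, let $\psi$ be the restriction of $\phi$ to $G'=G\setminus\bigcup_{(X,C)\in\mathcal{P}}(X\setminus C)$, a drawing of $G'$ into $\mathcal{S}$. Deleting the vertices $X\setminus C$ removes from $\mathrm{int}(\mathcal{D}_{(X,C)})$ everything of $\phi(G)$ except possibly chords of $C$, and the removal carried out for any other patch does not touch $\mathcal{D}_{(X,C)}$, since $V(X_i)\cap(V(X_j)\setminus V(C_j))=\emptyset$ for $i\ne j$ by non‑overlap; in particular $C\subseteq G'$ is drawn by $\psi$ as $\partial\mathcal{D}_{(X,C)}$ with the interior of this disk essentially vacated. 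Into each such disk I would insert a planar drawing of the $(3\times|V(C)|)$-cylinder $K_{(X,C)}$: identify its top with $\phi(C)=\partial\mathcal{D}_{(X,C)}$, and draw the two remaining concentric copies of $C$ together with the radial edges inside the disk so as to meet $\psi(G')$ only along $\phi(C)$. (If chords of $C$ are present inside $\mathcal{D}_{(X,C)}$ they subdivide it, and one routes the new vertices and edges of $K_{(X,C)}$ through the resulting faces incident to $C$, which is possible since $K_{(X,C)}$ is planar; alternatively one may assume patches carry no chords, as such chords play no role in the skeleton.) Doing this simultaneously over all patches is legitimate because the disks have disjoint interiors, and it produces a drawing of $G''=G'\cup\bigcup_{(X,C)\in\mathcal{P}}K_{(X,C)}$ into $\mathcal{S}$. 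Hence $\eg(G'')\le\eg(\mathcal{S})$, and taking $\phi$ to realize $\eg(G)$ completes the proof.

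I expect the only real subtlety to be the disjointness step — deducing from the purely combinatorial non‑overlap hypothesis that the witnessing disks can be chosen with pairwise disjoint interiors (and handling cleanly the degenerate case of a patch whose interior is empty) — together with the minor bookkeeping needed to fit a cylinder into a disk whose interior may be cut up by chords of $C$. Neither point is deep, but both require a little care.
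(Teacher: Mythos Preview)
Your approach is essentially identical to the paper's: use the witnessing disks of the $\phi$-patches, observe they have pairwise disjoint interiors, delete each $X\setminus C$ from the drawing, and draw the attached cylinder inside the vacated disk to obtain a drawing of $G''$ in $\mathcal{S}$. The paper's proof is terser---it simply asserts that non-overlap forces the disks to have disjoint interiors, where you supply the containment-versus-disjoint argument---and it, like you, passes from a drawing in $\mathcal{S}$ to the bound $\eg(G'')\le\eg(G)$ without further comment (in the paper's applications $\mathcal{S}$ always realizes $\eg(G)$, which is exactly your ``take $\phi$ minimum-genus'' remark).
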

\iffull
\begin{proof}
  For every $(X,C)\in {\cal P}$, there exists a disk ${\cal D}_{(X,C)}
  \subset {\cal S}$, with $\partial {\cal D}_{(X,C)} = \phi(C)$, and
  such that $\phi(G)\cap {\cal D}_{(X,C)} = \phi(X)$.  Moreover, since
  the patches are pairwise non-overlapping, it follows that the
  resulting disks have pairwise disjoint interiors.  For every patch
  $(X,C)\in {\cal P}$, there exists a cylinder $K_{(X,C)}\subseteq
  G''$, such that the top of $K_{(X,C)}$ has been identified with $C$.
  We remove $X\setminus C$ from the drawing $\phi$, and we draw the
  rest of the cylinder $K_{(X,C)}$ inside ${\cal D}_{(X,C)}$.
  Repeating for all patches in ${\cal P}$, we obtain a drawing of
  $G''$ into ${\cal S}$.  This shows that $\eg(G'')\leq \eg(G)$, which
  concludes the proof.
\end{proof}
\fi

\begin{lemma}[Planarizing the skeleton]\label{lem:framed_planarization}
  There exists a polynomial-time algorithm which given a graph $G$ of
  maximum degree $\dmax$, an integer $g>0$, and a collection of
  pairwise non-overlapping patches $(X_1,C_1), \ldots, (X_r,C_r)$ of
  $G$ satisfying the assertion of Lemma \ref{lem:skeleton}, it either
  correctly decides that $\eg(G)>g$, or it outputs a set $S\subseteq
  V(G)$, satisfying the following conditions:
\begin{description}
\item{(1)}
$|S| = O(\dmax g^{12} \log^{19/2} n)$.
\item{(2)} Let ${\cal C} = \{C_1,\ldots,C_r\}$, and let $G' = G
  \setminus (\bigcup_{i=1}^r (X_i \setminus C_i))$.  For every
  connected component $H$ of $G'\setminus S$, we have that the ${\cal
    C}$-framing of $H$ is planar.
\end{description}
\end{lemma}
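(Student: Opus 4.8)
The plan is to reduce the statement to an application of the balanced-separator machinery from Theorem~\ref{thm:approx_treewidth}, carried out on a suitably modified graph rather than on $G'$ directly. The key observation is that by Lemma~\ref{lem:skeleton}, condition~(2), the skeleton $G' = G \setminus (\bigcup_{i=1}^r (X_i \setminus C_i))$ has treewidth $O(\dmax g^{11}\log^8 n)$. I would first apply Lemma~\ref{lem:skeleton} again only to \emph{certify} this treewidth bound: run the approximation for treewidth; if it fails to produce a decomposition of the promised width, output that $\eg(G) > g$. Now let $G''$ be the $\mathcal{C}$-framing of $G'$. Since each cylinder $K_i$ attached along $C_i$ has a bounded number of ``layers'' ($3$ copies of $|V(C_i)|$), and since $|V(C_i)| \le |V(G')|$, the framing only adds a polynomial number of vertices and, more importantly, increases the treewidth by at most a constant factor: one can incorporate each cylinder into the tree decomposition of $G'$ bag-by-bag along the cycle $C_i$ (a cycle has treewidth $2$, and a width-$3$ cylinder over it has treewidth $O(1)$, so the standard ``gluing along a common subgraph'' argument gives $\operatorname{tw}(G'') = O(\operatorname{tw}(G') + 1)$). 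Thus $\operatorname{tw}(G'') = O(\dmax g^{11}\log^8 n)$ as well.

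Next I would invoke Theorem~\ref{thm:approx_treewidth} to get a tree decomposition of $G''$ of width $w = O(\dmax g^{11}\log^8 n \cdot \sqrt{\log(\dmax g^{11}\log^8 n)}) = O(\dmax g^{11}\log^{17/2} n)$ (absorbing the lower-order $\sqrt{\log\dmax}$ and $\sqrt{\log g}$ factors; one should be slightly careful here but the bound stays within $\dmax g^{12}\log^{19/2}n$ with room to spare). From such a low-width tree decomposition one extracts, by the standard recursive centroid-splitting argument, a hierarchical decomposition of $G''$ into pieces that become planar once we delete all the separator vertices; because each recursion step removes a single bag of size $w$ and the recursion has depth $O(\log n)$, the total number of deleted vertices would be $O(w \log n)$. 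However, I want the \emph{deleted set $S$ to be a subset of $V(G)$}, i.e.\ to lie in the original skeleton, not in the added cylinders. So I would modify the procedure: whenever a chosen separator bag contains a vertex of a cylinder $K_i \setminus C_i$, replace it by the (at most $3$) vertices of $C_i$ lying ``below'' it in the column of the cylinder, or more simply, project the separator onto $V(G')$ — deleting a column of a cylinder disconnects the same way deleting the corresponding vertex of $C_i$ does in $G'$, up to the bounded width of the cylinder. This increases $|S|$ by only a constant factor, keeping $|S| = O(\dmax g^{12}\log^{19/2}n)$, and now $S \subseteq V(G')\subseteq V(G)$.

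It remains to verify condition~(2): for every connected component $H$ of $G'\setminus S$, the $\mathcal C$-framing of $H$ is planar. Here I would use Lemma~\ref{lem:framing_subgraphs}: the components $H$ of $G'\setminus S$ are pairwise vertex-disjoint subgraphs of $G'$, so $\bigcup_H (\mathcal{C}\text{-framing of }H)$ is a subgraph of the $\mathcal{C}$-framing of $G'\setminus S$, which in turn is (isomorphic to) a subgraph of $G'' \setminus S'$ for the lifted separator set $S'$. By construction of the separator hierarchy, every connected component of $G''\setminus S'$ is planar; hence each $\mathcal{C}$-framing of $H$, being a subgraph of a planar graph, is planar. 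The degree bound is immediate since framing adds cylinders whose vertices have degree $\le 4$ and the top is identified with $C_i \subseteq G'$ whose vertices already had degree $\le \dmax$ (one should check $\dmax \ge 4$ or simply state the resulting maximum degree is $O(\dmax)$, which suffices).

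The main obstacle I anticipate is bookkeeping the genus certification and the ``lifting'' of separators from $G''$ back into $V(G)$ cleanly — in particular ensuring that deleting a projected separator in $G'$ (rather than the original separator in $G''$) still disconnects the framed components into planar pieces, and that the constant-factor blowups in treewidth (from framing), in separator size (from the $\sqrt{\log}$ in Theorem~\ref{thm:approx_treewidth}), in recursion depth ($O(\log n)$), and in the projection step all multiply out to stay within the claimed $O(\dmax g^{12}\log^{19/2}n)$ bound rather than, say, $g^{13}$ or an extra $\log$ factor. The topological content is light; the arithmetic of the exponents is where care is needed.
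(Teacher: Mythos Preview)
Your counting argument has a genuine gap. You write that ``each recursion step removes a single bag of size $w$ and the recursion has depth $O(\log n)$, the total number of deleted vertices would be $O(w\log n)$.'' But the recursion is a \emph{tree}, not a path: after one balanced split you may have several non-planar pieces, each of which spawns its own separator at the next level. The total separator mass is (number of internal nodes of the recursion tree)$\,\times\, w$, not (depth)$\,\times\, w$. Nothing in your argument bounds the branching.

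The paper supplies exactly the missing bound, via Lemma~\ref{lem:framing_genus}, which you never invoke: the framed skeleton $(G')^{\framed}$ has Euler genus at most $g$. Since the pieces at any fixed recursion level are pairwise vertex-disjoint subgraphs of $G'$, Lemma~\ref{lem:framing_subgraphs} makes their framings pairwise vertex-disjoint subgraphs of $(G')^{\framed}$; hence at most $g$ of them can have non-planar framings. So each of the $O(\log n)$ levels contributes at most $g$ separators of size $O(\dmax g^{11}\log^{17/2} n)$, giving $O(\dmax g^{12}\log^{19/2} n)$ in total. (Incidentally, this is also where the ``either correctly decides $\eg(G)>g$'' clause comes from: if more than $g$ non-planar framed pieces ever appear at one level, that certifies $\eg(G)>g$.) Your computed bound of $O(w\log n)=O(\dmax g^{11}\log^{19/2} n)$ is one factor of $g$ short of what your argument actually supports once the branching is accounted for---which is why the lemma statement has $g^{12}$.

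A secondary point: the paper avoids your whole detour through $G''$ and the separator-projection step. It runs the balanced-separator recursion directly on $G'$ (so $S\subseteq V(G')$ automatically), and at each node simply \emph{tests} whether the $\mathcal{C}$-framing of the current piece is planar to decide whether to stop. This is both simpler and sidesteps the need to argue that framing preserves treewidth up to constants and that projected separators still disconnect framed components.
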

\iffull
\begin{proof}
  By Lemma \ref{lem:skeleton}, $G'$ has treewidth $t'=O(\dmax g^{11} \log^8 n)$.
  For a subgraph $H\subseteq G'$, let $H^{\framed}$ be the ${\cal
    C}$-framing of $H$.  We proceed to compute an auxiliary tree
  ${\cal T}$, where every vertex of ${\cal T}$ is a subgraph of $G'$,
  and we consider ${\cal T}$ as being rooted at $G'$.  We construct
  ${\cal T}$ inductively as follows: For every vertex $H\subseteq G'$
  of ${\cal T}$, if $H^{\framed}$ is planar, then we set $H$ to be a
  leaf of ${\cal T}$.  Otherwise, using the algorithm from Theorem
  \ref{thm:approx_treewidth} we find in polynomial time a
  $\Theta(1)$-balanced vertex separator $S_H \subseteq V(H)$, with
\[
|S_H| = O(t' \sqrt{\log n}) = O(\dmax g^{11} \log^{17/2} n).
\]
For every connected component $H'$ of $H \setminus S_H$, we add $H'$ to
$V({\cal T})$, and we make $H'$ a child of $H$.  This concludes the
definition of ${\cal T}$.  Clearly, ${\cal T}$ can be computed in
polynomial time.

The height of ${\cal T}$ is $h=O(\log n)$ since we use $\Theta(1)$-balanced
vertex separators to split each graph at an internal node of ${\cal T}$.
For any $i\in
\{0,\ldots,h\}$, let ${\cal L}_i$ be the set of vertices of ${\cal T}$
that are at level $i$ (i.e., ${\cal L}_0 = \{G'\}$).  Note that all
elements in any ${\cal L}_i$ are pairwise vertex-disjoint subgraphs of
$G$.  It follows by Lemma \ref{lem:framing_subgraphs} that for any
$i\in \{0,\ldots,h\}$,
\[
\bigcup_{H \in {\cal L}_i} H^{\framed} \subseteq (G')^{\framed}.
\]
Note that all non-leaf subgraphs of $G'$ in ${\cal L}_i$ are
non-planar.  Therefore, by Lemma \ref{lem:framing_genus} there can be
at most $O(g)$ non-leaf subgraphs in ${\cal L}_i$.  If this is not the
case, then we can correctly decide that $\eg(G)>g$.  Otherwise, let
$S$ be the set of all vertex separators computed throughout the
construction of ${\cal T}$, i.e.~$S=\bigcup_{H}S_H$, where $H$ ranges
over all non-leaf vertices of ${\cal T}$.  We have
\begin{align*}
|S| &= \sum_H |S_H|\\
 &\leq \sum_{i=0}^h |{\cal L}_i| \cdot O(\dmax g^{11} \log^{17/2} n)\\
 &= O(\dmax g^{12} \log^{19/2} n),
\end{align*}
as required.
\end{proof}
\fi

We can now prove the main result of this paper.

\begin{proof}[\proofof Theorem \ref{thm:main}]
  By Lemma \ref{lem:skeleton}, in polynomial time, we can either
  correctly decide that $\eg(G)>g$, or we can compute a (possibly
  empty) collection ${\cal P}$ of pairwise non-overlapping patches of
  $G$, satisfying the following conditions:
\begin{itemize}
\item If $\eg(G)\leq g$, then there exists a drawing $\phi$ of $G$ into a surface ${\cal S}$
  of Euler genus $g$, such that any $(X,C)\in {\cal P}$ is a
  $\phi$-patch.
\item
The graph
$G' = G \setminus \left( \bigcup_{(X,C)\in {\cal P}} X\setminus C \right)$
has treewidth $t'=O(\dmax g^{11} \log^{8} n)$.
\end{itemize}
Let
${\cal C} = \{C : (X,C)\in {\cal P}\}$.
For any subgraph $H\subseteq G'$, let $H^{\framed}$ denote the ${\cal C}$-framing of $H$.

Let $S\subseteq V(G)$ be the set computed by Lemma
\ref{lem:framed_planarization}.  Let $G'' = G' \setminus S$, and let
${\cal H}$ be the set of connected components of $G''$.  Since for
every $H \in {\cal H}$ the graph $H^{\framed}$ is planar, it follows
that $(G'')^{\framed} = \bigcup_{H\in {\cal H}} H^{\framed}$ is also
planar.  Pick a planar drawing $\psi$ of $(G'')^{\framed}$ (which can
be computed, e.g.~by the algorithm of Hopcroft, and Tarjan
\cite{Hopcroft}).

We now proceed to obtain a drawing of $G$, by modifying the drawing
$\psi$ of $(G'')^{\framed}$.  We iterate over all patches $(X,C)\in
{\cal P}$.  Consider some $(X,C)\in {\cal P}$.  Since every $(X,C)\in
{\cal P}$ is a $\phi$-patch, it follows that the graph $X$ admits a
planar drawing $\phi_{(X,C)}$ into a disk ${\cal D}_{(X,C)}$, with
$\phi_{(X,C)}(C) = \partial {\cal D}_{(X,C)}$.  Let
\[
n_C = |E(C) \setminus E(G'')|.
\]
We consider two cases:
\begin{description}
\item{(i)} If $n_C = 0$, it follows that $C\subseteq G''$.  Therefore,
  $(G'')^{\framed}$ contains a $(3\times |V(C)|)$-cylinder $K$.  The
  top of $K$ is identified with $C$ in $(G'')^{\framed}$.  Since $K$
  is 3-vertex-connected, it admits a unique planar drawing.  Let $C'$
  be the bottom of $K$.  It follows $K$ bounds a face $F$ in $\psi$.
  We can therefore extend the current drawing to $X\setminus C$, by
  placing the open disk ${\cal D}_{(X,C)} \setminus \partial {\cal
    D}_{(X,C)}$ inside the face $F$, deleting all vertices of $K$ that
  do not belong to its top, and connecting the vertices in $X$, with
  their neighbors in the copy of $C$ in
  $G''$. 
  Notice that in this case, we do not increase the genus of the
  current surface.
\begin{center}
\scalebox{0.72}{\includegraphics{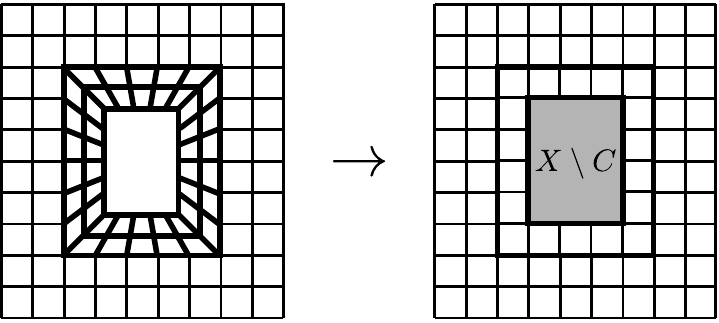}}
\end{center}

\item{(ii)} If $n_C > 0$, we proceed as follows.  First, we place the
  disk ${\cal D}_{(X,C)}$ in the unbounded face of the current
  drawing.  Let $P_1,\ldots,P_{n_C}$ be the set of maximal segments of
  $C$ contained in $G''$.  Consider a maximal segment $P_j$.  There
  exists a $(3\times |V(P_j)|)$-grid $L$ in $(G'')^{\framed}$, such
  that the top of $L$ is identified with $P_j$.  We argue that the
  bottom of $L$ is a segment of a face $F$ in $\psi$:
  If $|V(P_j)| \leq 2$, this is immediate, and if $|V(P_j)| \geq 3$,
  it follows by Whitney's theorem, since $L$ is 3-connected, and therefore has a unique planar drawing.
    If the
  orientation of $P_j$ along a clockwise traversal of $\partial {\cal
    D}_{(X,C)}$ agrees with the orientation of $P_j$ along a clockwise
  traversal of $F$, then we attach a handle between $P_j$ in $X$, and
  the bottom of $L$.  Otherwise, we attach a M\"{o}bius band.  Next,
  we delete all the vertices in $L$ that do not belong to its bottom,
  and we also delete the copy of $P_j$ in $X$ (i.e.~the copy that lies
  on the boundary of ${\cal D}_{(X,C)}$.  Finally, we draw the edges
  between $X\setminus C$, and $P_j$, by routing them along the new
  handle, or M\"{o}bius band.  We continue in the same fashion, with
  all the remaining maximal segments. 
  For every maximal segment, we increase the Euler genus of the
  underlying surface by at most $3$.  Therefore, the total increase
  in the Euler genus is at most $3 n_{C}$.
\begin{center}
\iffull
\scalebox{0.72}{\includegraphics{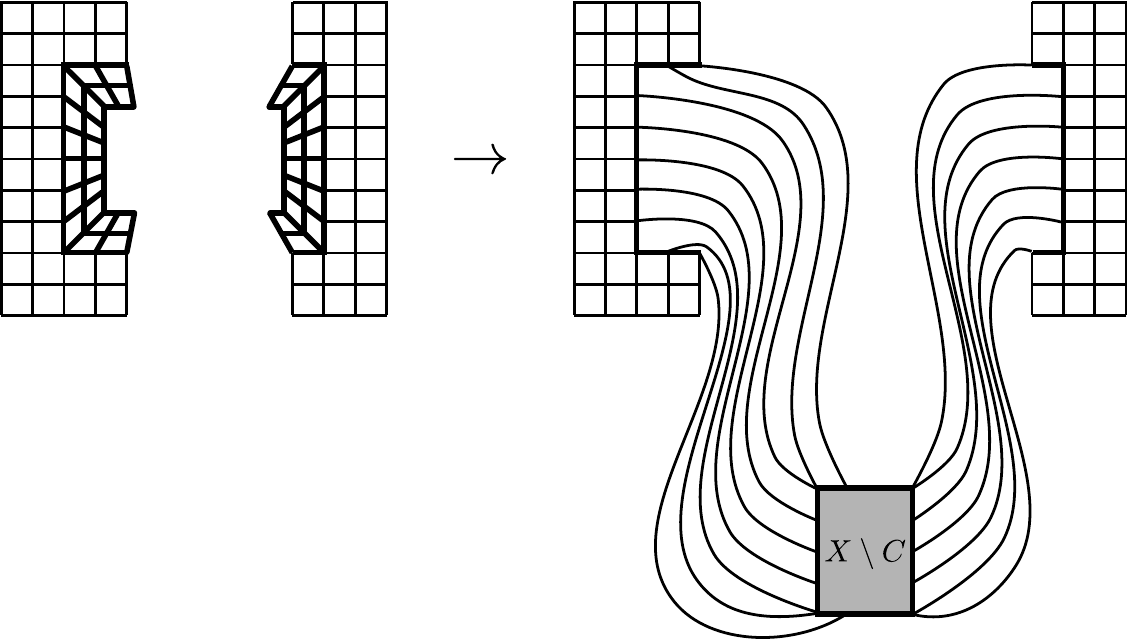}}
\fi
\ifabstract
\scalebox{0.6}{\includegraphics{figs/extension_2}}
\fi
\end{center}
\end{description}
After considering all patches in ${\cal P}$, we arrive at a drawing
into some surface.  We remove any remaining vertices from
$(G'')^{\framed} \setminus G''$.  We arrive at a drawing $\psi'$ of
the graph
\[
\Gamma = G\left[V(G'') \cup \left(\bigcup_{(X,C)\in {\cal P}} X\setminus C \right) \right]  = G \setminus S.
\]

Since the cycles $\{\phi(C)\}_{C\in {\cal C}}$ bound disks with
disjoint interiors in the drawing $\phi$, it follows that every edge of
$G'$ is contained in at most two cycles in ${\cal C}$.  Therefore,
\begin{align*}
  \sum_{C\in {\cal C}} n_C &\leq 2 \cdot |E(G')\setminus E(G'')| \leq \dmax \cdot |V(G') \setminus V(G'')|\\
  &= \dmax \cdot |S| = O(\dmax^2 g^{12} \log^{19/2}n).
\end{align*}
It follows that the resulting drawing $\psi'$ of $\Gamma$ is into a
surface of Euler genus at most $\sum_{C\in {\cal C}} 3 n_C =
O(\dmax^2 g^{12} \log^{19/2}n)$.

It remains to extend the drawing to $S$.  This can be done by adding
at most $|S|\cdot \dmax$ additional handles (one for every edge
incident to a vertex in $S$).  The resulting drawing of $G$ has Euler
genus $O(\dmax^2 g^{12} \log^{19/2}n) + O(|S|\cdot \dmax) = O(\dmax^2 g^{12} \log^{19/2}n)$, as required.  This concludes the proof.
\end{proof}

\section{Computing a skeleton}\label{sec:skeleton}

The rest of the paper is devoted to the algorithm for computing a
skeleton, i.e.~the proof of Lemma \ref{lem:skeleton}.  At the high
level, the algorithm proceeds iteratively as follows: (i) We compute a
patch. (ii) We remove its interior. (iii) We repeat until the
treewidth of the remaining graph becomes small enough.  One issue with
implementing the above approach is that we do not have access to an
explicit optimal drawing of the input graph.  Therefore, we cannot
argue that a computed patch is a $\phi$-patch for a \emph{specific}
optimal drawing.  To that end, we will need a stronger notion of a
patch.  More precisely, we need to compute patches that are
$\phi$-patches, in \emph{any} optimal drawing $\phi$.  Moreover,
because the above procedure computes patches in an ever decreasing
graph, we need to make sure that the genus never decreases.  This
property will ensure that at the end of the procedure, all computed
patches are $\phi$-patches for \emph{the same} optimal drawing $\phi$.
The following definition states precisely the properties that we need.

\begin{definition}[Universal patch]\label{defn:universal_patch}
  Let $G$ be a graph of Euler genus $g$.  Let $X \subseteq G$ be a
  subgraph of $G$, and let $C \subsetneq X$ be a cycle in $X$.  We say
  that $(X,C)$ is a \emph{universal patch} (of $G$) if it satisfies
  the following conditions:
\begin{description}
\item{(1)} For {\em any} drawing $\psi$ of $G$ into a surface of Euler
  genus $g$, we have that $(X,C)$ is a $\psi$-patch.
\item{(2)}
Let $G'=G \setminus (X \setminus C)$.
Then, $\eg(G')=\eg(G)=g$.
\end{description}
\end{definition}

The following lemma shows that we can compute a universal patch in
polynomial time, in a normalized graph of sufficiently large
treewidth.  A crucial property of the algorithm is that after removing
the interior of the computed patch, the resulting graph remains
normalized.  This fact allows us to inductively maintain a normalized
graph, while computing the skeleton.  Intuitively, dealing with a
normalized graph is essential for avoiding ``locally-pathological''
planar drawings.  Roughly speaking, a non-normalized graph can have an
optimal drawing that locally looks rather complicated.  This makes it
very difficult to control how different overlapping patches interact
with each other.

\begin{lemma}[Computing a universal patch]\label{lem:computing_a_universal_patch}
  There exists a universal constant $\alpha>0$, such that the
  following holds. Let $G$ be a normalized graph of Euler genus $g\geq
  1$, treewidth $t\geq 1$, and maximum degree $\dmax$. Suppose that $t
  \geq \alpha \dmax g^{11} \log^{15/2} n$. Then, we can compute in
  polynomial time a universal patch $(X,C)$ in $G$, such that $G
  \setminus (X\setminus C)$ is normalized.
\end{lemma}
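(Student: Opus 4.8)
The plan is to extract from $G$ a large \emph{flat} grid minor, use one of its sufficiently ``deep'' concentric cycles as the patch boundary, and then invoke the theory of planarly-nested sequences to certify that this cycle bounds a disk — with the combinatorially-defined ``interior'' inside it — in \emph{every} drawing of $G$ into a surface of Euler genus $g$ simultaneously. First I would run the algorithm of Section~\ref{sec:flat_grid_minors} (which rests on the grid-minor/treewidth duality of Robertson and Seymour~\cite{DBLP:journals/jct/RobertsonS86,DBLP:journals/jct/RobertsonS03a} and the approximate tree decompositions of Theorem~\ref{thm:approx_treewidth}); provided $\alpha$ is chosen large enough, the hypothesis $t\geq \alpha\dmax g^{11}\log^{15/2}n$ lets us compute in polynomial time either a certificate that $\eg(G)>g$, or a flat grid (wall) minor of $G$ with at least $m:=4g+3$ nested concentric cycles. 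Realizing this minor as a subdivided-grid subgraph, flatness yields a planar ``compass'' $K$ all of whose vertices with neighbours outside $K$ lie on its perimeter. Writing $C_1\supset C_2\supset\cdots\supset C_m$ for the concentric cycles (outermost first), flatness makes each $C_i$ a separator of $G$: it isolates a planar subgraph $X_i\supsetneq C_i$ (the part of $K$ it encloses, together with $C_i$) with $V(X_i)\cap V(G\setminus X_i)=V(C_i)$; moreover $X_i$ has a planar drawing with $C_i$ as its outer face, and $C_1,\dots,C_m$ form a planarly-nested sequence in the sense of~\cite{Mohar_local_planarity}.

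Set $i_0:=2g+2$, and take $C:=C_{i_0}$ and $X:=X_{i_0}$ (so $C\subsetneq X$, since $i_0<m$). To verify condition~(1) of Definition~\ref{defn:universal_patch}, I would fix an arbitrary drawing $\psi$ of $G$ into a surface of Euler genus $g$ and apply the local-planarity lemma for planarly-nested sequences~\cite{Mohar_local_planarity} to $\psi$ and $C_1,\dots,C_m$: since $C$ is surrounded by $2g+1$ nested cycles on its outer side and $2g+1$ on its inner side, all but at most $2g$ of $\psi(C_1),\dots,\psi(C_m)$ bound disks, these disks nest consistently with the combinatorial nesting, and in particular $\psi(C)$ bounds a disk ${\cal D}$ whose interior contains $\psi(C_{i_0+1})$. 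Because $C$ separates $X\setminus C$ from $G\setminus X$ and $\psi$ is an embedding (no edge meets $\partial{\cal D}=\psi(C)$ in its interior), every connected piece of $G\setminus C$ is drawn wholly inside or wholly outside ${\cal D}$; the piece containing $X\setminus C$ also contains the inner cycles and so is the inside, while the pieces comprising $G\setminus X$ reach the outer grid and so are the outside. Hence $\psi(G)\cap{\cal D}=\psi(X)$, i.e.\ $(X,C)$ is a $\psi$-patch, and as $\psi$ was arbitrary, (1) holds.

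For condition~(2), let $G':=G\setminus(X\setminus C)$; restricting an optimal drawing of $G$ gives $\eg(G')\le g$. Conversely, given any drawing $\psi'$ of $G'$ into a surface of Euler genus $g':=\eg(G')$, assume $g'<g$; the cycles $C_1,\dots,C_{i_0-1}$ (with $i_0-1=2g+1>2g'$) still form a planarly-nested sequence of $G'$ around $C$, so~\cite{Mohar_local_planarity} gives a disk ${\cal D}'$ with $\partial{\cal D}'=\psi'(C)$, and since $G'$ contains nothing enclosed by $C$, the interior of ${\cal D}'$ is free of $\psi'(G')$; inserting the planar drawing of $X$ (with outer face $C$) into ${\cal D}'$ yields a drawing of $G$ into the same surface, contradicting $\eg(G)=g$. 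Hence $\eg(G')=g$. Finally, $G'$ stays normalized: a free subgraph of $G'$ is a planar chunk attached through at most two portals, and using that $G$ is normalized together with Whitney-flip rigidity around the highly-connected cycle $C$ one checks that such a chunk would already be free in $G$; if the removal would create a small free chunk incident to $C$, one simply absorbs it into $X$. All of the above runs in polynomial time.

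The hard part will be condition~(1): proving that the purely combinatorial pair $(X,C)$ is simultaneously a $\psi$-patch for \emph{every} genus-$g$ drawing $\psi$, i.e.\ that no such drawing can route a piece of $G\setminus X$ into ${\cal D}$ or interchange the two sides of $\psi(C)$. This is precisely where the planarly-nested-sequence machinery of~\cite{Mohar_local_planarity} is indispensable — to control the at most $2g$ ``bad'' concentric cycles and to force the bounding disks to nest consistently — and where the normalization hypothesis is needed to exclude pathological local flips and to keep the removal operation inside the class of normalized graphs. Obtaining the flat grid minor in polynomial time from a treewidth lower bound alone, with no drawing of $G$ in hand, is the other substantial ingredient, and is deferred to Section~\ref{sec:flat_grid_minors}.
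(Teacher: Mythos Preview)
Your high-level strategy matches the paper's: compute a flat grid minor, extract a planarly-nested sequence of concentric cycles, and invoke Mohar's results to certify that a well-chosen cycle bounds a disk in every genus-$g$ drawing. That part is sound.

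The real gap is the normalization step, which you dismiss with ``one simply absorbs it into $X$''. This is where the paper spends most of its effort. Absorbing a free chunk into $X$ forces you to change the boundary cycle $C$ as well (otherwise $(X,C)$ is no longer a patch with $C\subsetneq X$ and $C$ separating $X\setminus C$ from the rest), and once $C$ changes it is no longer one of your concentric grid cycles, so you lose the planarly-nested guarantee that made $(X,C)$ universal in the first place. The paper resolves this with an explicit iterative procedure: starting from $\Psi^{(0)}=C_2$, at each step it locates a free subgraph $Q^{(i)}$ of the current ``outside'' graph, argues (using that $G$ itself is normalized) that $Q^{(i)}$ must be a clump with both portals on $\Psi^{(i)}$ and lying entirely inside the disk bounded by $C_3$, and then replaces $\Psi^{(i)}$ by a new cycle $\Psi^{(i+1)}$ that reroutes around $Q^{(i)}$ along one of its outer boundary walks. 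Termination is by strict growth of $X^{(i)}$. Crucially, the final cycle $C=\Psi^{(r)}$ still lies between $C_1$ and $C_3$, so $C_1,C,C_3,\ldots,C_k$ is again planarly nested and Mohar's lemma (and Lemma~\ref{lem:planarly_nested_genus} for condition~(2)) applies to $C$ even though $C$ is no longer a grid cycle.

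There is a secondary gap in your verification of condition~(1). Knowing that $\psi(C)$ bounds a disk ${\cal D}$ does not by itself determine which $C$-components lie inside ${\cal D}$; a $C$-component attached only along $C$ could in principle flip across. The paper proves a separate ``Uniqueness of the interior'' lemma using normalization plus an outer reference cycle $F=C_3$: the part of $G$ inside $\phi(C_3)$, with an apex added over $F$, is shown to be a subdivision of a $3$-connected planar graph, hence has a combinatorially unique planar drawing by Whitney's theorem, so the set of vertices inside ${\cal D}$ is the same in every genus-$g$ drawing. Your sentence ``the pieces comprising $G\setminus X$ reach the outer grid and so are the outside'' is the right intuition but is not a proof, and in particular it does not obviously survive once $C$ has been modified away from a grid cycle during the normalization repair.
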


The proof of Lemma \ref{lem:computing_a_universal_patch} is rather
long, and requires several other results, including our algorithm for computing a flat grid minor, and properties of planarly nested sequences due to Mohar \cite{Mohar_local_planarity}.
For that reason, we defer
it to subsequent sections, and show first how to use it to construct a
skeleton (i.e.~to prove Lemma \ref{lem:skeleton}).

Before we proceed with the proof of Lemma \ref{lem:skeleton}, we first
derive a property that will be used in showing the correctness of the
algorithm.  While computing a sequence of patches, it is possible that
a patch overlaps a previously computed one.  In this case, we can show
that we can essentially ``merge'' the two patches.  The following
Lemma gives the necessary properties for doing exactly that.

\begin{lemma}[Merging overlapping patches]\label{lem:nested_patches}
  Let $G$ be a graph, let $\phi_1$ be a drawing of $G$ into a surface
  ${\cal S}$, and let $(X_1,C_1)$ be a $\phi_1$-patch of $G$.  Let $G'
  = G \setminus (X_1 \setminus C_1)$.  Let $\phi_2$ be the drawing of
  $G'$ into ${\cal S}$ obtained by restricting $\phi_1$ to $G'$.  Let
  $(X_2, C_2)$ be a $\phi_2$-patch of $G'$.  Suppose further that
  $(V(X_2) \setminus V(C_2)) \cap V(C_1) \neq \emptyset$.  Then, $(X_1
  \cup X_2, C_2)$ is a $\phi_1$-patch of $G$.
\end{lemma}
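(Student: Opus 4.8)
The plan is to use the very disk $\mathcal{D}_2 \subset \mathcal{S}$ that witnesses $(X_2,C_2)$ being a $\phi_2$-patch of $G'$ as the witness that $(X_1\cup X_2, C_2)$ is a $\phi_1$-patch of $G$. Since $C_2 \subseteq G'$ and $\phi_2 = \phi_1|_{G'}$, we have $\partial \mathcal{D}_2 = \phi_2(C_2) = \phi_1(C_2)$, so condition (1) of a $\phi_1$-patch holds automatically, and everything reduces to checking $\phi_1(G)\cap \mathcal{D}_2 = \phi_1(X_1\cup X_2)$. First I would record that $G = G' \cup X_1$: because $(X_1,C_1)$ is a $\phi_1$-patch, with $\phi_1(G)\cap \mathcal{D}_1 = \phi_1(X_1)$ and $\partial \mathcal{D}_1 = \phi_1(C_1)$, every vertex of $V(X_1)\setminus V(C_1)$ is drawn by $\phi_1$ in $\mathrm{int}(\mathcal{D}_1)$, so every edge of $G$ incident to such a vertex has an initial segment inside $\mathcal{D}_1$ and is thereby forced into $E(X_1)$; every other edge of $G$ has both endpoints in $V(G')$ and hence lies in $G'$. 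Consequently $\phi_1(G) = \phi_1(G')\cup\phi_1(X_1)$, so
\[
\phi_1(G)\cap \mathcal{D}_2 = (\phi_1(G')\cap \mathcal{D}_2)\cup(\phi_1(X_1)\cap \mathcal{D}_2) = \phi_1(X_2)\cup(\phi_1(X_1)\cap \mathcal{D}_2).
\]
Thus it suffices to prove $\phi_1(X_1)\subseteq \mathcal{D}_2$: then the right-hand side equals $\phi_1(X_1)\cup\phi_1(X_2) = \phi_1(X_1\cup X_2)$, and since $C_2 \subsetneq X_2 \subseteq X_1\cup X_2$ the pair $(X_1\cup X_2, C_2)$ is a legitimate patch. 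As $\phi_1(X_1)\subseteq \mathcal{D}_1$, it is enough to establish $\mathcal{D}_1 \subseteq \mathcal{D}_2$.

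This is where the hypothesis $(V(X_2)\setminus V(C_2))\cap V(C_1) \neq \emptyset$ enters. Fix $v$ in this set. Since $v\in V(X_2)$ we get $\phi_1(v)\in\phi_1(X_2)\subseteq \mathcal{D}_2$; since $v\notin V(C_2)$ and $\phi_1|_{G'}$ is an embedding, $\phi_1(v)\notin\phi_1(C_2) = \partial \mathcal{D}_2$; hence $\phi_1(v)\in\mathrm{int}(\mathcal{D}_2)$. Since $v\in V(C_1)$ we also have $\phi_1(v)\in\phi_1(C_1) = \partial \mathcal{D}_1$. Now $\mathrm{int}(\mathcal{D}_1)$ is a connected open set, and because $\mathrm{int}(\mathcal{D}_2)$ is open and contains $\phi_1(v)\in\partial \mathcal{D}_1$, it contains points of $\mathrm{int}(\mathcal{D}_1)$ arbitrarily close to $\phi_1(v)$, so $\mathrm{int}(\mathcal{D}_1)\cap\mathrm{int}(\mathcal{D}_2)\neq\emptyset$. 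The key claim is that $\mathrm{int}(\mathcal{D}_1)\cap\partial \mathcal{D}_2 = \emptyset$: by the patch condition for $(X_1,C_1)$ we have $\phi_1(G)\cap\mathrm{int}(\mathcal{D}_1) = \phi_1(X_1)\setminus\phi_1(C_1)$, and this meets $\phi_1(C_2)$ only in images of vertices or edges common to $C_2$ and $X_1$; but $V(C_2)\subseteq V(G')$ is disjoint from $V(X_1)\setminus V(C_1)$, so every such common piece touches $\phi_1(C_1) = \partial \mathcal{D}_1$ and therefore avoids $\mathrm{int}(\mathcal{D}_1)$. Granting the claim, $\mathrm{int}(\mathcal{D}_1)$ is a connected subset of $\mathcal{S}\setminus\partial \mathcal{D}_2$, whose two components are $\mathrm{int}(\mathcal{D}_2)$ and $\mathcal{S}\setminus \mathcal{D}_2$; since it meets the first it is contained in it, and taking closures (using that $\mathcal{D}_2$ is closed, and that a disk is the closure of its interior) yields $\mathcal{D}_1\subseteq \mathcal{D}_2$, completing the argument.

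The step I expect to be the main obstacle is precisely the key claim $\mathrm{int}(\mathcal{D}_1)\cap\phi_1(C_2) = \emptyset$. A priori the boundary cycle $C_2$ may share a path with the boundary cycle $C_1$, and one must rule out the possibility that $\phi_1$ routes an edge of $C_2$ that also lies in $X_1$ (a ``chord'' of $C_1$) through the open region $\mathrm{int}(\mathcal{D}_1)$; if that happened, $\mathcal{D}_1$ would straddle $\partial \mathcal{D}_2$ and the conclusion could fail. Handling this cleanly requires the local picture of $\phi_1$ around the vertices of $V(C_1)\cap V(C_2)$, together with the fact that $\mathcal{D}_1$ and $\mathcal{D}_2$ are ``filled'' exactly by $X_1$ and by $X_2$ respectively -- and, if needed, the additional structure of the patches produced in the later sections (e.g.\ that their boundary cycles are chordless), which is what makes the claim go through. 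The rest of the proof is routine set-theoretic bookkeeping once $\mathcal{D}_1\subseteq \mathcal{D}_2$ is in hand.
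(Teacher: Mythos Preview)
Your approach is essentially the same as the paper's: both reduce the lemma to the inclusion $\mathcal{D}_1\subseteq\mathcal{D}_2$, both locate the witness vertex $v$ with $\phi_1(v)\in\partial\mathcal{D}_1\cap\mathrm{int}(\mathcal{D}_2)$, and both finish by a connectedness argument. Your write-up is more explicit about the topology than the paper's (which is terse), and your decomposition $\phi_1(G)\cap\mathcal{D}_2=\phi_1(X_2)\cup(\phi_1(X_1)\cap\mathcal{D}_2)$ is a clean way to reduce to $\mathcal{D}_1\subseteq\mathcal{D}_2$.

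Your ``main obstacle,'' however, is not an obstacle at all, and you should not leave it open or appeal to extra structure from later sections. The key claim $\mathrm{int}(\mathcal{D}_1)\cap\phi_1(C_2)=\emptyset$ follows in one line from something stronger: $\phi_1(G')\cap\mathrm{int}(\mathcal{D}_1)=\emptyset$. Indeed, $\phi_1(G')\cap\mathcal{D}_1\subseteq\phi_1(G)\cap\mathcal{D}_1=\phi_1(X_1)$, and since $\phi_1$ is an embedding, $\phi_1(G')\cap\phi_1(X_1)=\phi_1(G'\cap X_1)$. By the definition of $G'=G\setminus(X_1\setminus C_1)$ we have $G'\cap X_1=C_1$, so $\phi_1(G')\cap\mathcal{D}_1\subseteq\phi_1(C_1)=\partial\mathcal{D}_1$, which is disjoint from $\mathrm{int}(\mathcal{D}_1)$. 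As $C_2\subseteq G'$, the claim is immediate. Your feared ``chord of $C_1$ inside $X_1$ lying on $C_2$'' cannot occur: any such edge would lie in $E(G')\cap E(X_1)=E(C_1)$ and hence be drawn on $\partial\mathcal{D}_1$, not through the interior. This is exactly what the paper uses (implicitly) when it produces an edge $e\notin E(X_1)$ with $\phi_1(e)\subset\mathcal{D}_1$ to obtain its contradiction.
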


\begin{proof}
  Since $(X_2,C_2)$ is a $\phi_2$-patch, it follows that there exists
  ${\cal D}_2 \subset {\cal S}$, such that $\partial {\cal D}_2 =
  \phi_2(C_2)$.  It remains to show that $\phi_1(G) \cap {\cal D}_2 =
  \phi_1(X_1 \cup X_2)$.  Since $(X_1,C_1)$ is a $\phi_1$-patch, it
  follows that there exists a disk ${\cal D}_1 \subset {\cal S}$, with
  $\partial {\cal D}_1 = \phi_1(C_1)$.  We claim that ${\cal D}_1
  \subseteq {\cal D}_2$.  Suppose to the contrary that there exists a
  point $p \in {\cal D}_1 \setminus {\cal D}_2$.  Pick an arbitrary
  vertex $v \in (V(X_2) \setminus V(C_2)) \cap V(C_1)$, and let $q =
  \phi_1(v) = \phi_2(v)$.  We have $q \in {\cal D}_2 \cap {\cal D}_1$.
  Moreover, since $v\in V(X_2)\setminus V(C_2)$, it follows that $q
  \notin \partial {\cal D}_2$.  Therefore, there exists a segment of
  $\partial {\cal D}_2$ that lies inside ${\cal D}_1$.  This implies
  that $\phi_2(X_2)$ intersects the interior of ${\cal D}_1$.  Thus,
  there exists $e \notin E(X_1)$, with $\phi_2(e) = \phi_1(e) \subset
  {\cal D}_1$, which contradicts the fact that $(X_1,C_1)$ is a
  $\phi_1$-patch.
\end{proof}

We are now ready to prove Lemma \ref{lem:skeleton}, which is the main result of this section.
Before proceeding, we remark that the assertion of Lemma \ref{lem:skeleton} can in fact be slightly strengthened.
More precisely, one can show that in the computed collection, all patches are universal.
We omit the details since they are not relevant to our algorithmic application.

\begin{proof}[\proofof Lemma \ref{lem:skeleton}]
Fix a drawing $\phi$ of $G$ into a surface of Euler genus $g$.
We remark that we use $g$ in the following argument, even though we do not know how to explicitly compute it in polynomial time.

We inductively compute a sequence ${\cal P}^{(0)},\ldots,{\cal P}^{(s)}$, with ${\cal P}^{(0)}=\emptyset$, where for each $i\in \{1,\ldots,s\}$, ${\cal P}^{(i)}$ is a collection of pairwise non-overlapping $\phi$-patches of $G$.
The desired collection will be ${\cal P}^{(s)}$.

For any $\ell\in \{1,\ldots,s\}$, we define the graph
$G^{(\ell)} = G \setminus \bigcup_{(X,C)\in {\cal P}^{(\ell)}} (X \setminus C)$.
We maintain the inductive invariant that for any $\ell \in \{1,\ldots,s\}$, 
\begin{align}
G^{(\ell)} \mbox{ is normalized, and } \eg(G^{(\ell)}) = \eg(G) = g. \label{eq:skeleton_invariant}
\end{align}

Given ${\cal P}^{(\ell)}$, for some $\ell\geq 0$, we proceed as
follows.  Let $\alpha>0$ be the constant in the statement of Lemma
\ref{lem:computing_a_universal_patch}.  By Theorem
\ref{thm:approx_treewidth}, there exists a polynomial time algorithm
that given a graph of treewidth $k$, outputs a tree decomposition of
width at most $\alpha' k \sqrt{\log k}$, for some universal constant
$\alpha'$.  We run the algorithm of Theorem \ref{thm:approx_treewidth} on
$G^{(\ell)}$.  If the algorithm returns a tree decomposition of width
at most $\alpha \cdot \alpha' \cdot \dmax g^{11} \log^{8} n$, then
we have a certificate that the treewidth of $G^{(\ell)}$ is at most
$O(\dmax g^{11} \log^{8} n)$, and we set $r=\ell$.  Otherwise, we
know that the treewidth of $G^{(\ell)}$ is at least $\alpha \dmax g^{11} \log^{15/2} n$, and we proceed to compute ${\cal P}^{(\ell+1)}$.

By Lemma \ref{lem:computing_a_universal_patch}, and since $G^{(l)}$ is
normalized, we can compute in polynomial time a universal patch in
$(X^*,C^*)$ of $G^{(\ell)}$.  Let
\[
{\cal Q}^{(\ell)} = \{(X,C) \in {\cal P}^{(\ell)} : (X,C) \mbox{ and } (X^*,C^*) \mbox{ are overlapping}\}.
\]
Fix an ordering of the patches in ${\cal Q}^{(\ell)} = \{(Y_i,F_i)\}_{i=1}^{k_{\ell}}$, where $k_{\ell}=|{\cal Q}^{(\ell)}|$.
Let 
$Y = \bigcup_{i=1}^{k_{\ell}} Y_i$.
We argue that $(X^*\cup Y, C)$ is a $\phi$-patch of $G$.
Let $\Gamma^{(0)} = G^{(\ell)}$, and for any $j\in \{1,\ldots,k_{\ell}\}$, let 
$\Gamma^{(j)} = \Gamma^{(j-1)} \cup Y_j$.
Let also $\phi^{(j)}$ be the drawing of $\Gamma^{(j)}$ induced by
restricting $\phi$ to $\Gamma^{(j)}$.  Since $(X^*, C^*)$ is a
universal patch of $\Gamma^{(0)} = G^{(\ell)}$, and $\phi^{(0)}$ is a
drawing into a surface of Euler genus $g$, it follows that $(X^*,C^*)$
is also a $\phi^{(0)}$-patch.  Note that for any $j\in
\{1,\ldots,k_{\ell}\}$, since $(Y_j,F_j)$ is a $\phi$-patch, and
$\phi^{(j)}$ is a restriction of $\phi$, it follows that $(Y_j, F_j)$
is also a $\phi^{(j)}$-patch.  By Lemma \ref{lem:nested_patches} we
obtain that $(X^* \cup Y_1, C^*)$ is a $\phi^{(1)}$-patch of
$\Gamma^{(1)}$.  By inductively applying Lemma
\ref{lem:nested_patches} on the pair of patches $(Y_j,F_j)$, and
$\left(X^* \cup \left( \bigcup_{i=1}^{j-1} Y_i \right), C^*\right)$,
we conclude that $(X^*\cup Y, C)$ is a $\phi^{(k_{\ell})}$-patch of
$\Gamma^{(k_{\ell})}$.  Since the patch $(X^*\cup Y, C)$ is
non-overlapping with any of the patches in ${\cal P}^{(\ell)}
\setminus {\cal Q}^{(\ell)}$, it follows that $(X^*\cup Y, C)$ is a
$\phi$-patch.  We set
\[
{\cal P}^{(\ell+1)} = \left({\cal P}^{(\ell)} \setminus {\cal Q}^{(\ell)}\right) \cup \{ (X^* \cup Y, C^*)\}.
\]
It is immediate that ${\cal P}^{(\ell+1)}$ is a collection of pairwise
non-overlapping $\phi$-patches.

We next show that the inductive invariant
\eqref{eq:skeleton_invariant} is maintained.  Observe that
\begin{align*}
G^{(\ell+1)} &= G^{(\ell)} \setminus (X^* \setminus C^*).
\end{align*}
Since $(X^*,C^*)$ is a universal patch of $G^{(l)}$, it follows that
$\eg(G^{(l+1)})=\eg(G^{(l)}) = g$.  Moreover, by Lemma
\ref{lem:computing_a_universal_patch} we have that $G^{(\ell+1)}$ is
normalized.  This shows that the inductive invariant
\eqref{eq:skeleton_invariant} is maintained.

It remains to argue that the above process terminates after
polynomially many steps.  Note that since $(X^*,C^*)$ is a patch, we
have $X^* \subsetneq C^*$.  Therefore, $G^{(\ell+1)} \subsetneq
G^{(\ell)}$.  It follows that the algorithm terminates in polynomial
time.  This concludes the proof.
\end{proof}

\section{Computing a flat grid minor}\label{sec:flat_grid_minors}

The last ingredient required for our algorithm is a procedure for
computing a universal patch (the proof of Lemma
\ref{lem:computing_a_universal_patch}).  This procedure requires a
polynomial-time algorithm which given a graph $G$ of large treewidth,
and small Euler genus, computes a large \emph{flat} grid minor in $G$.
Intuitively, we say that a subgraph $H$ is flat if it is planar, and
moreover it admits a planar drawing, such that all edges leaving $H$,
are incident to its outer face.  A formal definition can be found
later in this Section.  We give such an algorithm for computing flat
grid minors in the present Section.  The proof of Lemma
\ref{lem:computing_a_universal_patch} appears in the following
Section.

We remark that all previously known algorithm for computing flat grid
minors either work when $G$ is of constant treewidth, or require a
drawing of $G$ into a small-genus surface as part of the input.  In
contrast, we need to allow the treewidth to be as large as
$\Omega(n)$, and we of course do not have access to a low-genus
drawing of $G$ -- after all, computing such a drawing is precisely our
goal!

Let us now give a high-level overview of our algorithm for finding a
flat grid minor.  It is known that any graph $G$ of treewidth $t$, and
genus $g$, contains a $(\Omega(t/g) \times \Omega(t/g))$-grid minor.
By repeatedly removing balanced vertex-separators, we can compute in
polynomial time a set $X$ of at most $O(tg\log^{O(1)}n)$ vertices,
whose removal leaves a planar graph.  In particular, we can show 
that the resulting planar graph $G\setminus X$ must still contain a
relatively large grid minor.  It is already known how to compute a
large grid minor in a planar graph in polynomial time.  Unfortunately,
such a grid minor is not guaranteed to be flat in the original graph
$G$.  We argue that a large \emph{subgrid} of the computed grid minor
must be flat $G$.

\ifabstract
We now formally define the notion of \emph{flatness}.

\begin{definition}[Flatness]
  Let $G$ be a graph, and let $H \subseteq G$ be a planar subgraph.
  We say that $H$ is \emph{flat} (w.r.to $G$) if there exists a planar
  drawing of $H$, such that for all edges $\{u,v\} \in E(G)$, with
  $u\in V(H)$, and $v\in V(G) \setminus V(H)$, the vertex $u$ is in
  the outer face of $H$.
\end{definition}

We state the main lemma of this section and refer the reader 
to the full version for the details of the proof.
\fi

\iffull
We now proceed with the formal proof.  The first step towards
computing a flat grid minor, is a procedure for computing a small vertex
set, whose removal leaves a planar graph.  This is described in the
following lemma.


\begin{lemma}[Computing a small planarizing set]\label{lem:planarization}
  There exists a polynomial-time algorithm which given a graph $G$ of
  treewidth $t$, and an integer $g\geq 0$,
  either correctly decides that $\eg(G)>g$, or it outputs a set
  $X\subseteq V(G)$, satisfying the following conditions:
\begin{description}
\item{(1)}
$|X| = O(g t \log^{5/2} n)$.
\item{(2)}
$G\setminus X$ is planar.
\end{description}
\end{lemma}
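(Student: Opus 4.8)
The plan is to planarize $G$ by recursively removing small balanced vertex separators, in direct analogy with the proof of Lemma~\ref{lem:framed_planarization}. Concretely, I would build a rooted tree ${\cal T}$ of induced subgraphs of $G$, with root $G$. For a node $H$: if $H$ is planar (tested in linear time via Hopcroft--Tarjan \cite{Hopcroft}), declare $H$ a leaf; otherwise, since the treewidth of $H$ is at most the treewidth of $G$, namely $t$, apply Theorem~\ref{thm:approx_treewidth} to compute in polynomial time a tree decomposition of $H$ of width $O(t\sqrt{\log n})$, extract from it a $\tfrac12$-balanced (in terms of vertex count) separator $S_H$ of $H$ with $|S_H| = O(t\sqrt{\log n})$, and let the children of $H$ be the connected components of $H\setminus S_H$. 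Since every split halves the vertex count, ${\cal T}$ has height $h = O(\log n)$; since the subgraphs appearing at any fixed level are pairwise vertex-disjoint, the total vertex count per level is at most $n$, so ${\cal T}$ has at most $n\log n$ nodes and is computed in polynomial time.

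The key observation is that at each level $i$ the nodes $H_1,\dots,H_m$ are not merely pairwise vertex-disjoint but their union $\bigcup_{j} V(H_j)$ induces exactly the disjoint union $H_1 \sqcup \cdots \sqcup H_m$ in $G$: if $uv\in E(G)$ with $u\in H_j$, $v\in H_k$ and $j\ne k$, then looking at the least common ancestor $A$ of $H_j,H_k$ in ${\cal T}$, both $u$ and $v$ survive in $V(A)\setminus S_A$ but lie in different components of $A\setminus S_A$, a contradiction. Hence $\bigsqcup_j H_j$ is an induced subgraph of $G$, so $\eg\big(\bigsqcup_j H_j\big)\le \eg(G)$, and by additivity of the Euler genus over connected components, $\sum_j \eg(H_j)\le \eg(G)$. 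Every non-leaf node is non-planar and therefore has Euler genus at least $1$, so whenever $\eg(G)\le g$ there are at most $g$ non-leaf nodes at each level. The algorithm monitors this during the construction: if it ever reaches a level with more than $g$ non-planar subgraphs it halts and reports $\eg(G)>g$, which is correct by the displayed inequality; otherwise the construction terminates and it outputs $X = \bigcup_{H \text{ non-leaf}} S_H$.

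It remains to check the two conclusions. For (2): exactly as above, there is no edge of $G$ between two distinct leaves of ${\cal T}$, so $G\setminus X$ is the disjoint union of the leaves, each of which is planar, hence $G\setminus X$ is planar. For (1): the algorithm did not halt, so each of the $h+1 = O(\log n)$ levels contributed at most $g$ separators, i.e. $O(g\log n)$ separators in total, each of size $O(t\sqrt{\log n})$; thus $|X| = O(g t \log^{3/2} n)$, which is in particular $O(g t \log^{5/2} n)$ as claimed. The running time is polynomial since ${\cal T}$ has polynomially many nodes and each node requires only a planarity test and one application of Theorem~\ref{thm:approx_treewidth}. The only genuinely delicate point is the bookkeeping that justifies treating each level as an induced (disjoint-union) subgraph of $G$, so that the Euler-genus additivity theorem can be invoked to turn ``many non-planar pieces'' into a valid certificate that $\eg(G)>g$; everything else is the routine separator-based divide-and-conquer accounting already used in Lemma~\ref{lem:framed_planarization}.
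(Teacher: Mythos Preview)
Your proposal is correct and follows essentially the same approach as the paper: recursively split non-planar pieces along $\Theta(1)$-balanced separators of size $O(t\sqrt{\log n})$, use additivity of Euler genus over the pairwise vertex-disjoint pieces at each level to cap the number of non-planar nodes per level by $g$, and sum over $O(\log n)$ levels. Your write-up is in fact more careful than the paper's about why the level-$i$ pieces form a disjoint union as a subgraph of $G$ (the paper just calls this ``immediate''), and your final count $O(gt\log^{3/2}n)$ agrees with the paper's computation, which also lands at $\log^{3/2}$ before relaxing to the stated $\log^{5/2}$.
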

\iffull
\begin{proof}
  We inductively define a collection ${\cal F}$ of subgraphs of $G$,
  and an auxiliary tree $T$ with $V(T)={\cal F}$.  Initially, we set
  ${\cal F}_0=\{G\}$, and we let $T_0$ be the tree with single vertex
  $G$, which we also consider to be its root.  Given ${\cal F}_i$, and
  $T_i$, we proceed as follows.  If all leaves of $T_i$ are planar
  subgraphs of $G$, then we set ${\cal F}={\cal F}_i$, and $T = T_i$,
  concluding the construction of ${\cal F}$, and $T$.  Otherwise, let
  ${\cal L}_i$ be the set of leaves of $T$ that are non-planar
  subgraphs of $G$.  Each $H \in {\cal L}_i$ is a subgraph of $G$, and
  therefore has treewidth at most $t$.  Therefore, $H$ contains a
  $\Theta(1)$-balanced vertex separator $S_H$, with $|S_H|\leq t$.
  Using the algorithm from Theorem \ref{thm:approx_treewidth}, we can
  compute a $\Theta(1)$-balanced vertex separator $S'_H$ of $H$, with
  $|S'_H| = O(|S_H| \sqrt{\log n}) = O(t \sqrt{\log n})$.  Let ${\cal
    C}_H$ be the set of connected components of $H\setminus S'_{H}$.
  We set
\[
{\cal F}_{i+1} = {\cal F}_i \cup \bigcup_{H\in {\cal L}_i} {\cal C}_H
\]
We also construct $T_{i+1}$ as follows.
Initially, we set $T_{i+1}=T_i$.
For every $C\in {\cal C}_H$, we add the vertex $C$ in $V(T_{i+1})$, and we add the edge $\{H,C\}\in E(T)$.
Note that $H$ is the parent of $C$ in the rooted tree $T_{i+1}$.
We continue until all the leaves of the current tree are planar subgraphs of $G$.
This completes the definition of ${\cal F}$, and $T$.

Let ${\cal L}$ be the set of leaves of $T$.
We can now define
\[
X = \bigcup_{H \in {\cal F} \setminus {\cal L}} S'(H).
\]
It is immediate by the construction that $G\setminus X = \bigcup_{H\in {\cal L}} H$.
Since all leaves of $T$ are planar subgraphs of $G$, it follows that every connected component of $G\setminus X$ is planar, and therefore $G\setminus X$ is planar.

It remains to bound $|X|$.
Observe that for any $i>0$, every graph in ${\cal F}_i$ is a connected component of $H\setminus S'_H$, for some $H\in {\cal F}_{i-1}$.
Since $S'_H$ is a $\Theta(1)$-balanced separator of $H$, it follows that the maximum size of every graph in ${\cal F}_i$ is at most $2^{-i c }$, for some constant $c>0$.
Therefore, the depth of $T$ is $h=O(\log n)$.

For every $i\in \{0,\ldots,h\}$, let ${\cal Z}_i$ be the set of all graphs $H\in {\cal F}$, such that $H$ is at depth $i$ in $T$, and it is non-planar.
For any $i\in \{0,\ldots,h\}$, we have
\begin{align*}
\bigcup_{H\in {\cal Z}_i} H &\subseteq G.
\end{align*}
Since all graphs in ${\cal Z}_i$ are pairwise vertex-disjoint subgraphs of $G$, we have
\begin{align*}
\sum_{H\in {\cal Z}_i} \eg(H) &\leq \eg(G) = g.
\end{align*}
Since all graphs in ${\cal Z}_i$ are non-planar, we have that for any $H\in {\cal Z}_i$, $\eg(H)\geq 1$.
Therefore, 
\begin{align*}
 |{\cal Z}_i| &\leq g.
\end{align*}
We have
\begin{align*}
 |X| &= \left| \bigcup_{H \in {\cal F} \setminus {\cal L}} S'(H) \right| \\
  &= \left| \bigcup_{i=0}^h \bigcup_{H\in {\cal Z}_i} S'(H) \right| \\
  &\leq O(h g t \log^{3/2} n)\\
  &\leq O(g t \log^{5/2} n),
\end{align*}
as required.
\end{proof}
\fi

The following lemma gives a trade-off between the treewidth of a genus
$g$ graph, and the size of its largest grid minor.



\begin{lemma}[Demaine et al.~\cite{DBLP:journals/jacm/DemaineFHT05}]\label{lem:grid_genus}
  Let $G$ be a graph of Euler genus $g$, and treewidth $t$.  Then, $G$
  contains a $(r\times r)$-grid minor, for some $r\geq
  \frac{t+1}{6(g+1)}$.
\end{lemma}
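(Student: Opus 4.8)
The plan is to prove this by induction on the Euler genus $g$, using the planar excluded-grid theorem as the engine and a topological ``cut along a short noncontractible curve'' step to drop the genus. For the base case $g=0$ the graph $G$ is planar, and the statement is exactly the classical planar excluded-grid theorem of Robertson, Seymour and Thomas: a planar graph of treewidth $t$ contains an $(r\times r)$-grid minor with $r\ge(t+1)/6$, matching the bound since $g+1=1$. Throughout I would assume $G$ connected, since treewidth and grid minors are controlled componentwise.

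For the inductive step ($g\ge 1$), fix an embedding of $G$ into a surface $\Sigma$ of Euler genus $g$ and let $\rho$ be its representativity. The key dichotomy is: either a planar subgraph of $G$ already carries most of the treewidth (and then the planar case finishes, since a grid minor of a subgraph is a grid minor of $G$), or the treewidth is ``topological'', $\rho$ is small, and there is a noncontractible noose $N$ meeting $G$ in few vertices, say $|V(N)|\le k$. In the latter case I would cut $\Sigma$ along $N$, splitting each vertex of $N$ into two copies, obtaining a graph $G'$ embedded in a surface of Euler genus at most $g-1$; if $N$ separates, I instead pass to the side of larger treewidth, which also has Euler genus at most $g-1$, since a noncontractible \emph{separating} curve has both sides of positive Euler genus. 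A routine tree-decomposition surgery (add the $k$ cut vertices to every bag on the path joining the two copies of a split vertex) gives $\operatorname{tw}(G')\ge \operatorname{tw}(G)-k$, so $G'$ still has treewidth $\Omega(t)$, and by the induction hypothesis $G'$ contains an $(r'\times r')$-grid minor with $r'\ge(\operatorname{tw}(G)-k+1)/(6g)$.

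The main obstacle is twofold. First, $G'$ is not a minor of $G$ (it has the $k$ duplicated vertices, and identifying the duplicated pairs can merge distinct branch sets), so a grid minor of $G'$ does not automatically descend to one of $G$. I would handle this by deleting from the $r'\times r'$ grid the at most $2k$ rows and the at most $2k$ columns whose branch sets meet a duplicated vertex; a grid remains a grid minor after deleting a row together with a column, so iterating leaves an $(r'-2k)\times(r'-2k)$-grid minor all of whose branch sets and connecting edges lie in $G$, hence a genuine grid minor of $G$. Second, and more delicate, is making the dichotomy quantitatively tight enough that the recursion closes with the stated constant $6(g+1)$: this requires bounding the noose length $k$ in terms of treewidth and genus (not merely in terms of $\rho$) and quantifying how much treewidth a planar ``flat'' region of the embedding must carry — which is precisely the technical heart of the Demaine--Fomin--Hajiaghayi--Thilikos argument. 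A looser version of the above already yields the qualitative conclusion that $G$ has an $\Omega(\operatorname{tw}(G)/g)$-grid minor, so only the careful bookkeeping of constants is needed to reach $r\ge(t+1)/(6(g+1))$.
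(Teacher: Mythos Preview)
First, note that the paper does not supply a proof of this lemma: it is quoted as a known result of Demaine, Fomin, Hajiaghayi and Thilikos and used as a black box. So there is no ``paper's own proof'' to compare against.

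On its merits, your sketch has two concrete gaps. The first is the dichotomy itself. You assert that either a planar subgraph of $G$ carries most of the treewidth, or the representativity $\rho$ is small --- but you give no argument for the first alternative. High representativity does not in any direct way hand you a planar subgraph with treewidth close to $\operatorname{tw}(G)$. What it \emph{does} give (via results such as those of Brunet, Mohar and Richter used later in this paper) is a large cylinder minor, hence a large grid minor directly; if that is what you intend, you should say so, state the threshold on $\rho$ at which you switch cases, and verify that the cylinder-grid you extract is at least $(t+1)/(6(g+1))$ on a side. As written, the high-$\rho$ branch is simply missing.

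The second gap is in the pullback of the grid minor from the cut graph $G'$ to $G$. You claim that deleting a row together with a column from an $(r'\times r')$-grid leaves an $((r'-1)\times(r'-1))$-grid minor, and iterate $2k$ times. This is false: deleting row $i$ and column $j$ disconnects the grid into four quadrants, because the only edges between columns $j-1$ and $j+1$ (respectively rows $i-1$ and $i+1$) went through the deleted column (row). So the largest surviving grid has side at most about $r'/2$, not $r'-1$, and iterating would lose an exponential factor. The correct move is to delete only the at most $2k$ bad \emph{grid-vertices} (those whose branch sets meet a duplicated vertex) and invoke Eppstein's lemma (Lemma~\ref{lem:grids_persistency} here), which does guarantee an $((r'-2k)\times(r'-2k))$-grid minor. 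Even with that repair, pushing the recursion through with the exact constant $6(g+1)$ requires a careful choice of the $\rho$-threshold and accounting of the accumulated loss, which you have not carried out; and the actual DFHT argument does not proceed by induction on genus via noose-cutting in this way.
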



The following lemma shows that when deleting a small subset of a
graph, a moderately large grid minor ``survives''.  This property will
be used crucially by our algorithm for computing a grid minor.  An
earlier version of the present paper gave a slightly weaker bound on
the size of the resulting grid minor.  Our bound has been subsequently
improved by Eppstein \cite{Eppstein_grids}, who obtained an
asymptotically optimal estimate.

\begin{lemma}[Eppstein \cite{Eppstein_grids}]\label{lem:grids_persistency}
  Let $r,f \geq 1$.  Let $G$ be the $(r \times r)$-grid, and $X\subset
  V(G)$, with $|X| = f$.  Then, $G\setminus X$ contains the $(r'
  \times r')$-grid as a minor, where $r' = \max\{ r-f, r^2/4f - O(1)
  \} = \Theta(\min\{r, r^2/f\})$.
\end{lemma}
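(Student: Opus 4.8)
The plan is to prove the two lower bounds on $r'$ separately and take their maximum; both follow from a single combinatorial observation about deleting rows and columns of a grid.

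\textbf{The key sub-claim.} If $H$ is the $(a\times a)$-grid and $X'\subseteq V(H)$, then $H\setminus X'$ contains the $((a-|X'|)\times(a-|X'|))$-grid as a minor. Indeed, $X'$ meets at most $|X'|$ rows and at most $|X'|$ columns, so there are $c\ge a-|X'|$ rows $\rho_1<\dots<\rho_c$ and $d\ge a-|X'|$ columns $\gamma_1<\dots<\gamma_d$ that avoid $X'$ entirely. The subgraph $W$ of $H$ consisting of all these rows together with all these columns lies in $H\setminus X'$, and $W$ is a subdivision of the $(c\times d)$-grid: take the vertices $(\rho_i,\gamma_j)$ as branch vertices, and inside each row $\rho_i$ contract the path joining consecutive branch vertices $(\rho_i,\gamma_j)$ and $(\rho_i,\gamma_{j+1})$, and likewise inside each column $\gamma_j$. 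These contracted segments are pairwise vertex-disjoint — a row-segment uses only columns strictly between two $\gamma$'s, while a column-segment sits on a $\gamma$ — so the result is exactly the $(c\times d)$-grid, which contains the $((a-|X'|)\times(a-|X'|))$-grid.

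Applying the sub-claim with $H=G$ and $X'=X$ immediately gives $r'\ge r-f$, the first bound. For the second bound, fix a parameter $\ell=\lfloor c_0 r^2/f\rfloor$ for a small absolute constant $c_0>0$; if $\ell=0$ then the claimed bound $r^2/(4f)-O(1)$ is nonpositive and there is nothing to prove, so assume $\ell\ge1$, and note that $\ell\le r/2$ whenever $f\ge r$, which is the only regime in which the $r^2/f$-type bound beats $r-f$. Every induced subgraph of $G$ on $\ell$ consecutive rows and $\ell$ consecutive columns is a copy of the $(\ell\times\ell)$-grid; there are $(r-\ell+1)^2$ such windows, and a fixed vertex lies in a uniformly random one with probability at most $\ell^2/(r-\ell+1)^2\le 4\ell^2/r^2$. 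Hence the expected number of vertices of $X$ in a random window is at most $4f\ell^2/r^2\le \ell/2$ once $c_0$ is small enough, so some window $S$ has $|X\cap S|\le \ell/2$. The sub-claim inside $S$ now produces an $(\ell'\times\ell')$-grid minor of $S\setminus X\subseteq G\setminus X$ with $\ell'\ge \ell-|X\cap S|\ge \ell/2=\Omega(r^2/f)-O(1)$. Combining, $r'\ge\max\{r-f,\ \Omega(r^2/f)-O(1)\}$, and the ``grid of cuts'' configuration (a regularly spaced family of $\Theta(f/r)$ full-length row and column cuts) gives a matching upper bound, so $r'=\Theta(\min\{r,r^2/f\})$ in the worst case.

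There is no serious obstacle once one makes the right move: rather than trying to thread $\Theta(r^2/f)$ ``highways'' across the whole grid — impossible when $f\ge r$, since $X$ may then cut even a full-height strip — one passes to a sub-grid in which $X$ is sparse relative to its side length and applies the trivial bound there. The one delicate point is the constant: obtaining exactly $r^2/(4f)-O(1)$ rather than $\Omega(r^2/f)-O(1)$ requires a sharper choice of the sub-grid (for instance a probabilistic choice that also eliminates the $\ell^2/(r-\ell+1)^2$ versus $\ell^2/r^2$ slack, or a number-theoretic choice of the window size), but for the application only the order of magnitude $\Theta(\min\{r,r^2/f\})$ is needed.
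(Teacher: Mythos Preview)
The paper does not actually prove this lemma; it is quoted from Eppstein \cite{Eppstein_grids} without proof (an earlier, weaker version with bound $r/(1+\sqrt{f})-1$ appears in the commented-out text, proved by a one-line pigeonhole argument on vertex-disjoint sub-grids). So there is no ``paper's own proof'' to compare against, and your write-up is in effect supplying what the paper outsources.

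Your argument is sound in outline and yields the asymptotic statement $r'=\Theta(\min\{r,r^2/f\})$. Two small inaccuracies are worth noting. First, the sentence ``$\ell\le r/2$ whenever $f\ge r$, which is the only regime in which the $r^2/f$-type bound beats $r-f$'' is not right: by AM--GM one has $f(r-f)\le r^2/4$, so $r^2/(4f)\ge r-f$ for \emph{all} $f>0$, and the windowing argument is actually needed already around $f\approx r/2$ (where $r-f$ is still $\Theta(r)$ but ceases to be once $f$ approaches $r$). The fix is harmless: with $c_0\le 1/8$ one has $\ell\le r/2$ whenever $f\ge r/4$, and for $f<r/4$ the trivial bound $r-f\ge 3r/4=\Omega(r)$ already suffices. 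Second, as you acknowledge, your averaging gives a constant worse than $1/4$; getting the sharp $r^2/(4f)-O(1)$ requires Eppstein's more careful argument, but only the $\Theta$-statement is used downstream (in Lemma~\ref{lem:large_tw_planar_piece}).
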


The following lemma is an intermediate step towards getting a flat
grid minor.  The main missing property is that the grid minor
guaranteed by Lemma \ref{lem:large_tw_planar_piece} might not be flat.
We will subsequently ensure flatness via a more careful argument.

\begin{lemma}\label{lem:large_tw_planar_piece}
  Let $G$ be a graph of Euler genus $g\geq 1$, and treewidth $t\geq
  1$. There is a polynomial time algorithm to compute a set
  $X\subseteq V(G)$, with $|X| = O(gt \log^{5/2} n)$, and a planar
  connected component $\Gamma$ of $G \setminus X$ containing the
  $(r'\times r')$-grid as a minor, with $r' = \Omega\left(
    \frac{t}{g^{3} \log^{5/2}n} \right)$.  (The algorithm does not
  require a drawing of $G$ as part of the input.)
\end{lemma}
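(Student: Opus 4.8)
The plan is to combine the planarizing-set algorithm of Lemma~\ref{lem:planarization} with the grid-minor/treewidth bound of Lemma~\ref{lem:grid_genus}, the persistency estimate of Lemma~\ref{lem:grids_persistency}, and the standard polynomial-time routine for extracting a large grid minor from a \emph{planar} graph. First I would run the algorithm of Lemma~\ref{lem:planarization} on $G$ with parameter $g$. Since $\eg(G)=g$, it does not report $\eg(G)>g$, so it returns a set $X\subseteq V(G)$ with $|X|=O(gt\log^{5/2}n)$ such that $G\setminus X$ is planar; this $|X|$ already matches the bound in the statement, so the remaining work is only to locate a large grid minor inside a planar component of $G\setminus X$, and to identify that component.

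The key step is to lower bound the treewidth of $G\setminus X$, and this is where the hypotheses on $g$ and $t$ enter. By Lemma~\ref{lem:grid_genus}, $G$ contains an $(r\times r)$-grid minor $M$ with $r\geq \frac{t+1}{6(g+1)}=\Omega(t/g)$; fix a minor mapping $\sigma$ for $M$. Let $X'=\{v\in V(M):\sigma(v)\cap X\neq\emptyset\}$, so that $|X'|\leq|X|$. For each $v\notin X'$ the branch set $\sigma(v)$ avoids $X$, hence $G[\sigma(v)]$ is unchanged and still connected in $G\setminus X$; likewise every grid edge of $M$ both of whose endpoints lie outside $X'$ is realized by an edge of $G\setminus X$. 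Thus the restriction of $\sigma$ to $V(M)\setminus X'$ exhibits the $(r\times r)$-grid with at most $|X|$ vertices deleted as a minor of $G\setminus X$. Applying Lemma~\ref{lem:grids_persistency} with $f=|X'|\le|X|$, and using that a minor of a minor is a minor, $G\setminus X$ contains an $(r'\times r')$-grid minor with
\[
r'=\Omega\!\left(\min\{\,r,\ r^{2}/|X|\,\}\right)=\Omega\!\left(\frac{(t/g)^{2}}{gt\log^{5/2}n}\right)=\Omega\!\left(\frac{t}{g^{3}\log^{5/2}n}\right),
\]
where the middle step uses $r=\Omega(t/g)=O(gt\log^{5/2}n)=O(|X|)$, so the second term realizes the minimum. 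In particular the treewidth of $G\setminus X$ is $\Omega(t/(g^{3}\log^{5/2}n))$.

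Finally, since $G\setminus X$ is planar, I would feed it to the known polynomial-time algorithm that, given a planar graph, returns either a tree decomposition of small width or a large grid minor (via branch/sphere-cut decompositions and the planar grid theorem of Robertson, Seymour and Thomas); by the treewidth lower bound just established it must return an $(r'\times r')$-grid minor with $r'=\Omega(t/(g^{3}\log^{5/2}n))$. A grid with $r'\geq 2$ is connected, so the union of its branch sets together with the connecting edges lies inside a single connected component $\Gamma$ of $G\setminus X$; this $\Gamma$ is planar and contains the required grid minor, and the algorithm outputs $(X,\Gamma)$. Note that none of the three ingredients uses a drawing of $G$. The main obstacle is the bookkeeping in the middle paragraph: checking that deleting $X$ destroys at most $|X|$ of the grid vertices of $M$ and that substituting $r=\Omega(t/g)$ and $|X|=O(gt\log^{5/2}n)$ into the persistency estimate yields exactly the claimed $r'=\Omega(t/(g^{3}\log^{5/2}n))$; everything else is assembling already-available tools.
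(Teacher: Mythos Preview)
Your proposal is correct and follows essentially the same approach as the paper: apply Lemma~\ref{lem:planarization} to get $X$, invoke Lemma~\ref{lem:grid_genus} for a large grid minor in $G$, push it down to $G\setminus X$ via Lemma~\ref{lem:grids_persistency} (with the same arithmetic yielding $r'=\Omega(t/(g^{3}\log^{5/2}n))$), and take $\Gamma$ to be the component containing it. If anything, you are slightly more explicit than the paper about how to identify $\Gamma$ algorithmically (by running the planar grid-minor routine on $G\setminus X$); the paper's proof simply declares $\Gamma$ to be the component containing the non-constructively established subgrid $H'$ and defers the actual grid computation to the proof of Lemma~\ref{lem:flat_grid} via Lemma~\ref{lem:planar_grid_minor_approx}.
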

\iffull
\begin{proof}
  By lemma \ref{lem:grid_genus}, we have that $G$ contains as a minor
  the $(r \times r)$-grid $H$, for some $r\geq \frac{t+1}{6(g+1)} >
  \frac{t}{16g}$.  More precisely, for every $v \in V(H)$, there
  exists $U_v \subseteq V(G)$, such that the following conditions are
  satisfied:
\begin{itemize}
 \item
 For any $v, v' \in V(H)$, $U_{v} \cap U_{v'} = \emptyset$.

 \item
 For any $\{v, v'\} \in E(H)$, there exist $u \in U_{v}$, and $u' \in U_{v'}$, such that $\{u,u'\} \in E(G)$.
 \end{itemize}

Using the algorithm from Lemma \ref{lem:planarization}, we compute a set $X\subseteq V(G)$, with $|X| = O(gt \log^{5/2} n)$, such that the graph $G \setminus X$ is planar.
We first argue that $G \setminus X$ contains a large grid minor.
Let
\begin{align*}
X' = \{v \in V(H) : X \cap U_v \neq \emptyset\}.
\end{align*}
By Lemma \ref{lem:grids_persistency}, we have that $H \setminus X'$ contains as a subgraph a $(r' \times r')$-grid $H'$, for some
\begin{align*}
r' &= \Omega(\min\{r, r^2/f \})\\
 &= \Omega\left(\min\left\{\frac{t}{g}, \frac{t^2}{g^3 t \log^{5/2} n} \right\} \right)\\
 &= \Omega\left(\frac{t}{g^3 \log^{5/2} n}\right).
\end{align*}
Since $H \setminus X'$ is a minor of $G \setminus X$, and $H'$ is a subgraph of $H$, it follows that $H'$ is a minor of $G \setminus X$.
By setting $\Gamma$ to be the connected component of $G \setminus X$ containing $H'$ as a minor, the assertion follows.
\end{proof}
\fi


The following result gives a polynomial-time constant-factor
approximation for computing a maximum grid minor in a planar graph.
It is obtained by combining the algorithm for computing a branch
decomposition due to Seymour, and Thomas
\cite{DBLP:journals/combinatorica/SeymourT94}, with the grid minor
algorithm of Robertson, Seymour, and Thomas
\cite{DBLP:journals/jct/RobertsonST94} (see also
\cite{DBLP:conf/focs/ChekuriKS04,GuT11}).

\begin{lemma}[Seymour, and Thomas \cite{DBLP:journals/combinatorica/SeymourT94}, Robertson, Seymour, and Thomas \cite{DBLP:journals/jct/RobertsonST94}]\label{lem:planar_grid_minor_approx}
  Let $r>0$, and let $G$ be a planar graph containing a $(r \times
  r)$-grid minor.  Then, on input $G$, we can compute in polynomial
  time a $(\Omega(r)\times \Omega(r))$-grid minor in $G$.
\end{lemma}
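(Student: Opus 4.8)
The plan is to combine the polynomial-time algorithm for computing an optimal branch decomposition of a planar graph with the algorithmic version of the grid-minor theorem for planar graphs. First, recall that the $(r\times r)$-grid has branchwidth $\Omega(r)$, and that branchwidth does not increase under taking minors; hence the hypothesis that $G$ contains an $(r\times r)$-grid minor gives $\mathrm{bw}(G) = \Omega(r)$. Running the ``ratcatcher'' algorithm of Seymour and Thomas \cite{DBLP:journals/combinatorica/SeymourT94} on $G$, we compute in polynomial time a branch decomposition $(T,\tau)$ of $G$ of width exactly $w := \mathrm{bw}(G) = \Omega(r)$.

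Next, I would invoke the algorithmic form of the ``quickly excluding a planar graph'' theorem of Robertson, Seymour, and Thomas \cite{DBLP:journals/jct/RobertsonST94}: there is an absolute constant $c>0$ such that, given a planar graph together with a branch decomposition of width $w$, one can compute in polynomial time a model of the $(\lfloor c w\rfloor \times \lfloor c w \rfloor)$-grid as a minor. (The structural statement in \cite{DBLP:journals/jct/RobertsonST94} is a min--max relation between branchwidth and the size of the largest grid minor in planar graphs; the proof is constructive, and the extraction of the actual grid model in polynomial time is spelled out in the later treatments \cite{DBLP:conf/focs/ChekuriKS04,GuT11}, which give the same asymptotics on planar inputs.) Applying this to $(T,\tau)$ produces a $(\lfloor c w \rfloor \times \lfloor c w\rfloor)$-grid minor of $G$, which is an $(\Omega(r)\times\Omega(r))$-grid minor since $w = \Omega(r)$; this is the desired output.

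The routine ingredients are the inequality $\mathrm{bw}(G) = \Omega(r)$ and the branchwidth computation, which is immediate from \cite{DBLP:journals/combinatorica/SeymourT94}. The main obstacle is the algorithmic grid-minor extraction: the argument of \cite{DBLP:journals/jct/RobertsonST94} is naturally phrased as a structural dichotomy (either small branchwidth, or a large grid minor coming from a large tangle), and converting it into an explicit polynomial-time construction of the grid model requires carefully tracking the tangle/branch structure together with the disjoint connected subgraphs realizing the branch sets. This is precisely the step I would import from the cited algorithmic references rather than re-derive, and it is also the place where the constant hidden in $\Omega(\cdot)$ is determined.
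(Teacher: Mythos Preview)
Your proposal is correct and matches the paper's own justification: the paper does not give a standalone proof but simply states that the lemma is obtained by combining the Seymour--Thomas branch-decomposition algorithm with the Robertson--Seymour--Thomas planar grid-minor theorem (pointing to \cite{DBLP:conf/focs/ChekuriKS04,GuT11} for the algorithmic extraction). Your write-up fills in exactly these steps---branchwidth lower bound from the grid minor, optimal branch decomposition via ratcatcher, then constructive grid extraction---so there is nothing to add or correct.
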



We now formally define the notion of \emph{flatness}.

\begin{definition}[Flatness]
  Let $G$ be a graph, and let $H \subseteq G$ be a planar subgraph.
  We say that $H$ is \emph{flat} (w.r.to $G$) if there exists a planar
  drawing of $H$, such that for all edges $\{u,v\} \in E(G)$, with
  $u\in V(H)$, and $v\in V(G) \setminus V(H)$, the vertex $u$ is on 
  the outer face of the drawing of $H$.
\end{definition}

Our strategy for computing a flat grid minor is as follows.  We first
remove a small set of vertices from the input graph $G$, so that the
resulting graph is planar, and has large treewidth.  This means that
it also has a large grid minor.  We argue that since we only remove a
small number of vertices from $G$, some sub-grid of this grid-minor
must be flat.

\begin{lemma}[Computing a flat grid minor]\label{lem:flat_grid}
There exists a polynomial-time algorithm which
  given a graph $G$ of treewidth $t$, and maximum degree $\dmax$, and
  an integer $g\geq 1$,
  either correctly decides that $\eg(G)>g$, or it outputs a flat
  subgraph $G'\subset G$, such that $X$ contains a $\left(\Omega(r)
    \times \Omega(r)\right)$-grid minor $M$, for some
  $r=\Omega\left(\frac{t^{1/2}}{\dmax^{1/2} g^{7/2} \log^{15/4}
      n}\right)$.  Moreover, in the latter case, the algorithm also
  outputs a minor mapping for $M$.
\end{lemma}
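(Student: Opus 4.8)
\emph{Plan.} I would derive the flat grid minor in three stages: first extract a planar piece of $G$ with large treewidth (hence a large grid minor), then locate a sub-grid whose enclosed region avoids the neighborhood of the deleted vertices, and finally use a Jordan-curve argument to see that the subgraph drawn inside that region is flat.

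\emph{Step 1 (a planar piece with a large grid minor).} Run the algorithm of Lemma~\ref{lem:large_tw_planar_piece} on $G$ with parameter $g$. Either it certifies $\eg(G)>g$, in which case we are done, or it returns a set $X\subseteq V(G)$ with $|X|=O(gt\log^{5/2}n)$ and a planar connected component $\Gamma$ of $G\setminus X$ containing an $(r_0\times r_0)$-grid minor with $r_0=\Omega(t/(g^{3}\log^{5/2}n))$. Compute a planar drawing of $\Gamma$ (Hopcroft--Tarjan) and apply Lemma~\ref{lem:planar_grid_minor_approx} to $\Gamma$ to compute an explicit $(r_1\times r_1)$-grid minor $M_0$ of $\Gamma$, $r_1=\Omega(r_0)$, together with a minor mapping $\sigma$.

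\emph{Step 2 (a clean sub-grid).} Let $W=N_G(X)\cap V(\Gamma)$; since $G$ has maximum degree $\dmax$, $|W|\le \dmax|X|=O(\dmax g t\log^{5/2}n)$. Partition $M_0$ into a $k\times k$ array of pairwise vertex-disjoint sub-grids of side $\ell=\lfloor r_1/k\rfloor$, where $k$ is the least integer with $k^{2}>|W|$, so $k=\Theta((\dmax g t\log^{5/2}n)^{1/2})$ and
\[
\ell=\Omega\!\left(\frac{r_1}{k}\right)=\Omega\!\left(\frac{t/(g^{3}\log^{5/2}n)}{(\dmax g t\log^{5/2}n)^{1/2}}\right)=\Omega\!\left(\frac{t^{1/2}}{\dmax^{1/2}g^{7/2}\log^{15/4}n}\right),
\]
which is the target value of $r$. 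For each sub-grid $M_{p,q}$, its boundary cycle maps through $\sigma$ to a cycle $\widehat C_{p,q}$ in $\Gamma$, and in the planar drawing of $\Gamma$ this cycle bounds a closed disk $\widehat{\mathcal D}_{p,q}$, which we take to be the one containing the images of the interior vertices of $M_{p,q}$. A topological argument, using that two vertex-disjoint cycles in a plane graph are non-crossing together with the grid structure of $M_0$ (the images of the interiors of $M_{p,q}$ and $M_{p',q'}$ lie on opposite sides of each other's boundary cycle, which rules out nesting), shows that the disks $\widehat{\mathcal D}_{p,q}$ have pairwise disjoint interiors. Since there are $k^{2}>|W|$ of them and each vertex of $W$ lies in the interior of at most one, some disk $\widehat{\mathcal D}^{*}$, bounded by $\widehat C^{*}$ around a sub-grid $M^{*}$ of side $\ell$, has no vertex of $W$ in its interior; point location in the planar drawing finds it in polynomial time.

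\emph{Step 3 (flatness) and the main obstacle.} Output the subgraph $G'$ of $\Gamma$ induced on the vertices of $\Gamma$ drawn in the closed disk $\widehat{\mathcal D}^{*}$, re-embedded so that $\widehat C^{*}$ bounds the outer face, together with the restriction of $\sigma$ to the interior sub-grid of $M^{*}$ (an $(\ell-2)\times(\ell-2)$-grid, and $\ell-2=\Omega(r)$) as the minor mapping. Then $G'$ is a planar subgraph of $G$, and it contains the stated grid minor because any edge of $\Gamma$ joining two interior branch sets of $M^{*}$ is drawn inside $\widehat{\mathcal D}^{*}$ (it cannot cross the Jordan curve $\widehat C^{*}$). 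For flatness, take $u\in V(G')$ with a neighbor $v\in V(G)\setminus V(G')$: if $v\in X$ then $u\in W$, and since no vertex of $W$ is interior to $\widehat{\mathcal D}^{*}$, $u$ lies on $\widehat C^{*}$, hence on the outer face; otherwise $v\in V(\Gamma)$ (there are no edges from $\Gamma$ to other components of $G\setminus X$) and $v$ is drawn strictly outside $\widehat{\mathcal D}^{*}$, so the non-crossing edge $uv$ forces $u\in \widehat C^{*}$, again on the outer face. Hence $G'$ is flat with respect to $G$. The delicate part of the argument is Step~2 --- establishing that the disks bounded by the boundary cycles of the disjoint sub-grids have pairwise disjoint interiors, so that a sub-grid whose \emph{entire} enclosed region (not merely its branch sets) avoids $W$ must exist. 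This is intuitively evident but requires a careful plane-topology argument about the subgraph of $\Gamma$ realizing $M_0$; the remaining steps are parameter bookkeeping and a Jordan-curve separation.
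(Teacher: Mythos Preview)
Your proof follows the same three-stage strategy as the paper (planarize, find a large grid minor in the planar piece, pigeonhole to locate a clean sub-grid), but the paper handles Step~2--3 more simply and sidesteps exactly the ``delicate part'' you flag. The paper makes one extra move you omit: it chooses the minor mapping $\mu$ so that $\mu(H)=\Gamma$, i.e.\ the branch sets partition \emph{all} of~$\Gamma$. Then it pigeonholes directly on the branch sets $\mu(H_1),\dots,\mu(H_\ell)$ of the $\ell=\dmax|X|+1$ sub-grids---these are pairwise disjoint by definition of a minor mapping, so no topological argument is needed---to find some $\mu(H_i)$ containing no vertex of $W=N_G(X)\cap V(\Gamma)$. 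Flatness of $\mu(H_i)$ then follows in one line: since $\mu(H)=\Gamma$, the complement $\Gamma\setminus\mu(H_i)=\mu(H\setminus H_i)$ is connected (because $H\setminus H_i$ is connected in the grid $H$), hence lies in a single face of the drawing of $\mu(H_i)$ inherited from~$\Gamma$; taking that face as the outer face gives the flat drawing.

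Your route through the disks $\widehat{\mathcal D}_{p,q}$ is workable but the disjointness claim is genuinely subtle, and your sketch (``interiors on opposite sides of each other's boundary cycle rules out nesting'') does not settle it: on the sphere, two disjoint Jordan curves cut out three regions $A,B,C$, and one can have $\sigma(\text{int }M_{p,q})\subset B$, $\sigma(\text{int }M_{p',q'})\subset A$ with both ``opposite side'' conditions satisfied yet $\widehat{\mathcal D}_{p,q}=B\cup C$ and $\widehat{\mathcal D}_{p',q'}=A\cup C$ overlapping in~$C$. Whether this configuration is actually realizable depends on how you pick the cycles $\widehat C_{p,q}$ inside the branch sets and on the embedding of~$\Gamma$, so extra care is needed. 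The paper's choice $\mu(H)=\Gamma$ together with pigeonholing on branch sets rather than disks is precisely what lets it avoid this issue.
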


\iffull
\begin{proof}
  We first use Lemma \ref{lem:large_tw_planar_piece} to find a set $X \subseteq V(G)$, with $|X| = O(gt \log^{5/2} n)$, and  planar connected component $\Gamma$ of $G\setminus X$, such that $\Gamma$ contains a
  $(r'\times r')$-grid minor, for some
  $r'=\Omega\left(\frac{t}{g^{3} \log^{5/2} n}\right)$.  Using Lemma \ref{lem:planar_grid_minor_approx} we can
  compute a $(k\times k)$-grid minor $H$ in $\Gamma$, for some
  $k=\Omega(r')$.

Fix a minor mapping $\mu:V(H) \to 2^{V(\Gamma)}$ for $H$. 
Note that we can choose $\mu$ so that $\mu(H)=\Gamma$.
The grid $H$ contains $\ell=|X|\cdot \dmax + 1$ pairwise vertex-disjoint $(k'\times k')$-grids $H_1,\ldots,H_\ell$, for some $k'=\Omega\left(\frac{k}{\ell^{1/2}}\right) = \Omega\left(\frac{t^{1/2}}{\dmax^{1/2} g^{7/2} \log^{15/4} n}\right)$.
Since $G$ has maximum degree $\dmax$, the set $X$ is adjacent to at most $\dmax \cdot |X|$ vertices in $G \setminus X$.
It follows that there exists $i\in \{1,\ldots,\ell\}$, such that $\mu(H_i)$ is not adjacent to $X$.
It follows that the neighborhood of $\mu(H_i)$ is contained in $H$, which implies that $\mu(H_i)$ is flat, concluding the proof.
\end{proof}

\fi

\section{Computing a universal patch}\label{sec:universal_patch}

In this Section, we present the last missing ingredient of our drawing
algorithm: a procedure for computing a universal patch.  Our algorithm
for computing universal patches uses as a subroutine the procedure for
computing a flat grid minor from Section \ref{sec:flat_grid_minors}.

Our proof uses powerful machinery developed by Mohar
\cite{Mohar_local_planarity}.  The tools from
\cite{Mohar_local_planarity} allow us to guarantee that certain cycles
of a given graph are contractible in \emph{any} optimal drawing.  This
property will be crucial when computing a universal patch.  We begin
with some definitions.

Let $H$ be a subgraph of a graph $G$.  An \emph{$H$-component} of $G$
is either an edge in $E(G)$ with both endpoints in $V(H)$, or it is a
connected component $X$ of $G \setminus V(H)$ together with all the
edges between $X$ and $H$.  Each edge of an $H$-component $Y$ with an
endpoint in $V(H)$ is a \emph{foot} of $Y$.

\begin{definition}[Planarly nested sequence \cite{Mohar_local_planarity}]
  A sequence $C_1,\ldots,C_k$ of disjoint cycles in a graph $G$ is
  \emph{planarly nested} if for any $i\in \{1,\ldots,k\}$, there
  exists a $C_i$-component $H_i$, such that $H_1\supset \ldots \supset
  H_k$, and the graph obtained from $G$ by contracting to a single
  vertex all edges in the $C_i$-component $H_i$, except its feet, is
  planar.
\end{definition}

The following Lemma is due to Mohar \cite{Mohar_local_planarity}.  In
fact, \cite{Mohar_local_planarity} derives a somewhat more precise
bound for non-orientable drawings.
We will use a slightly weaker form which simplifies the notation, but
is still sufficient for our application.

\begin{lemma}[Mohar \cite{Mohar_local_planarity}]\label{lem:Mohar_planarly_nested}
  Let $G$ be a graph of Euler genus $g$.  Let $\phi$ be a drawing of
  $G$ into a surface ${\cal S}$ of Euler genus $g$.  Let
  $C_1,\ldots,C_k$ be a planarly nested sequence of cycles in $G$,
  where $k > g$.  Then, the cycles $\phi(C_1),\ldots,\phi(C_{k-g})$
  bound disks in ${\cal S}$ (see Figure \ref{fig:planarly_nested}).
\end{lemma}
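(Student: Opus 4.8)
The plan is to follow a ``genus budget'' argument: a cycle $\phi(C_i)$ that fails to bound a disk in $\mathcal{S}$ must pay at least one unit of Euler genus, and because the sequence is nested these payments come from disjoint handles/crosscaps of the surface, so at most $g$ of the cycles can fail --- and the ones that fail are the innermost.

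Concretely, I would build a nested family of closed surfaces $\mathcal{S}=\Sigma_0 \supseteq \Sigma_1 \supseteq \dots \supseteq \Sigma_k$: having constructed $\Sigma_{i-1}$, cut it along the simple closed curve $\phi(C_i)$, keep the piece containing $\phi(\mathrm{int}(H_i))$, and cap the resulting boundary circle(s) with disks to obtain $\Sigma_i$. This is well defined because the cycles $C_1,\dots,C_k$ are pairwise disjoint and $H_1 \supset \dots \supset H_k$, so $\phi(C_i)$ and $\phi(\mathrm{int}(H_i))$ lie in the part of $\Sigma_{i-1}$ inherited from $\mathcal{S}$, disjoint from the caps added at earlier steps; moreover $\phi(H_i \cup C_i)$ is still drawn inside $\Sigma_i$. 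Writing $g_i=\eg(\Sigma_i)$ we get $g=g_0\ge g_1\ge\dots\ge g_k\ge 0$, and a standard surgery fact tells us that $g_i=g_{i-1}$ exactly when $\phi(C_i)$ bounds a disk in $\Sigma_{i-1}$ on the side not containing $\phi(\mathrm{int}(H_i))$ (or $\Sigma_{i-1}$ is already a sphere), while $g_i<g_{i-1}$ whenever $\phi(C_i)$ is non-separating in $\Sigma_{i-1}$ or separates it into two pieces of positive genus. Since the genus can strictly drop at most $g$ times, at most $g$ indices $i$ have $g_i<g_{i-1}$.

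The technical heart --- and the step I expect to be the main obstacle --- is to transfer this back to $\mathcal{S}$ itself, namely: (i) if $g_i=g_{i-1}$ then $\phi(C_i)$ actually bounds a disk in $\mathcal{S}$, not merely in the cut-and-capped $\Sigma_{i-1}$ (so that un-capping the disks introduced at earlier steps does not destroy contractibility); and (ii) an inheritance statement, that once some $\phi(C_i)$ bounds a disk, a short argument using the nesting shows every $\phi(C_j)$ with $j>i$ does too, so the set of indices with $\phi(C_i)$ non-contractible is a terminal segment. This is exactly where the hypothesis that every $G/H_i$ is planar is used: the subgraph drawn on the side of $\phi(C_i)$ away from $\phi(\mathrm{int}(H_i))$ is (a subdivision of a subgraph of) the planar graph $G/H_i$, and one must argue that a planar graph occupying a subsurface in this ``filling'' way forces its bounding curve to be contractible. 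Carrying this out carefully --- and, for the sharper non-orientable bound, maintaining an exact crosscap count --- is precisely the content of Mohar's local-planarity analysis in \cite{Mohar_local_planarity}, which I would invoke. Granting (i) and (ii), the non-contractible $\phi(C_i)$ are at most $g$ in number and form a terminal segment of $\{1,\dots,k\}$, hence lie inside $\{k-g+1,\dots,k\}$; therefore $\phi(C_1),\dots,\phi(C_{k-g})$ all bound disks in $\mathcal{S}$, which is the claim.
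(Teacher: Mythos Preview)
The paper does not prove this lemma; it is quoted from Mohar \cite{Mohar_local_planarity} and used as a black box (the text even notes that Mohar proves a sharper non-orientable bound, of which this is a convenient weakening). So there is no ``paper's own proof'' to compare your sketch against.

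On the sketch itself: the genus-budget surgery is the right picture, but there is a concrete slip and a structural gap. The slip is in your inheritance statement (ii): the implication goes the other way. If $\phi(C_j)$ bounds a disk $D_j$ on the non-$H_j$ side, then for every $i<j$ the cycle $C_i$ lies in $G\setminus H_j$ (since $H_j\subset H_i$ and the cycles are disjoint), hence $\phi(C_i)\subset D_j$ and therefore bounds a sub-disk. Thus ``bounds a disk'' is downward-closed in $i$, and the non-bounding indices form a \emph{terminal} segment --- which is indeed what you conclude, but not what your stated implication ``$i$ bounds $\Rightarrow$ every $j>i$ bounds'' would give. With this corrected, (ii) is an elementary topological fact and does not need Mohar's analysis at all.

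The structural gap is that you defer (i) --- the passage from ``$\phi(C_i)$ bounds a disk in the capped surface $\Sigma_{i-1}$'' to ``$\phi(C_i)$ bounds a disk in $\mathcal{S}$'' --- to Mohar's paper. That is circular: the lemma you are proving \emph{is} Mohar's lemma. And the gap is real: once some earlier $\phi(C_m)$ is non-contractible, cutting along it can make later $\phi(C_i)$ contractible in $\Sigma_{i-1}$ while they remain non-contractible in $\mathcal{S}$, so counting genus drops along your sequence does not by itself bound the number of cycles that fail in $\mathcal{S}$. The planarity of $G/H_i$ enters not merely to ``fill'' one side but to run a re-embedding argument: if $\phi(C_i)$ is two-sided and its non-$H_i$ side is not a disk, one can cap the $H_i$ side, replace the other side by a planar drawing of $G/H_i$ in a disk, and obtain an embedding of $G$ into a surface of strictly smaller Euler genus, contradicting $\eg(G)=g$. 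The residual $g$ in the bound comes from cycles that are one-sided (where this re-embedding does not go through directly), and bounding how many disjoint such cycles can occur is where the actual work in \cite{Mohar_local_planarity} lies. Your outline gestures at this but does not supply it.
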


\begin{figure}
\begin{center}
\iffull
\scalebox{0.6}{\includegraphics{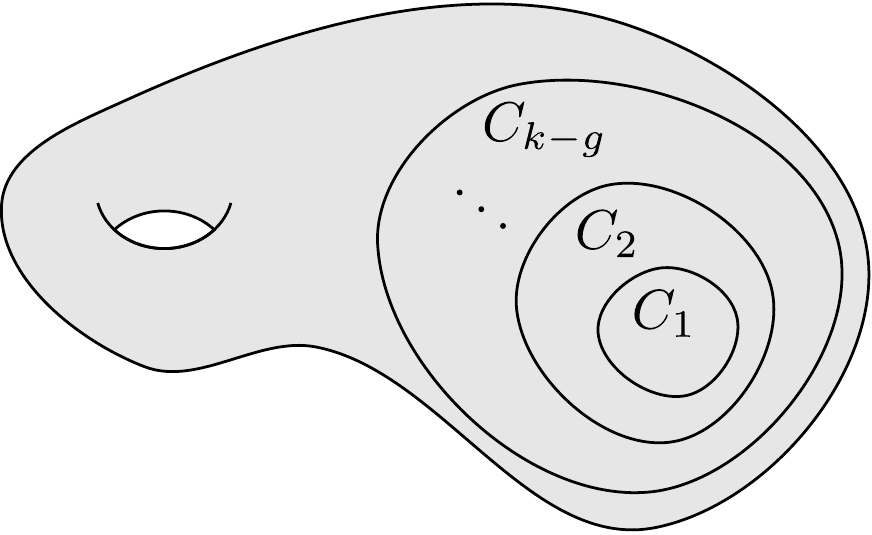}}
\fi
\ifabstract
\scalebox{0.6}{\includegraphics{figs/planarly_nested}}
\fi
\end{center}
\caption{
Planarly nested sequence for Lemma~\ref{lem:Mohar_planarly_nested}
\label{fig:planarly_nested}}
\end{figure}

The following crucial fact about planarly nested sequences, allows us to inductively maintain a graph of Euler genus $g$.

\begin{lemma}[Mohar \cite{Mohar_local_planarity}]\label{lem:planarly_nested_genus}
  Let $G$ be a graph of Euler genus $g$, and let $C_1,\ldots,C_k$ be a
  planarly nested sequence in $G$, with $k>g+1$.  Let $H$ be the
  $C_1$-component containing $C_2$, and let $G'=C_1\cup H$.  Then,
  $\eg(G')=\eg(G)=g$.
\end{lemma}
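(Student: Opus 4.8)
The inequality $\eg(G')\le \eg(G)=g$ is immediate since $G'$ is a subgraph of $G$; the whole content is the reverse inequality $\eg(G')\ge g$, which together with the trivial direction yields $\eg(G')=\eg(G)=g$. Throughout, write $H=H_1$ for the $C_1$-component containing $C_2$, and let $K=G\setminus(V(H)\setminus V(C_1))$ be the ``complement'' of $H$, so that $G=G'\cup K$ and $G'\cap K=C_1$. The plan is to assume $\eg(G')=g'\le g-1$, fix an optimal embedding $\psi$ of $G'$, and glue the rest of $G$ --- namely $K\setminus C_1$ --- into $\psi$ without raising the genus, contradicting $\eg(G)=g$.

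First I would extract the structural consequences of the planarly-nested hypothesis. Applying the definition with $i=1$, the graph $P_1$ obtained from $G$ by contracting all of $H$ except its feet to one vertex $w$ is planar. Since $K\subseteq P_1$, the graph $K$ is planar; moreover, deleting $w$ from a planar drawing of $P_1$ exhibits a planar drawing of $K$ having a single face incident to exactly the attachment vertices of $H$ on $C_1$, in a cyclic order that is ``compatible with $C_1$'' --- equivalently, $K$ can be drawn in a closed disk meeting its boundary in precisely those attachment vertices. This is the sense in which $K$ ``fits into the slot vacated by $H$''.

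Second, I would transfer the planarly-nested tail to $G'$. Each $H_i$ ($i\ge2$) satisfies $H_i\subseteq H_1=H\subseteq G'$, so $H_i\setminus V(C_i)$ is a connected subgraph of $G'\setminus V(C_i)$ that is contained in a single component of $G\setminus V(C_i)$, hence is exactly a component of $G'\setminus V(C_i)$; thus $H_i$ is also a $C_i$-component of $G'$ with the same feet, and contracting $H_i$ (except feet) in $G'$ gives a subgraph of the planar graph $G/H_i$, so is planar. Therefore $C_2,\ldots,C_k$ is a planarly nested sequence in $G'$, witnessed by $H_2\supset\cdots\supset H_k$, of length $k-1>g\ge g'+1$. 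By Lemma~\ref{lem:Mohar_planarly_nested} applied to $\psi$, the cycles $\psi(C_2),\ldots,\psi(C_{k-g'})$ bound disks in the host surface, and $k-g'\ge3\ge 2$, so at least $\psi(C_2)$ bounds a disk $D_2$. Using this, together with the fact that the part of $H$ lying off the disk side of $\psi(C_2)$ forms a single $C_2$-component, I would carve out a disk region of the host surface that meets $\psi(C_1)$ exactly in the attachment vertices of $H$ and otherwise borders only the ``free'' side of $C_1$, and insert into it the planar drawing of $K\setminus C_1$ supplied above, matched up along the attachment vertices. This produces an embedding of $G$ into a surface of Euler genus $g'<g$, the desired contradiction; hence $\eg(G')\ge g$, and the lemma follows.

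The main obstacle is exactly this final surgery: showing that the optimal embedding $\psi$ of $G'$ really does leave a disk-shaped slot around $C_1$, with the attachment vertices occurring on its boundary in the cyclic order prescribed by the planar drawing of $P_1$. In a general (non-optimal, or non-normalized) embedding this can fail --- the feet of $H$ can wind around both sides of $C_1$ so that no such slot exists --- and it is precisely here that Mohar's local-planarity machinery \cite{Mohar_local_planarity}, and the fact that $G$ is normalized, do the real work: they exclude these pathological local pictures of $\psi$ near $C_1$ and let one push the ``nested disks'' conclusion of Lemma~\ref{lem:Mohar_planarly_nested} from the inner cycles $C_i$ down to a region of the surface controlled by $C_1$ and the attachment vertices of $H$. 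Once the slot is available, inserting $K$ and verifying that the result is a valid embedding of $G$ is routine.
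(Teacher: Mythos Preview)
The paper does not prove this lemma; it is quoted from Mohar \cite{Mohar_local_planarity}, so there is no in-paper proof to compare against. Your overall architecture --- reduce to $\eg(G')\ge g$, assume $g':=\eg(G')<g$, fix an optimal embedding $\psi$ of $G'$, and try to extend it to an embedding of $G$ on the same surface --- is a reasonable plan, and your preliminary observations (that $K$ is planar via $P_1$, and that the tail $C_2,\ldots,C_k$ remains planarly nested in $G'$) are correct.

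The gap is where you place it, but it is more serious than you suggest, and the patch you propose does not close it. First, normalization is \emph{not} a hypothesis of this lemma, so you cannot invoke it. Second, the obstruction is not that the feet of $H$ ``wind around both sides of $C_1$'' in $\psi$. In fact the full sequence $C_1,\ldots,C_k$ is planarly nested in $G'$ as well (in $G'$ the unique $C_1$-bridge is $H$ itself, and contracting it yields a cycle-plus-apex), so Lemma~\ref{lem:Mohar_planarly_nested} already forces $\psi(C_1)$ to bound a disk, and since $G'\setminus V(C_1)$ is connected it lies entirely on one side, leaving the other side as a single empty face $D_1$. The real problem is on the $K$ side: planarity of $P_1$ only tells you that $K$ has a planar drawing with some face containing the feet of $H$; it does \emph{not} say that $K$ has a planar drawing in which $C_1$ bounds a face. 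These differ because $K$ may carry several $C_1$-bridges whose attachment sets interleave on $C_1$, forcing them to opposite sides of $C_1$ in every planar drawing of $K$. Concretely, take $C_1=abcd$, let $H$ attach only at $a$, and let $K$ contain two further $C_1$-bridges $a\text{--}e\text{--}c$ and $b\text{--}f\text{--}d$: then $P_1$ is planar (place $w$ in any face incident to $a$), but $K$ has no planar drawing with $C_1$ facial, so you cannot glue $K$ into the empty disk $D_1$ along $C_1$. The lemma is still true in this example --- Euler genus is additive over blocks and $a$ is a cut vertex --- but the witnessing embedding of $G$ is not produced by your construction; one has to distribute the extra $C_1$-bridges among \emph{several} faces of $\psi$ incident to $C_1$, not into $D_1$ alone. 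Making that distribution work in general is exactly the content of Mohar's argument, and it requires the finer ``nested disks with controlled annular structure'' form of his planarly-nested theorem rather than the coarse statement of Lemma~\ref{lem:Mohar_planarly_nested} recorded here.
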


We can use our algorithm for computing a flat grid minor (lemma
\ref{lem:flat_grid}), to compute a planarly-nested sequence.  The next
lemma gives the precise statement.

\begin{lemma}[Computing a planarly nested sequence]\label{lem:computing_a_planarly_nested_sequence}
There exists a polynomial-time algorithm which
  given a graph $G$ of treewidth $t$, and maximum degree $\dmax$, and
  an integer $g\geq 1$,
  either correctly decides that $\eg(G)>g$, or it outputs a planarly
  nested sequence $C_1,\ldots,C_k$ in $G$, with
  $k=\Omega\left(\frac{t^{1/2}}{\dmax^{1/2} g^{7/2} \log^{15/4} n}\right)$.
\end{lemma}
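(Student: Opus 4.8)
The goal is to produce a planarly nested sequence of cycles of length $\Omega\!\left(\frac{t^{1/2}}{\dmax^{1/2} g^{7/2} \log^{15/4} n}\right)$, or else certify $\eg(G) > g$. The natural approach is to first invoke Lemma~\ref{lem:flat_grid} to obtain a flat subgraph $G' \subseteq G$ containing a $(\Omega(r) \times \Omega(r))$-grid minor $M$ with $r = \Omega\!\left(\frac{t^{1/2}}{\dmax^{1/2} g^{7/2} \log^{15/4} n}\right)$, together with an explicit minor mapping $\mu$ for $M$ (if instead the algorithm of Lemma~\ref{lem:flat_grid} reports $\eg(G) > g$, we simply pass that along). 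The idea then is that the concentric cycles of a grid naturally form a nested sequence, and flatness is exactly the topological condition needed to turn "nested" into "planarly nested."

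\textbf{Extracting the cycles.} Write $M$ as a $(k'\times k')$-grid with $k' = \Omega(r)$. The grid $M$ has $\lfloor k'/2 \rfloor$ pairwise disjoint concentric cycles $D_1 \supset D_2 \supset \cdots \supset D_{m}$, where $m = \lfloor k'/2 \rfloor$, with $D_j$ the boundary of the $j$-th ``ring'' from outside in; each $D_{j+1}$ lies in the bounded region enclosed by $D_j$, and $D_j$ together with everything it encloses forms a connected subgraph. Pulling these back through the minor mapping $\mu$, we obtain for each $D_j$ a cycle $C_j$ in $G'$ (more precisely, $\mu(D_j)$ contains a cycle homeomorphic to $D_j$; we can compute such a cycle in polynomial time from $\mu$). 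These $C_1,\dots,C_m$ are pairwise vertex-disjoint cycles in $G' \subseteq G$, and $m = \Omega(r)$, so the length bound is met. The main work is to verify the planarly-nested condition: for each $i$, there must be a $C_i$-component $H_i$ of $G$ with $H_1 \supset \cdots \supset H_m$, such that contracting $H_i$ (minus its feet) to a single vertex leaves a planar graph.

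\textbf{Using flatness.} This is where flatness of $G'$ is essential and is the step I expect to be the main obstacle — one must be careful that we are working inside $G'$ but need the condition to hold in $G$. Fix a planar drawing of $G'$ witnessing flatness; all edges of $G$ leaving $G'$ are incident to the outer face of this drawing. In this drawing, the cycle $C_i$ bounds a closed disk $\Delta_i$ (the ``inside'' — the side away from the outer face), and $\Delta_1 \supset \Delta_2 \supset \cdots \supset \Delta_m$. Let $H_i$ be the union of all $C_i$-components of $G$ drawn inside $\Delta_i$ together with... here one must argue that $H_i$, defined as the subgraph of $G$ drawn strictly inside $\Delta_i$ plus the feet on $C_i$, is in fact a single $C_i$-component (connectedness follows because the grid structure inside $\Delta_i$ is connected and, by flatness, no vertex strictly inside $\Delta_i$ has an edge of $G$ to a vertex outside $G'$). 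The nesting $H_1 \supset \cdots \supset H_m$ is then immediate from $\Delta_1 \supset \cdots \supset \Delta_m$, provided we discard a bounded number of the innermost cycles so that genuine strict containment holds (shrinking $m$ by a constant factor, which does not affect the asymptotic bound). Finally, contracting $H_i$ minus its feet to a single vertex leaves a graph that embeds in the plane: everything inside $\Delta_i$ is replaced by one vertex sitting in the disk, and everything outside $\Delta_i$ — including all the edges of $G$ leaving $G'$, which by flatness touch only the outer face — is unaffected and already planar. This gives the required planar quotient, completing the verification that $C_1,\dots,C_{m}$ (after the constant truncation) is a planarly nested sequence, and the algorithm is polynomial-time since every step (computing $G'$, $\mu$, the planar drawing, the concentric cycles, the contractions) is.
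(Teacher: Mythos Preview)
Your overall approach---invoke Lemma~\ref{lem:flat_grid}, then take concentric cycles of the grid and pull them back through the minor mapping---is exactly the paper's. But your verification of the planarly-nested condition has the orientation backwards, and this is a genuine error, not a cosmetic one.

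You index the cycles from the outside in ($D_1$ outermost, $\Delta_1$ the largest disk) and take $H_i$ to be the $C_i$-component lying \emph{inside} $\Delta_i$. Contracting that $H_i$ leaves everything \emph{outside} $\Delta_i$ together with one new vertex, and you then assert that this remainder ``is unaffected and already planar.'' It is not: the outside of $\Delta_i$ contains all of $G\setminus V(G')$, and $G$ has Euler genus up to $g$, so $G\setminus V(G')$ can perfectly well be non-planar. Flatness of $G'$ says only that edges leaving $G'$ attach to the outer face of the planar drawing of $G'$; it says nothing whatsoever about the planarity of the rest of $G$.

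The fix, and what the paper does, is to reverse the ordering: take $C_1$ to be the \emph{innermost} cycle, and let $H_i$ be the $C_i$-component containing $C_{i+1}$ (equivalently, containing the outer face of $G'$ and hence all of $G\setminus V(G')$). This $H_i$ is the \emph{outside} of $\Delta_i$, absorbing all the possibly non-planar part of $G$. After contracting $H_i$ minus its feet to a single vertex, what remains is the portion of $G$ drawn inside the disk $\Delta_i$---a subgraph of the flat planar $G'$---together with one new vertex attached along $C_i$; this is clearly planar. The nesting $H_1\supset H_2\supset\cdots\supset H_k$ holds because as $i$ increases the cycle $C_i$ moves outward, so its outside shrinks. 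With this correction your argument goes through and matches the paper's (which simply says the verification is ``straightforward'').
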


\iffull
\begin{proof}
By Lemma \ref{lem:flat_grid}, we can compute in polynomial time a flat subgraph $H\subset G$, containing a $\left(r \times r\right)$-grid minor $M$, for some $r = \Omega\left(\frac{t^{1/2}}{\dmax^{1/2} g^{7/2} \log^{15/4} n}\right)$, together with a minor mapping $f$ for $M$.

Since $H$ is flat, we can compute a planar drawing $\phi$ of $H$, such that all the edges in $E(V(H), V(G)\setminus V(H))$ are incident to the unbounded face of $\phi$.
The drawing $\phi$ induces a planar drawing $\phi'$ of $M$.
There exist a sequence of at least $k=\lfloor r/4 \rfloor$ pairwise disjoint cycles $C_1',\ldots,C_k'$ in $M$, such that for any $i\in \{1,\ldots,k\}$, the cycle $\phi'(C_i')$ bounds a disk ${\cal D}_i'$ in the plane, for any $j\in \{1,\ldots,k-1\}$, $\phi'(C_j') \subset {\cal D}_{j+1}'$, and there exists $x\in V(G)$, such that $\phi'(x)$ lies in the interior of ${\cal D}_1$.
For each $i\in \{1,\ldots,k\}$, we can find a cycle $C_i$ in $H$, such that $V(C_i) \subseteq f(V(C_i'))$.
It is straightforward to check that the sequence of cycles $C_1,\ldots,C_k$ is planarly nested.
\end{proof}
\fi


The next lemma gives a technical condition that follows by the properties of normalized graphs, and will be used by our algorithm for computing a universal path.

\begin{lemma}[Uniqueness of the interior]\label{lem:unique_interior}
  Let $G$ be a connected normalized graph, and let $C,F \subseteq G$
  be vertex-disjoint cycles.  Let $\phi,\phi'$ be drawings of $G$ into
  surfaces ${\cal S}$, and ${\cal S}'$ respectively.  Suppose that
  there exist disks ${\cal D}_C, {\cal D}_F \subset {\cal S}$, and
  ${\cal D}'_C, {\cal D}'_F \subset {\cal S}'$, with ${\cal D}_C
  \subset {\cal D}_F$, and ${\cal D}'_C \subset {\cal D}'_F$, such
  that $\partial {\cal D}_C = \phi(C)$, $\partial {\cal D}_F =
  \phi(F)$, $\partial {\cal D}'_C = \phi'(C)$, and $\partial {\cal
    D}'_F = \phi'(F)$.  Then, the set of vertices (resp.~edges) that
  lie inside ${\cal D}_C$ in the drawing $\phi$, is the same as the
  set of vertices (resp.~edges) that lie inside ${\cal D}'_C$ in the
  drawing $\phi'$, i.e.
$\phi^{-1}({\cal D}_C \cap \phi(G)) = \phi'^{-1}({\cal D}'_C \cap \phi'(G))$.
\end{lemma}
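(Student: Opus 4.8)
The plan is to show that the vertex/edge set enclosed by $\phi(C)$ inside ${\cal D}_C$ is forced by the combinatorial structure of $G$, so that it cannot depend on the particular drawing. First I would observe that since $C$ and $F$ are vertex-disjoint and ${\cal D}_C \subset {\cal D}_F$, the annulus ${\cal A} = {\cal D}_F \setminus \operatorname{int}({\cal D}_C)$ contains (a subdrawing of) a subgraph of $G$ that ``separates'' $C$ from $F$ topologically. Let $A_\phi$ denote the set of vertices and edges of $G$ drawn inside the open disk $\operatorname{int}({\cal D}_C)$ under $\phi$, and similarly $A_{\phi'}$ for $\phi'$. I want to argue $\phi^{-1}(A_\phi) = \phi'^{-1}(A_{\phi'})$ as subsets of $V(G) \cup E(G)$.

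The key step is to express membership in $A_\phi$ purely combinatorially using connectivity in $G$ together with the \emph{normalized} hypothesis. Define $G_C$ to be the subgraph of $G$ induced by $\phi^{-1}({\cal D}_C \cap \phi(G))$, i.e.\ $C$ together with everything $\phi$ draws inside. Because $\partial {\cal D}_C = \phi(C)$, no edge of $G$ crosses $\phi(C)$, so $C$ separates $G_C \setminus C$ from the rest of $G$; in particular $N_G(V(G_C)\setminus V(C)) \subseteq V(C)$. Now here is where normalization enters: if $H := G_C$ were a \emph{free} subgraph (a petal or clump) it would be planar with its (at most two) portals on the outer face, contradicting that $G$ is normalized --- \emph{unless} $G_C$ is the whole graph, or $G_C$ is ``large'' in the sense of containing $F$. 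Since $C$ and $F$ are disjoint and ${\cal D}_C \subset {\cal D}_F$, the cycle $F$ lies \emph{outside} ${\cal D}_C$, so $F \not\subseteq G_C$. Hence $G_C$ cannot be the trivial outside either. The point is that $G_C$ is determined as the unique subgraph $H \subseteq G$ with $C \subseteq H$, $F \not\subseteq H$, $N_G(V(H)\setminus V(C)) \subseteq V(C)$, and such that $H$ admits a planar drawing with $C$ bounding a face --- and in a normalized graph, the separation structure along the small cut $V(C)$ (or along the portal set, when $|V(C) \cap N_G(\cdot)|$ is small) leaves no ambiguity about which side of $C$ is the ``interior.'' I would make this precise by a short case analysis on $|V(C)|$ versus the number of $C$-components: there is exactly one $C$-component of $G$ that does not contain $F$ and is planar-attachable inside a disk bounded by $C$, and normalization (conditions (1)--(3) in the definition of \emph{Freedom}) rules out any second such component or any ``flip'' that would reassign vertices across $C$. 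The same characterization applies verbatim to $\phi'$, ${\cal D}'_C$, ${\cal D}'_F$, since it never refers to the drawing --- only to $G$, $C$, and $F$. Therefore $\phi^{-1}(A_\phi) = \phi'^{-1}(A_{\phi'})$, which is exactly the claimed equality $\phi^{-1}({\cal D}_C \cap \phi(G)) = \phi'^{-1}({\cal D}'_C \cap \phi'(G))$.

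I expect the main obstacle to be pinning down the precise combinatorial characterization of ``the interior side of $C$'' and verifying it is forced --- in particular handling the case where $C$ has few vertices and $G$ could a priori be glued to $C$ in topologically distinct ways (Whitney flips along a $2$-cut). Normalization is exactly the hypothesis designed to kill such flips, so the argument should go through, but I would need to carefully invoke the non-existence of free subgraphs: a would-be alternative interior would be a petal or clump, contradicting normalization, or it would force $F$ to lie inside ${\cal D}_C$, contradicting ${\cal D}_C \subset {\cal D}_F$ together with $C \cap F = \emptyset$. A secondary technicality is ensuring edges (not just vertices) are handled: an edge of $G$ lies inside ${\cal D}_C$ iff both its endpoints do and it is not one of the (at most two) chords that $C$ could route on either side --- but again, an edge with both endpoints on $V(C)$ that $\phi$ draws outside while $\phi'$ draws inside would create a free subgraph on one side, contradicting normalization. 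Assembling these observations gives the lemma.
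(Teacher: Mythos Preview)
Your strategy has a genuine gap. You try to pin down the interior of $C$ by arguing that normalization forbids any ambiguity: ``a would-be alternative interior would be a petal or clump.'' But the definition of a free subgraph requires at most two portals. A $C$-component $K$ attached to $C$ at three or more vertices is \emph{not} a petal or clump, so normalization says nothing about it directly, and your case analysis never explains why such a $K$ cannot lie inside ${\cal D}_C$ under $\phi$ and outside ${\cal D}'_C$ under $\phi'$. Likewise, your assertion that ``there is exactly one $C$-component of $G$ that does not contain $F$ and is planar-attachable inside a disk bounded by $C$'' is not a consequence of normalization: there can be several such components (chords of $C$, or components with $\geq 3$ feet), and the question is precisely why the partition of these components into inside/outside is the same in both drawings. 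Ruling out $2$-cut Whitney flips is not enough here; the obstruction to flipping a $\geq 3$-footed component across $C$ comes from how its feet interleave with the feet of the $C$-component containing $F$, and that is a global rigidity statement, not a local petal/clump exclusion.

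The paper closes exactly this gap by passing to an auxiliary graph. It restricts to the $F$-component $H$ of $G$ containing $C$ (both $X$ and $X'$ lie in $H$ since any path from the interior of ${\cal D}_C$ to $F$ must cross $C$), forms $G'=H\cup F$, and then builds $G''$ from $G'$ by adding a cone vertex $f$ adjacent to every vertex of $F$. The cone vertex forces $F$ to bound a face, so both $\phi|_{{\cal D}_F}$ and $\phi'|_{{\cal D}'_F}$ extend to planar drawings of $G''$. The normalized hypothesis is then used, not to exclude petals/clumps inside $C$, but to show that after suppressing degree-$2$ vertices $G''$ becomes $3$-connected; Whitney's theorem now gives a combinatorially unique planar embedding of $G''$, and hence $X=X'$. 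The point is that normalization is applied to get $3$-connectivity of the whole auxiliary graph, which then handles all $C$-components at once --- including those with $\geq 3$ feet --- rather than component by component as you attempt.
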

\iffull
\begin{proof}
Let
\[
X = \phi^{-1}({\cal D}_C \cap \phi(G)),
\]
and 
\[
X' = \phi'^{-1}({\cal D}'_C \cap \phi'(G)).
\]
We need to show that $X=X'$.  Let $H$ be the $F$-component of $G$
containing $C$.  Since $G$ is connected, and $C$ separates both $X$,
and $X'$ from $F$, we have $X \subseteq H$, and $X'\subseteq H$.  Let
$G'=H\cup F$.  Let also $G''$ be the graph obtained from $G'$ by
adding a new vertex $f$, and connecting it to every vertex in $V(F)$.
Clearly, $G''$ is a planar graph.

Let $\Gamma''$ be the graph obtained from $G''$ by replacing all
maximal induced paths by edges.  We argue that $\Gamma''$ is
3-connected.  For the sake of contradiction suppose that there exists
a separator $Z\subset V(\Gamma'')$, with $|Z|\leq 2$.  Observe that if
$f\notin Z$, then some $Z$-component of $\Gamma''$ induces a free
subgraph in $G''$, which much also be a free in $G$, which contradicts
the fact that $G$ is normalized.  Therefore, we may assume that $f\in
Z$.  Let $Z=\{f,f'\}$.  Since $f$ is connected to every vertex in
$V(F)$, and $F$ is a connected subgraph, it follows that there exists
a component of $\Gamma'' \setminus Z$ that contains $F$.  This implies
however that $\{f'\}$ is also a separator in $\Gamma''$.  It follows
that some $\{f'\}$-component must be free in $\Gamma''$, and therefore
also in $G$, which again contradicts the fact that $G$ is normalized.
We have thus established that $G''$ is a subdivision of a 3-connected
graph.  The restrictions of $\phi$, and $\phi'$ on ${\cal D}_F$, and
${\cal D}_F'$ respectively, can be extended to drawings $\psi$, and
$\psi'$ of $G''$.  Since $G''$ is the subdivision of a planar graph,
it follows that it admits a combinatorially unique planar drawing.
This implies that $X=X'$, which concludes the proof.
\end{proof}
\fi

We are now ready to give our algorithm for computing a universal
patch.  This is the main result of this section.

\begin{proof}[\proofof Lemma \ref{lem:computing_a_universal_patch}]
  By Lemma \ref{lem:computing_a_planarly_nested_sequence} there exists
  a universal constant $\beta$, such that we can compute a planarly
  nested sequence $C_1,\ldots,C_k$ in $G$, for some $k \geq \beta
  \frac{t^{1/2}}{\dmax^{1/2} g^{7/2} \log^{15/4} n}$.  For a sufficiently large
  constant $\alpha>0$, we get $k\geq g+3$.  Let $\phi$ be a drawing of
  $G$ into a surface ${\cal S}$ of Euler genus $g$.  By Lemma
  \ref{lem:Mohar_planarly_nested} we have that the cycles
  $\phi(C_1),\phi(C_2),\phi(C_3)$ bounds disks ${\cal D}_1, {\cal
    D}_2, {\cal D}_3$ respectively, with
\[
{\cal D}_1 \subset {\cal D}_2 \subset {\cal D}_3.
\]

We now define a sequence of cycles $\Psi^{(0)},\ldots,\Psi^{(r)}$,
where $\Psi^{(0)}=C_2$, and $\Psi^{(r)}$ will be the desired cycle
$C$.  For every $i\in \{1,\ldots,r\}$, let $H^{(i)}$ be the connected
connected of $G \setminus \Psi^{(i)}$ containing $C_3$.  Let also
$W^{(i)} = \Psi^{(i)} \cup H^{(i)}$, and $X^{(i)} = G \setminus
W^{(i)}$.  Suppose we are given $\Psi^{(i)}$.  If the graph $W^{(i)}$
is normalized, then we set $r=i$, and therefore $C = \Psi^{(i)}$.
Otherwise, we proceed to define $\Psi^{(i+1)}$.  Since $W^{(i)}$ is
not normalized, this means that there exists a free vertex-induced
subgraph $Q^{(i)}$ of $W^{(i)}$.  Since $G$ is normalized, it follows
that $Q^{(i)}$ is not free in $G$.  Therefore, $V(Q^{(i)})\cap
V(\Psi^{(i)}) \neq \emptyset$.

We first argue that $Q^{(i)}$ cannot be a petal in $W^{(i)}$.  Suppose
that, to the contrary, $Q^{(i)}$ is a petal, with portal some vertex
$t$.  If $t\in V(\Psi^{(i)})$, then all edges between $V(Q^{(i)})$,
and $V(G\setminus Q^{(i)})$ in $G$, must be incident to $t$.  This
implies that $Q^{(i)}$ is also a vertex-induced subgraph of $G$, and
therefore also a petal in $G$, which contradicts the fact that $G$ is
normalized.  Therefore, we must have $t\in V(H^{(i)})$.  If
$V(Q^{(i)}) \cap V(\Psi^{(i)}) = \emptyset$, then clearly $Q^{(i)}$ is
a petal in $G$, which, again, is a contradiction.  Thus, it must be
$V(Q^{(i)}) \cap V(\Psi^{(i)}) \neq \emptyset$.  Since $\Psi^{(i)}$ is
a 2-connected subgraph, and $Q^{(i)}$ is a petal in $W^{(i)}$, it
follows that $\Psi^{(i)} \subseteq Q^{(i)}$.  Observe that
$\Psi^{(i)}$ bounds a disk in $\phi$, and therefore separates
$X^{(i)}$ in $G$.  This implies that $Q^{(i)} \cup X^{(i)}$ is a petal
in $G$, which again contradicts the fact that $G$ is normalized.  This
completes the proof that $Q^{(i)}$ cannot be a petal in $W^{(i)}$.

Therefore, $Q^{(i)}$ is a clump in $W^{(i)}$.  Let $t,t'$ be the
portals of $Q^{(i)}$.  If none of the portals are in $V(\Psi^{(i)})$,
then by arguing as above we deduce that $Q^{(i)}$ must be a clump in
$G$, which contradicts the fact that $G$ is normalized.  Thus, we may
assume $t \in V(\Psi^{(i)})$.  If $t'\notin V(\Psi^{(i)})$, then it
must be $t' \in V(H^{(i)})$.  But in this case, we again conclude that
$Q^{(i)}$ must be a clump in $G$, which is a contradiction.
Therefore, we must have $t,t'\in V(\Psi^{(i)})$.

We argue that $V(Q^{(i)})\cap V(C_3) = \emptyset$.  Suppose to the
contrary that $V(Q^{(i)})\cap V(C_3) \neq \emptyset$.  Since both
portals of $Q^{(i)}$ are in $V(\Psi^{(i)})$, it follows that $H^{(i)}
= Q^{(i)}$, which contradicts the fact that $Q^{(i)}$ is planar.  This
establishes that $V(Q^{(i)})\cap V(C_3) = \emptyset$.

Since $V(Q^{(i)})\cap V(C_3) = \emptyset$, this means that
$\phi(Q^{(i)}) \subset {\cal D}_3$.  Therefore, the drawing $\phi$,
induces a planar drawing $\psi^{(i)}$ of $Q^{(i)}$.  The outer face of
$\psi^{(i)}$ consists of two walks $L^{(i)}_1$, and $L^{(i)}_2$
between $t$, and $t'$.  Let $K^{(i)}_1$, $K^{(i)}_2$ be the two paths
in $\Psi^{(i)}$ between $t$, and $t'$.  There exists $l\in \{1,2\}$,
$k\in \{1,2\}$, such that the cycle $\phi(L^{(i)}_l \cup K^{(i)}_k)$
bounds a disk ${\cal F}^{(i)}\subset {\cal S}$, with
$\phi(L^{(i)}_{1-l} \cup K^{(i)}_{1-k}) \subset {\cal F}^{(i)}$.  We
can now define $\Psi^{(i+1)} = L^{(i)}_l \cup K^{(i)}_k$.
\begin{center}
\ifabstract
\scalebox{0.44}{\includegraphics{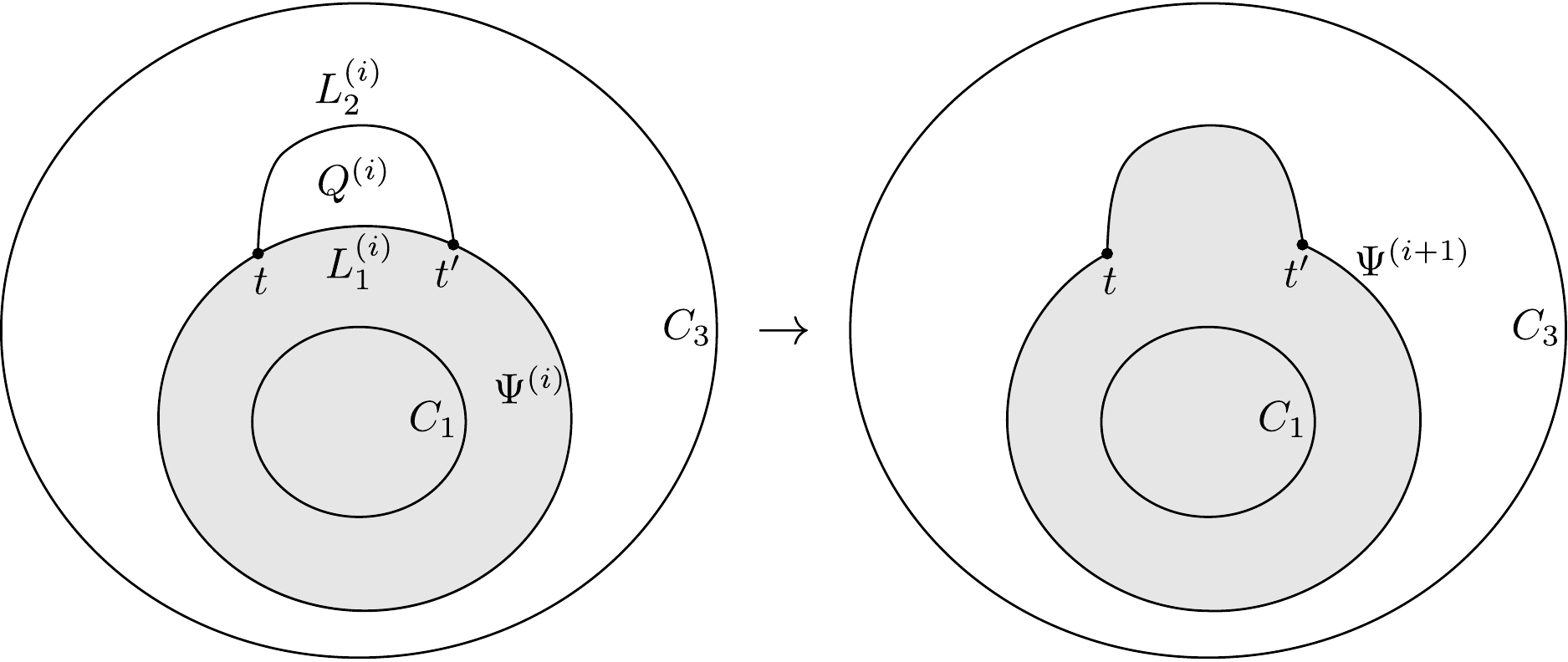}}
\fi
\iffull
\scalebox{0.6}{\includegraphics{figs/computing_a_universal_patch}}
\fi
\end{center}
This completes the definition of the sequence $\Psi^{(0)},\ldots,\Psi^{(t)}$.

We first show that the sequence terminates at a finite $t\geq 0$.
Assume $t>0$, since otherwise there is nothing to show.  For every
$i\in \{0,\ldots,t-1\}$, we have $X^{(i+1)} \supseteq X^{(i)}$.
Moreover, since $Q^{(i)}$ is free, it cannot consist of a single path
between its terminals.  Therefore, there exists at least one vertex $x
\in L^{(i)}_{1-l} \setminus L^{(i)}_l$.  Therefore, $x \in X^{(i+1)}$,
and $x\notin X^{(i)}$, which implies that $X^{(i+1)} \supsetneq
X^{(i)}$.  This implies that the sequence
$\Psi^{(0)},\Psi^{(1)},\ldots$, terminates at some $\Psi^{(t)}$, for
some finite $t\geq 0$.

We can now define $X = X^{(t)} \cup \Psi^{(t)}$.  We have already
shown that $G\setminus (X\setminus C)$ is normalized.  It remains to
show that $(X,C)$ is a universal patch in $G$.  Let $\phi'$ be a
drawing of $G$ into a surface ${\cal S}'$ of Euler genus $g$.  The
sequence of cycles $C_1,C,C_3,\ldots,C_k$ is planarly nested, and
therefore by Lemma \ref{lem:Mohar_planarly_nested} we have that the
cycle $\phi'(C)$ bounds a disk ${\cal D}'_C \subset {\cal S}'$.
Similarly, the cycle $\phi'(C_3)$ bounds a disk ${\cal D}_F \subset
{\cal S}'$.  By Lemma \ref{lem:unique_interior}, setting $F=C_3$, we
have that in any such drawing $\phi'$, the sets of vertices, and edges
of $G$ that are mapped inside ${\cal D}'_C$, are uniquely determined.
This proves condition (1) of Definition \ref{defn:universal_patch}.
Condition (2) follows directly by applying Lemma
\ref{lem:planarly_nested_genus} on the planarly nested sequence
$C,C_3,\ldots,C_k$.  Therefore, $(X,C)$ is a universal patch, as
required.  This concludes the proof.
\end{proof}

\iffull

\section{Approximating the orientable genus}\label{sec:orientable_genus}

In this section we give an algorithm for approximating the
\emph{orientable} genus of a graph.  The main idea is the following.
Let $G$ be the given graph of maximum degree $\dmax$, and let
$g=\genus(G)$.  We first use the algorithm from Theorem \ref{thm:main}
to compute a drawing of $G$ into a surface of Euler genus
$g'=\dmax^{O(1)} g^{O(1)} \log^{O(1)} n$.  If the resulting surface is
orientable, then we are done.
\footnote{
We remark that there is a simple polynomial-time algorithm for testing orientability.
Since we are dealing with cell-embeddings, this amounts to deciding whether there exists a way of orienting every face, so that the resulting orientations are consistent along edges.
This can be done by first orienting an arbitrary face, and then continuing to adjacent faces.
The algorithm terminates either when a globally-consistent orientation is found, certifying that the surface is orientable, or when we reach a face that cannot be consistently oriented (w.r.t.~its adjacent faces), certifying that the surface contains a M\"{o}bius band, and is therefore nonorientable.
}
Otherwise, we argue that there exists a
set of $g^{O(1)}$ vertices, whose removal decreases the genus of the
current surface.  More precisely, we show that if this is not the
case, then there exists a large ``M\"{o}bius'' grid minor, i.e.~a
graph ``densely'' embedded into the M\"{o}bius band (to be formally
defined later).  Such a minor has large orientable genus, which leads
to a contradiction.  After repeating this procedure at most $O(g')$
times, we arrive at a drawing of a subgraph $G'\subseteq G$, into some
orientable surface of genus at most $g'$.  We extend this drawing to a
drawing of $G$, simply by adding one handle for every removed edge.
This results into the desired drawing.

We begin by recalling some standard definitions, capturing the notion
of a ``dense'' embedding.  Let $G$ be a graph, and let $\phi$ be a
drawing of $G$ into a surface ${\cal S}$.  A \emph{noose} (in $\phi$)
is a loop in ${\cal S}$, meeting $\phi(G)$ only on $\phi(V(G))$.  The
\emph{length} of a noose $\gamma$ is defined to be
\[
\len(\gamma) = | \{v \in V(G) : \phi(v) \in \gamma\} |.
\]
The \emph{representativity} of $\phi$ is defined to be the smallest length of all noncontractible nooses in $\phi$.
In a similar vain, we also say that a curve $\xi$ with distinct endpoints, is a \emph{chain} (in $\phi$), if it meets $\phi(G)$ only on $\phi(V(G))$, and both its endpoints are in $\phi(V(G))$.
The \emph{length} of a chain $\xi$ is defined to be
\[
\len(\xi) = | \{v \in V(G) : \phi(v) \in \xi\} | - 1.
\]
We remark that the length of a chain is always non-negative, and it is zero if and only if the chain consists of a single point in $\phi(V(G))$.

The following result by Fiedler et al.~\cite{JGT:JGT3190200305}, gives
an obstruction to orientable genus, in terms of the representativity
of projective graphs.

\begin{lemma}[Fiedler et
  al.~\cite{JGT:JGT3190200305}]\label{lem:projective_genus}
  Let $G$ be a graph drawn into the projective plane, with
  representativity $\rho\neq 2$.  Then $\genus(G) = \left\lfloor
    \rho/2 \right\rfloor$.
\end{lemma}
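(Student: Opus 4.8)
The plan is to establish the two inequalities $\genus(G)\le\lfloor\rho/2\rfloor$ and $\genus(G)\ge\lfloor\rho/2\rfloor$ separately; the first is a direct construction and the second is where essentially all the work lies. For the upper bound I would fix the given projective embedding $\phi$ together with a noncontractible noose $\gamma$ of minimum length $\rho$, chosen to pass through vertices $v_1,\dots,v_\rho$ of $G$ in this cyclic order and to meet $\phi(G)$ only at those vertices. Since every noncontractible simple closed curve in the projective plane is one-sided, cutting the surface along $\gamma$ produces a disk $\Delta$, and the cut graph $G'$ is drawn in $\Delta$ with each $v_i$ split into two copies lying on $\partial\Delta$ in the characteristic doubled cyclic order $v_1^{(1)},\dots,v_\rho^{(1)},v_1^{(2)},\dots,v_\rho^{(2)}$.

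Recovering $G$ from $G'$ amounts to identifying $v_i^{(1)}$ with $v_i^{(2)}$ for every $i$. Viewing $\Delta$ as a sub-disk of the sphere, I would perform these identifications in an orientation-preserving way by adding one handle per consecutive pair $\{v_{2k-1},v_{2k}\}$: a handle joining the boundary arc between $v_{2k-1}^{(1)}$ and $v_{2k}^{(1)}$ to the boundary arc between $v_{2k-1}^{(2)}$ and $v_{2k}^{(2)}$ allows two parallel, non-crossing edges $v_{2k-1}^{(1)}v_{2k-1}^{(2)}$ and $v_{2k}^{(1)}v_{2k}^{(2)}$ to be routed through it, and these are then contracted. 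When $\rho$ is odd, the single leftover pair $v_\rho^{(1)},v_\rho^{(2)}$ still lies on a common face, so it can be identified with no further handle. Altogether this uses exactly $\lfloor\rho/2\rfloor$ handles, and since edge contractions never raise the genus, one obtains $\genus(G)\le\lfloor\rho/2\rfloor$; the cases $\rho\le 1$ are subsumed ($\gamma$ then bounds a disk on one side after at most one identification).

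For the lower bound I would invoke the structure theory of projective embeddings of large face-width: an embedding in the projective plane of face-width at least $\rho$ forces $G$ to contain, as a minor, a ``projective $\rho$-grid'' $R_\rho$ consisting of $\rho$ nested essential cycles around the crosscap together with a full set of rungs connecting consecutive ones. Since orientable genus is minor-monotone, it then suffices to prove $\genus(R_\rho)\ge\lfloor\rho/2\rfloor$. This I would attempt by a refined Euler-formula count: in any orientable embedding of $R_\rho$ the $\rho$ concentric essential cycles, being pairwise linked through the rungs, cannot all bound disjoint disks, and one bounds the number of faces using the girth together with the fact that only a bounded number of these essential cycles can be facial, forcing at least $\lfloor\rho/2\rfloor$ independent handles. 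This genus computation for $R_\rho$ is the main obstacle: a naive Euler bound using only the girth is far too weak, so the argument must exploit the nested Möbius structure of $R_\rho$ directly (or proceed by induction on $\rho$, peeling off a ``thick'' essential annulus and re-using the $\rho=3$ base case $R_3=K_{3,3}$). Finally, the hypothesis $\rho\neq 2$ is needed only for this direction: for $\rho=2$ the purported minor degenerates and $G$ may in fact be planar, so the lower bound genuinely fails, while the construction above still yields $\genus(G)\le 1=\lfloor 2/2\rfloor$.
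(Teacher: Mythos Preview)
This lemma is not proved in the paper; it is quoted from Fiedler, Huneke, Richter and Robertson and used as a black box, so there is no in-paper argument to compare against. Your upper bound is essentially correct: cutting along a shortest noncontractible noose yields a disk with boundary copies in the cyclic order $v_1^{(1)},\dots,v_\rho^{(1)},v_1^{(2)},\dots,v_\rho^{(2)}$, and the associated one-vertex rotation system $(1,\dots,\rho,1',\dots,\rho')$ has orientable genus exactly $\lfloor\rho/2\rfloor$, whence $\genus(G)\le\lfloor\rho/2\rfloor$ after contracting the added edges.

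The lower bound is where the content lies, and here there is a genuine gap. First, the object $R_\rho$ you describe cannot exist as stated: any two disjoint noncontractible simple closed curves in the projective plane must intersect (cutting along one already yields a disk), so there are no ``$\rho$ nested essential cycles around the crosscap''. Second, and decisively, you never compute $\genus(R_\rho)$ --- you call it ``the main obstacle'' and only gesture at an Euler count or an induction. The base case $R_3=K_{3,3}$ suggests you have the M\"obius ladder $M_\rho$ in mind, but $\genus(M_\rho)=1$ for \emph{every} $\rho\ge 3$ (all M\"obius ladders embed in the torus), so that minor gives only $\genus(G)\ge 1$. If instead $R_\rho$ is meant to be a full projective grid, you face a circularity: it is itself projective-planar of face-width $\Theta(\rho)$, so determining its orientable genus is an instance of the very inequality you are trying to prove --- indeed, the paper's Corollary deducing $\genus$ of the M\"obius grid \emph{from} this lemma is exactly that implication, run in the direction opposite to the one you need. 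The actual Fiedler et al.\ argument does not go through a grid minor; it analyses directly how cycles that are one-sided in the projective embedding must sit in an arbitrary orientable embedding of $G$.
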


We now define the graph which we will use as an obstruction to the
orientable genus of graphs drawn into a nonorientable surface.

\begin{definition}[M\"{o}bius grid]
  Let $k,l \geq 1$.  Let $G$ be the $(k\times l)$-grid.  For any $i\in
  \{1,\ldots,k\}$, $j\in \{1,\ldots,l\}$, let $v_{i,j}$ be the vertex
  at the $k$-th row, and $l$-th column of $G$.  Let $H$ be the graph
  obtained from $G$ by adding for every $j\in \{1,\ldots,l\}$, the
  edge $\{v_{1,l}, v_{k,l-j}\}$.  We call $H$ the \emph{$(k\times
    l)$-M\"{o}bius grid}.\footnote{The $(k \times l)$-M\"{o}bius grid is sometimes also called \emph{$(l \times l \times k)$-projective grid} (see \cite{DBLP:journals/jgt/Randby97})
}
\end{definition}

Lemma \ref{lem:projective_genus} implies the following.

\begin{corollary}\label{cor:mobius_genus}
  Let $k\geq 3$, and let $G$ be the $(k\times k)$-M\"{o}bius grid.
  Then, $\genus(G) = \Omega(k)$.
\end{corollary}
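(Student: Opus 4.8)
The plan is to realize $G$ (the $(k\times k)$-Möbius grid) as a graph embedded in the projective plane $N$ with representativity $\rho\ge k$, and then apply Lemma~\ref{lem:projective_genus}: since $\rho\ge k\ge 3\ne 2$, it gives $\genus(G)=\lfloor\rho/2\rfloor\ge\lfloor k/2\rfloor=\Omega(k)$. So the whole task is the representativity bound for the natural embedding.

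The natural embedding is most conveniently described via the orientation double cover. Let $W=P_k\times C_{2k}$ be the $(k\times 2k)$-cylinder, drawn on the sphere $S^2$ in the standard cylindrical way: its $k$ copies of $C_{2k}$ are nested circles, and its $2k$ copies of $P_k$ are the pairwise-disjoint radial ``rungs'', each a meridian joining the innermost circle to the outermost one. The map $\tau:W\to W$ that rotates $C_{2k}$ by a half-turn and simultaneously reverses $P_k$ is a fixed-point-free involution; it extends to a fixed-point-free (hence orientation-reversing) involution of $S^2$, it interchanges the two ``polar'' disk faces $F_1$ (inside the innermost circle) and $F_2$ (outside the outermost circle), and the quotient $S^2/\tau$ is the projective plane $N$ with $W/\tau$ being (isomorphic to) the $(k\times k)$-Möbius grid $G$. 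Write $p:S^2\to N$ for the quotient map; it is the orientation double cover, $p^{-1}(G)=W$, and $p(F_1)=p(F_2)=:F$ is a single disk face of $G$.

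Now let $\gamma$ be a non-contractible noose of minimum length $\rho=\len(\gamma)$. By a standard innermost-arc exchange --- if a sub-arc of $\gamma$ lies in the disk $F$, replace it by an arc with the same (necessarily vertex) endpoints running close to $\partial F$ inside $F$; this keeps $\gamma$ a non-contractible noose, does not increase $\rho$, and reduces the number of arcs of $\gamma$ inside $F$ --- we may assume $\gamma\cap F=\emptyset$. Then $\widetilde\gamma:=p^{-1}(\gamma)$ is disjoint from $F_1\cup F_2$; since $\gamma$ is essential, $\widetilde\gamma$ is a single simple closed curve (the restriction of $p$ over $\gamma$ is the connected double cover of a circle), and it is $\tau$-invariant, so it splits $S^2$ into two disks that $\tau$ interchanges. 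If $F_1$ and $F_2$ were on the same side of $\widetilde\gamma$, applying $\tau$ would place both on the other side as well, which is absurd; hence $\widetilde\gamma$ separates $F_1$ from $F_2$. Therefore each of the $2k$ rungs of $W$ --- each joining $\partial F_1$ to $\partial F_2$ --- must cross $\widetilde\gamma$, and since $\widetilde\gamma$ meets $W$ only in vertices it meets each rung in a vertex. The rungs being pairwise disjoint, $\widetilde\gamma$ passes through at least $2k$ vertices of $W$; these are precisely the two preimages of each of the $\rho$ vertices of $G$ met by $\gamma$, so $2\rho\ge 2k$, i.e. $\rho\ge k$. Feeding this into Lemma~\ref{lem:projective_genus} gives $\genus(G)=\lfloor\rho/2\rfloor\ge\lfloor k/2\rfloor=\Omega(k)$, as required.

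The main obstacle is the topological bookkeeping of the second and third paragraphs: checking that the quotient above really reproduces the Möbius grid with the intended embedding, and that the lifted noose is a single $\tau$-invariant simple closed curve separating the two polar faces. These steps are routine but must be carried out with care; an alternative that sidesteps them is to invoke the classical fact that the projective grid of order $k$ has face-width $\Theta(k)$ (see, e.g., \cite{DBLP:journals/jgt/Randby97}), or to observe directly that the $(k\times k)$-Möbius grid cannot be made planar by deleting fewer than $\Omega(k)$ vertices, which (since a non-contractible noose of length $\rho$ exhibits a planarizing set of size $\rho$) lower-bounds the representativity of every embedding of $G$ into $N$.
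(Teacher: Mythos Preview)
Your approach is exactly the paper's: it states the corollary as an immediate consequence of Lemma~\ref{lem:projective_genus} (Fiedler et al.), implicitly using that the natural projective embedding of the $(k\times k)$-M\"obius grid has representativity $\Theta(k)$. You go further and try to prove that representativity bound from scratch via the double cover, which is a reasonable thing to do.

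There is, however, a genuine gap in your ``push $\gamma$ off $F$'' step. You write that an arc of $\gamma$ inside $F$ can be replaced by an arc ``running close to $\partial F$ inside $F$'' and that this ``reduces the number of arcs of $\gamma$ inside $F$''. But an arc close to $\partial F$ \emph{inside} $F$ is still inside $F$, so nothing is reduced; and if you instead push the arc across $\partial F$ into the adjacent $4$-gon faces, the new arc must pass through every vertex of $\partial F$ between the two endpoints, which strictly increases the length of the noose. So you cannot in general assume a shortest non-contractible noose avoids $F$, and without that your separation argument for the rungs collapses (the lifted curve $\widetilde\gamma$ may cross each extended rung only inside $F_1\cup F_2$, yielding no graph-vertex intersections).

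The fix is painless: use the $k$ concentric circles $C_1,\dots,C_k$ of the cylinder instead of the rungs. One checks that if $\widetilde\gamma$ were disjoint from some $C_i$, then $\widetilde\gamma\subset A_i$ (the disk bounded by $C_i$ containing $F_1$) or $\widetilde\gamma\subset B_i$; applying $\tau$ forces $\widetilde\gamma\subset A_i\cap B_{k+1-i}=\emptyset$, a contradiction. Hence $\widetilde\gamma$ meets every $C_i$ in at least one vertex, giving $2\rho\ge k$ and thus $\rho\ge k/2$, which is all you need for $\genus(G)=\lfloor\rho/2\rfloor=\Omega(k)$. This argument requires no hypothesis on $\gamma\cap F$. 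Alternatively, as you note yourself, simply citing the known face-width of the projective grid (e.g.\ \cite{DBLP:journals/jgt/Randby97}) is entirely in keeping with how the paper treats this corollary.
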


For two loops $\gamma, \delta$ in some surface, and an integer $t$, we
say that $\gamma$ is \emph{$t$-freely homotopic} to $\delta$, if
$\gamma$ is homotopic to $\delta^t$, where $\delta^t$ denotes the loop
obtained by concatenating $t$ copies of $\delta$.

The following result is due to Brunet et.~al
\cite{DBLP:journals/jct/BrunetMR96}.  The precise formulation cited
here, is implicit in their proof (proof of theorem 6.1 in
\cite{DBLP:journals/jct/BrunetMR96}, with inductive invariant in step
III.A).

\begin{lemma}[Brunet, Mohar, and Richter \cite{DBLP:journals/jct/BrunetMR96}]\label{lem:brunet_orientation-reversing}
  Let $G$ be a graph, and let $\phi$ be a drawing of $G$ into a
  nonorientable surface ${\cal S}$, with representativity $\rho$.  Let
  $\gamma$ be an orientation-reversing noose, of minimum length.
  Then, $G$ contains a set of $k=\lfloor (\rho-1)/4 \rfloor$ disjoint
  pairwise homotopic cycles $C_1,\ldots,C_k$, satisfying the following
  conditions:
\begin{description}
\item{(1)} For any $i\in \{1,\ldots,k\}$, we have $\phi(C_i)\cap
  \gamma = \emptyset$, and $\phi(C_i)$ is traversed by a loop 2-freely
  homotopic to $\gamma$.
\item{(2)} For any $i\in \{1,\ldots,k\}$, for every $v\in V(C_i)$,
  there exists $w\in V(G)$, with $\phi(w)\in \gamma$, and a chain
  $\xi$ with endpoints $\phi(v)$, and $\phi(w)$, such that $\len(\xi)
  \leq i+1$.
\end{description}
\end{lemma}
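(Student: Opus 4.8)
The plan is to reconstruct the layered construction of Brunet, Mohar and Richter \cite{DBLP:journals/jct/BrunetMR96}, which grows a nested family of concentric cycles around $\gamma$ by a breadth-first search in the ``face metric'' of the drawing. First I would record the topological setup: since $\gamma$ is orientation-reversing, a regular neighborhood $M$ of $\gamma$ in $\mathcal{S}$ is a M\"{o}bius band, its core is freely homotopic to $\gamma$, and $\partial M$ is a two-sided simple closed curve that is $2$-freely homotopic to $\gamma$ (one traversal of $\partial M$ is the core ``done twice''). Also, being orientation-reversing, $\gamma$ is noncontractible, so $\len(\gamma)\ge\rho$. The target cycles $C_1,\dots,C_k$ will be simple cycles of $G$ lying successively farther from $\gamma$ but still inside a fattened copy of $M$, so that each $\phi(C_i)$ is forced to be homotopic to $\partial M$, i.e.\ $2$-freely homotopic to $\gamma$; this yields the homotopy assertion in (1), while $\phi(C_i)\cap\gamma=\emptyset$ holds by construction since every $C_i$ lies strictly outside $M$.

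Concretely, let $W=\{w\in V(G):\phi(w)\in\gamma\}$, and for $v\in V(G)$ let $d(v)$ be the least $\len(\xi)$ over chains $\xi$ joining $\phi(v)$ to a point of $\phi(W)$ --- this counts the vertices one meets on the cheapest walk from $v$ to $\gamma$ along the drawing. For $i\ge 0$ let $R_i\subseteq\mathcal{S}$ be the smallest subsurface containing $M$ together with all of $G$ within face-distance $i$ of $\gamma$; the $R_i$ are nested compact neighborhoods of $\gamma$ with $R_0=M$. As long as $R_i$ is still a M\"{o}bius band (just ``fatter'' than $M$, with disk-like bumps glued along its boundary), the inclusion $M\hookrightarrow R_i$ is a homotopy equivalence, so its unique boundary curve $\beta_i:=\partial R_i$ is homotopic to $\partial M\sim\gamma^2$. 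I would then read off $\beta_i$ a closed walk of $G$ running alongside it, extract from it a simple cycle $C_i$ still homotopic to $\gamma^2$, and observe that every vertex of $C_i$ sits at face-distance at most $i+1$ from $\gamma$ --- which is exactly condition (2), the ``$+1$'' absorbing the step from ``a point on $\beta_i$'' to ``a nearby vertex of $G$''. Choosing the retained cycles to occur at strictly increasing face-distances keeps the $C_i$ pairwise vertex-disjoint.

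The only genuinely delicate point --- and the place where both the minimality of $\gamma$ and the value of $\rho$ enter --- is the count: that this process yields at least $k=\lfloor(\rho-1)/4\rfloor$ cycles before $R_i$ stops being a M\"{o}bius band. When $R_i$ first fails to be a M\"{o}bius band, some face or edge added at step $i$ ``encloses'' extra topology, and from it one manufactures a noose crossing $R_i$ roughly ``along two radii'', of length $O(i)$: this noose is either noncontractible, which contradicts $\len(\gamma)\ge\rho$ once $i<\rho/4$, or it is orientation-reversing and strictly shorter than $\gamma$, contradicting the choice of $\gamma$. The constant $4$ is the combined overhead of the factor $2$ coming from $\gamma^2$ being a double traversal and the fact that such a crossing noose consumes about two ``radii'' of $R_i$. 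I expect that pinning this estimate down cleanly, together with the careful verification that each extracted cycle really is homotopic to $\gamma^2$ and the translation of face-distance into the chain bound in (2), is the bulk of the work; this is precisely the content of the inductive invariant maintained in step III.A of the proof of Theorem~6.1 of \cite{DBLP:journals/jct/BrunetMR96}, whose bookkeeping I would follow.
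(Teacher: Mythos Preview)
The paper does not prove this lemma at all: it is quoted verbatim as a result of Brunet, Mohar, and Richter \cite{DBLP:journals/jct/BrunetMR96}, with the remark that ``the precise formulation cited here is implicit in their proof (proof of Theorem~6.1 in \cite{DBLP:journals/jct/BrunetMR96}, with inductive invariant in step~III.A).'' There is therefore no in-paper proof to compare your proposal against.

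Your sketch is a reasonable high-level reconstruction of the layered BFS argument from \cite{DBLP:journals/jct/BrunetMR96}, and you yourself note at the end that you are following the bookkeeping of step~III.A there. Since the present paper treats the statement as a black box, the appropriate thing to do here is simply to cite it, as the authors do, rather than to supply a proof.
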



The following lemma is the main technical ingredient required by our
approximation algorithm of the orientable genus.  It shows that any
graph drawn into a nonorientable surface of nonorientable genus at
least 2, and with large enough representativity, contains a large
M\"{o}bius grid minor.  This fact appears to be understood by certain
experts.  We give a formal proof for completeness.  Perhaps something
less obvious is that we actually obtain a polynomial dependence of the
size of the M\"{o}bius grid minor, in terms of representativity.  This
polynomial dependence is necessary for our application.

\begin{lemma}\label{lem:mobius_grid_minor}
  Let $G$ be a graph, and let $\phi$ be a drawing of $G$ into a
  nonorientable surface ${\cal S}$, of nonorientable genus at least
  $2$.  Let $\rho>10$ be the representativity of $\phi$.  Then, $G$
  contains a $(r \times r)$-M\"{o}bius grid minor, for some
  $r=\Omega(\sqrt{\rho})$.
\end{lemma}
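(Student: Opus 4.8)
The plan is to locate a large family of nested cycles around an orientation-reversing noose via Lemma~\ref{lem:brunet_orientation-reversing}, and then to ``thicken'' this family into a M\"{o}bius grid minor by adding radial paths that close up with the correct twist near the noose.

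First, since $\mathcal S$ is nonorientable, fix an orientation-reversing noose $\gamma$ in $\phi$ of minimum length. As $\rho>10$, Lemma~\ref{lem:brunet_orientation-reversing} applied to $\gamma$ yields $k=\lfloor(\rho-1)/4\rfloor=\Omega(\rho)$ pairwise-disjoint, pairwise-homotopic cycles $C_1,\dots,C_k$ in $G$, each with $\phi(C_i)$ disjoint from $\gamma$ and traversed by a loop $2$-freely homotopic to $\gamma$, and such that every vertex of $C_i$ is joined to a vertex of $\gamma$ by a chain of length at most $i+1$. Since $k$ disjoint simple closed curves cannot all be isotopic to the one-sided core $\gamma$, at least $k-1$ of the $\phi(C_i)$ are two-sided, hence freely homotopic to $\gamma^2$; discard the exceptional one. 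Here the hypothesis that the nonorientable genus is at least $2$ enters: a simple closed curve homotopic to $\gamma^2$ bounds a disk only if the complement of a M\"{o}bius-band neighborhood of $\gamma$ is a disk, i.e. only if $\mathcal S=\mathbb{RP}^2$; hence each $C_i$ is noncontractible and so $|V(C_i)|\ge\rho$. Fix a M\"{o}bius-band subsurface $M\subseteq\mathcal S$ with core $\gamma$ and boundary just outside $\phi(C_1)$; the curves $\phi(C_1),\dots,\phi(C_k)$ are nested inside $M$, from $\partial M$ inward toward $\gamma$.

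Next I produce the radial structure. The cycles $C_1$ and $C_k$ co-bound an orientable annulus $A\subseteq M$ (its core is freely homotopic to the orientation-preserving curve $\gamma^2$) containing $C_2,\dots,C_{k-1}$. A vertex set of size less than $\rho$ separating $C_1$ from $C_k$ inside $A$ would, after a routine surgery, give a noncontractible noose of length less than $\rho$, contradicting the representativity bound; hence by Menger's theorem the part of $G$ drawn in $A$ contains $\Omega(\rho)$ vertex-disjoint $C_1$--$C_k$ paths. Each such path crosses every $C_i$, so exploiting planarity inside $A$ a standard clean-up turns these paths, together with a subfamily of the $C_i$, into a cylindrical grid minor whose concentric cycles model (subcycles of) the $C_i$ and whose radial paths $R_1,\dots,R_m$ run from $C_1$ to $C_k$. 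It remains to route the radial paths through the inner M\"{o}bius band $M_k$ (the component of $M\setminus\phi(C_k)$ containing $\gamma$) so that each $R_j$ is joined to its antipodal partner; this is exactly what supplies the M\"{o}bius twist of the target grid. Using the short chains furnished by Lemma~\ref{lem:brunet_orientation-reversing} together with the fact that every noncontractible noose lying in $M_k$ has length at least $\rho$, one extracts $\Omega(\sqrt{\rho})$ vertex-disjoint ``U-turn'' connections inside the part of $G$ drawn in $M_k$, pairing antipodal radial paths. Combining these with the cylindrical grid gives a $(\Omega(\sqrt{\rho})\times\Omega(\sqrt{\rho}))$-M\"{o}bius grid minor of $G$, and the minor model constructed along the way is the required minor mapping.

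I expect the main obstacle to be this last step: controlling the combinatorics of the twist around the one-sided curve $\gamma$ while keeping the radial paths, the concentric cycles, and the U-turn connectors pairwise vertex-disjoint, and arguing that the absence of short noncontractible nooses genuinely forces $\Omega(\sqrt{\rho})$ disjoint antipodal connectors inside $M_k$. This is also the step responsible for the loss from $\rho$ down to $\sqrt{\rho}$, and the step that most essentially uses nonorientable genus at least $2$ (so that $\gamma^2$ is noncontractible and the cycles $C_i$ are not trivially fillable by disks). The preceding ingredients -- invoking Lemma~\ref{lem:brunet_orientation-reversing}, the Menger argument in the annulus, and extracting a cylindrical grid minor from a family of pairwise-crossing paths and cycles -- are routine.
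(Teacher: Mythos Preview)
Your outline shares the opening move with the paper (apply Lemma~\ref{lem:brunet_orientation-reversing} to a shortest orientation-reversing noose $\gamma$, work inside the resulting M\"{o}bius band), but from that point on the two arguments diverge, and the place you flag as ``the main obstacle'' is exactly where your write-up has a genuine gap.

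The paper does \emph{not} first build a cylindrical grid in the annulus between $C_1$ and $C_k$ via Menger and then try to add U-turns through the inner band $M_k$. Instead it cuts the whole M\"{o}bius band $M$ open along $\gamma$ to get a cylinder $M'$ whose outer boundary $\gamma'=\gamma^{(1)}\cup\gamma^{(2)}$ is the doubled $\gamma$, fixes $t_1=\sqrt{\rho}/10$, and uses the short chains $\xi_w$ from the vertices $w\in V(C_{t_1})$ down to $\gamma$. These lift to chains $\xi'_w$ in $M'$. The key combinatorial object is a \emph{maximal} subset $W\subseteq V(C_{t_1})$ with pairwise disjoint lifted chains, together with a gap parameter $t_2$ measuring the largest stretch of $\gamma'$ not hit by any $\xi'_w$ with $w\in W$. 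If $t_2\le\sqrt{\rho}$ the chains are well spread around both copies of $\gamma$, and a direct planar-grid argument in $M'$ yields the M\"{o}bius grid after re-identifying $\gamma^{(1)}$ with $\gamma^{(2)}$. If $t_2>\sqrt{\rho}$ one does \emph{not} build a grid at all: one concatenates four of the short chains and one edge of $C_{t_1}$ into a curve $\tau'$ of length at most $4t_1+5$, pushes it back to $M$, and closes it up along an arc of $\gamma$ to produce a noncontractible noose shorter than $\rho$ (or an orientation-reversing noose shorter than $\rho'$), a contradiction. This last step is where nonorientable genus $\ge 2$ is used: with two arcs $\delta_1,\delta_2$ of $\gamma$ available, one of $\tau\cup\delta_1$, $\tau\cup\delta_2$ is orientation-reversing and the other is $2$-freely homotopic to $\gamma$; on the Klein bottle or higher both are noncontractible, so at least one is too short.

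Your sketch asserts that ``the absence of short noncontractible nooses genuinely forces $\Omega(\sqrt{\rho})$ disjoint antipodal connectors inside $M_k$'' but gives no mechanism. The difficulty is real: a priori the short chains from $C_k$ to $\gamma$ may all land on one short arc of $\gamma$, and then they do not pair antipodally at all. The paper's $t_2$-dichotomy is precisely the device that turns ``they cluster too much'' into ``there is a short noncontractible noose''. Without an analogue of that dichotomy (maximal disjoint family plus gap parameter plus surgery to a short noose), your U-turn step is not a proof. Also, a minor point: Lemma~\ref{lem:brunet_orientation-reversing} already gives that each $\phi(C_i)$ is $2$-freely homotopic to $\gamma$, so the cycles are two-sided from the start and there is nothing to discard.
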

\begin{proof}
  Let $\gamma$ be an orientation-reversing loop in ${\cal S}$, of
  minimum length (w.r.to $\phi$).  Since $\gamma$ is a noose, we have
  $\rho' = \len(\gamma)\geq \rho$.  Let $C_1,\ldots,C_k$ be the
  collection of cycles given by lemma
  \ref{lem:brunet_orientation-reversing}, with $k =
  \lfloor(\rho-1)/4\rfloor$.  There exists a neighborhood $M$ of
  $\gamma$, homeomorphic to the M\"{o}bius strip, such that for any
  $i\in \{1,\ldots,k\}$, $\phi(C_i)\subset M$.  Cutting $M$ along
  $\gamma$, we obtain a graph $G'$ drawn into a cylinder $M'$.  Let
  $\phi'$ be the induced drawing of $G'$ on $M'$.  
  One of the boundaries of the cylinder $M'$, is
\[
\gamma' = \gamma^{(1)} \cup \gamma^{(2)},
\]
where for each $\{1,2\}$, $\gamma^{(i)}$ is a segment corresponding to a copy of $\gamma$.
Let
\[
U = \{v\in V(G) : \phi(v) \in \gamma\}.
\]
For any $v\in U$, there exist two copies of $v$ in $V(G')$; for any $i\in \{1,2\}$, let $v^{(i)}$ be the copy of $v$ with $\phi'(v^{(i)}) \in \gamma^{(i)}$.
Let also
\[
U' = \{v^{(i)}: v\in U \mbox{ and } i\in \{1,2\} \}.
\]

\begin{center}
\scalebox{0.8}{\includegraphics{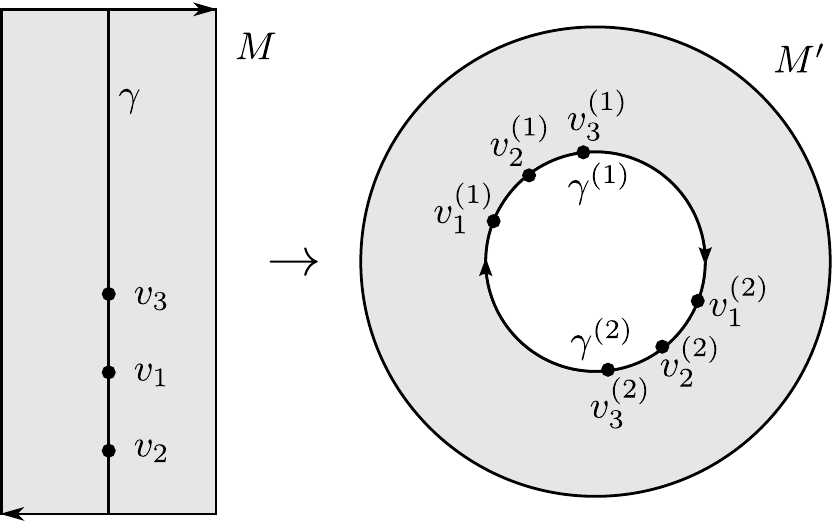}}
\end{center}

Let $t_1 = \sqrt{\rho}/10$.  For every $w\in V(C_{t_1})$, there exists
$v\in U$, and a chain $\xi_{w}$ in $\phi$, between $\phi(w)$, and
$\phi(v)$, with $\len(\xi_w) \leq t_1+1$.  By possibly shortcutting
$\xi_w$, we can assume that it intersects $\gamma$ only on $\phi(v)$.
Moreover, the collection of chains $\{\xi_w\}_{w\in V(C_{t_1})}$ can
be chosen so that no two curves intersect transversely.  This can be
done by observing that the length of chains induces a metric on
$V(G)$, and by choosing the curves $\xi_w$ to be shortest possible,
breaking ties in a consistent fashion\footnote{For example, we can
  redefine the length of a chain $\xi$ to be the total \emph{weighted
    cost} of the faces traversed by $\xi$, counting multiplicities,
  and where the weight of a face is set to $1+\eps$, for some
  perturbation $\eps=O(1/n^2)$.}.

Clearly, for every $w\in V(C_{t_1})$, we have $\xi_w\subset M$, since
otherwise $\xi_w$ has to intersect all the cycles
$C_{{t_1}+1},\ldots,C_k$, and therefore its length cannot be at most
$t_1+1$.  Since $\xi_w\subset M$, and $\xi_w$ intersects $\gamma$ only
on an endpoint $\phi(v)$, for some $v\in U$, it follows that $\xi_w$
lifts to a chain $\xi_w'\subset M'$, in $\phi'$, with endpoints
$\phi'(w)$, and $\phi'(v^{(j)})$, for some $j\in \{1,2\}$.

Let $W$ be a maximal subset $W\subseteq V(C_t)$, such that for any $w \neq w'\in W$, we have $\xi_w' \cap \xi_{w'}' = \emptyset$.
Let $t_2$ be the smallest integer such that for any set $A\subset U'$
consisting of $t_2$ consecutive vertices along the loop $\gamma'$,
there exists some $w\in W$, with $\xi_w'$ having an endpoint in $A$.

If $t_2 \leq \sqrt{\rho}$, then we can find the desired M\"{o}bius
grid minor as follows.  Let $t_3=\lfloor \rho'/(2 t_2) \rfloor$.  Let
$v_1^{(1)},\ldots,v_{\rho'}^{(1)},v_1^{(2)},\ldots,v_{\rho'}^{(2)}$ be
an ordering of the vertices in $U'$ induced by a traversal of
$\gamma'$.  For any $i\in \{1,\ldots,t_3\}$, and $j\in \{1,2\}$, let
\[
A_i^{(j)} = \{v_{(i-1)t_3 + 1}^{(j)}, v_{(i-1)t_3 + 2}^{(j)}, \ldots, v_{i t_3}^{(j)}\}.
\]
By planarity of $M$ (see e.g.~\cite{DBLP:conf/focs/ChekuriKS04}), we
obtain a $(t_1 \times 2 t_3)$-grid minor $H'$ in $G'$, with minor
mapping $f:V(H') \to 2^{V(G')}$, such that the vertices in the top row
of $H'$ are
\[
f^{-1}(A_1^{(1)}),\ldots,f^{-1}(A_{t_3}^{(1)}), f^{-1}(A_1^{(2)}),\ldots,f^{-1}(A_{t_3}^{(2)}),
\]
and in this order.  Pulling $f$ back to $G$, we arrive at a minor $H$
in $G$, where $H$ is obtained from $H'$ by identifying for every $i\in
\{1,\ldots,t_3\}$, the pair of vertices $f^{-1}(A_i^{(1)})$, and
$f^{-1}(A_i^{(2)})$ of $H'$.  It is straightforward to check that $H$
contains a $(t_4 \times t_4)$-M\"{o}bius grid minor, where $t_4 \geq
\min\{t_1/2, t_3/2\}-1 = \Omega(\min\{\sqrt{\rho}, \rho'/t_2\}) =
\Omega(\min\{\sqrt{\rho}, \rho/t_2\})$.  Therefore, if $t_2 \leq
\sqrt{\rho}$, then we are done.

It remains to consider the case $t_2 > \sqrt{\rho}$.  We consider the
following two subcases:
\begin{description}
\item{Case 1: $|W| \geq 2$.}  Pick $w \neq w' \in W$, such that the
  clockwise segment $P$ of $C_{t_1}$ between $w$, and $w'$, does not
  contain any other vertex $w''\in W \setminus \{w,w'\}$.  Let
  $\phi'(v)$, and $\phi'(v')$, be the endpoints of $\xi_{w}'$, and
  $\xi_{w'}'$ respectively in $\gamma'$, for some $v,v'\in U'$.  By
  the assumption, we can pick $w$, and $w'$, so that the clockwise
  distance between $v$, and $v'$ along $\gamma'$ is at least $t_2$.
  Let $P = w_1,\ldots,w_{\ell}$, be that segment, where $w_1=w$, and
  $w_{\ell} = w'$.  By the maximality of $W$, it follows that for
  every $i\in \{1,\ldots,\ell\}$, the chain $\xi_{i}'$ intersects
  either $\xi_w'$, or $\xi_{w'}'$ (or both).  By planarity, it follows
  that there exists $s\in \{1,\ldots,\ell-1\}$, such that for any
  $i\in \{1,\ldots,s\}$, we have
\[
\xi_{w_i}'\cap \xi_{w}' \neq \emptyset,
\]
and for any $j\in \{s+1,\ldots,\ell\}$, we have
\[
\xi_{w_j}'\cap \xi_{w'}' \neq \emptyset,
\]
For any $i\in \{1,\ldots,s\}$, let $\zeta_{w_i}$ be the segment of
$\xi_{w_i}'$ between $\phi'(w_i)$, and the first intersection point
with $\xi_{w}'$.  Similarly, for any $j\in \{s+1,\ldots,\ell\}$, let
$\zeta_{w_j}$ be the segment of $\xi_{w_j}'$ between $\phi'(w_j)$, and
the first intersection point with $\xi_{w'}'$.  Let $\sigma$ be the
subcurve of $\xi_w'$ between $\phi'(v)$, and the point $\xi_w' \cap
\zeta_{w_s}$.  Similarly, let $\sigma'$ be the subcurve of $\xi_{w'}'$
between $\phi'(v')$, and the point $\xi_{w'}' \cap \zeta_{w_{s+1}}$.
Since $\{w_s,w_{s+1}\}\in E(G')$, we can pick a chain $\alpha$ between
$\phi'(w_s)$, and $\phi'(w_{s+1})$, with $\len(\alpha) = 1$.  Let
$\tau'$ be the curve obtained by the concatenation
\[
\tau' = \sigma \circ \zeta_{w_s} \circ \alpha \circ \zeta_{w_{s+1}} \circ \sigma',
\]
where $\circ$ denotes the usual concatenation of curves with a common endpoint.
\begin{center}
\scalebox{0.7}{\includegraphics{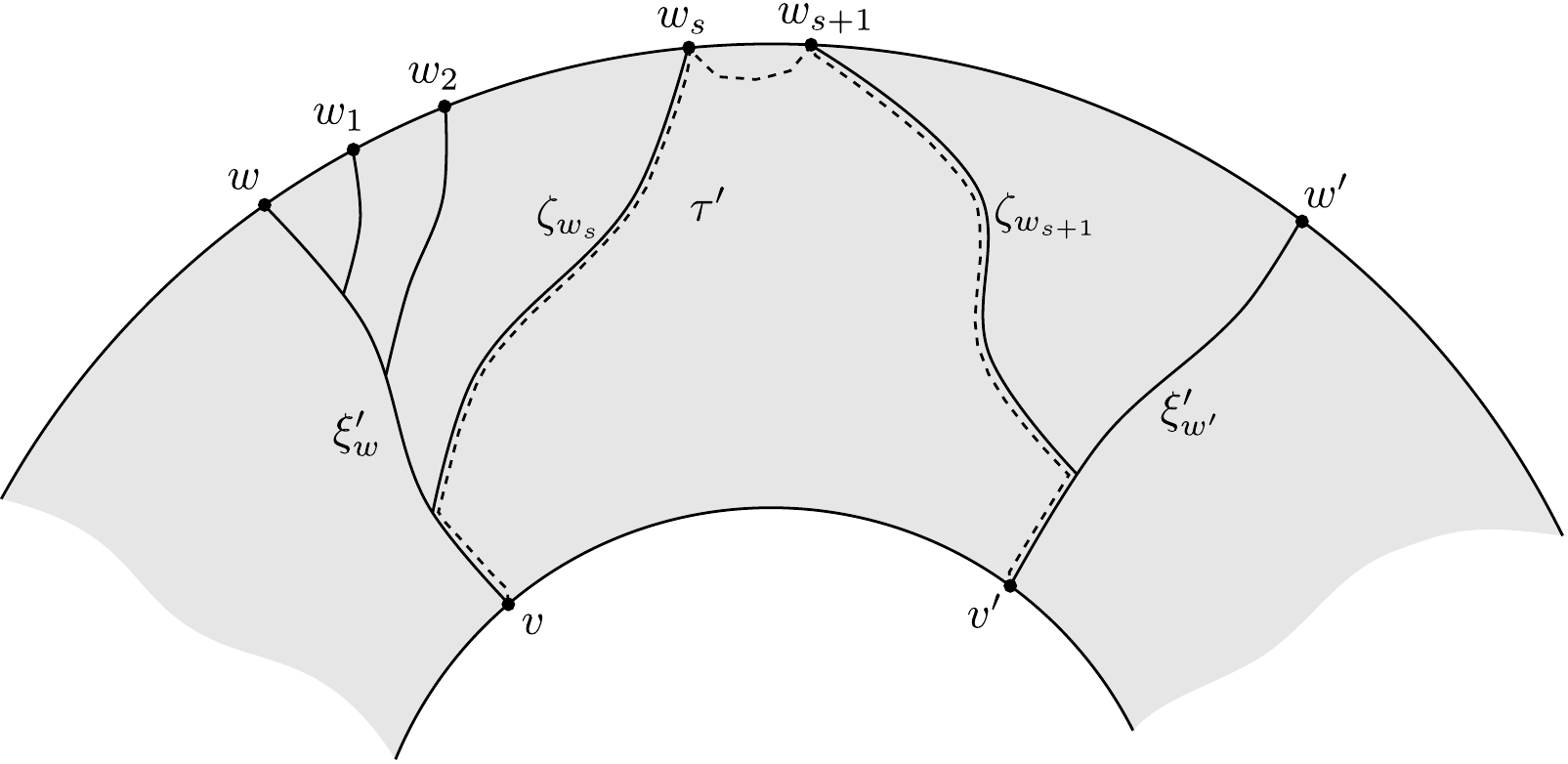}}
\end{center}
We have 
\begin{align*}
\len(\tau') &= \len(\sigma) + \len(\zeta_{w_s}) + \len(\alpha) + \len(\zeta_{w_{s+1}}) + \len(\sigma')\\
 &\leq \len(\xi_w') + \len(\xi_{w_s}') + \len(\alpha) + \len(\xi_{w_{s+1}}') + \len(\xi_{w'}')\\
 &\leq 4t_1 + 5.
\end{align*}

Let $v=v_i^{(j)}$, and $v'=v_{i'}^{(j')}$, for some $i,i'\in \{1,\ldots,\rho'\}$, and $j,j'\in \{1,2\}$.
Let $\delta_1$, and $\delta_2$ be the two arcs of $\gamma$ between $v_i$, and $v_{i'}$.
Let $\alpha$ be the clockwise distance between $v$, and $v'$, along $\gamma'$.
We have the following two subcases:
\begin{description}
\item{Case 1.1: Assume $\alpha < \rho'$.}
In this case, $\tau'$ corresponds to a chain $\tau$ in $\phi$.
One of the two loops $\tau \cup \delta_1$, or $\tau \cup \delta_2$, is a noose homotopic to $\gamma$, of length $\len(\tau)+\rho' - t_2 < \rho'$, which contradicts the fact that $\gamma$ is the shortest orientation-reversing noose.
\begin{center}
\scalebox{0.8}{\includegraphics{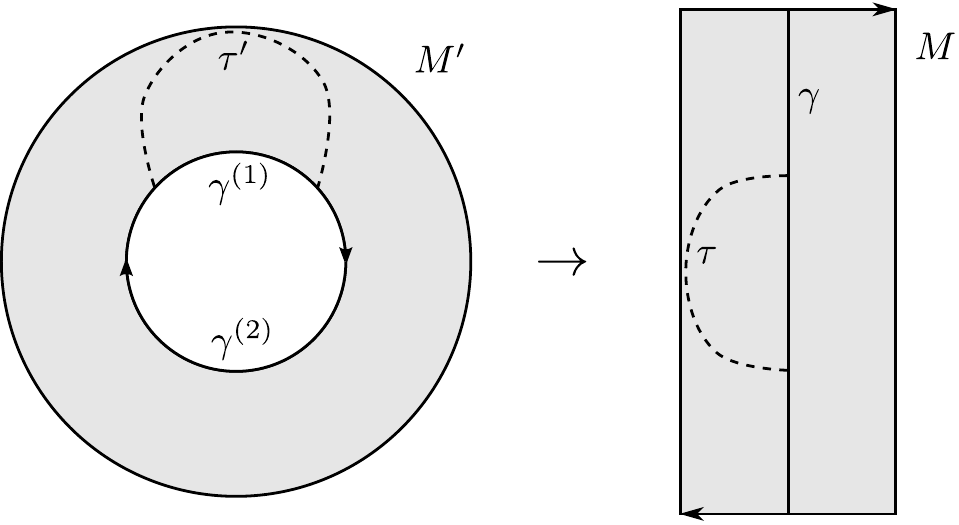}}
\end{center}

\item{Case 1.2: Assume $\alpha \geq \rho'$.}
We have that $\tau'$ corresponds to a chain $\tau$ in $\phi$, of length $\len(\tau) = \len(\tau')$.
Consider the loops $\eta_1 = \tau \cup \delta_1$, and $\eta_2 = \tau \cup \delta_2$.
Since ${\cal S}$ has nonorientable genus at least 2, it follows that both $\eta_1$, and $\eta_2$, are noncontractible (see e.g.~\cite{DBLP:journals/jct/BrunetMR96}).
One of these loops, say $\eta_1$, is an orientation-reversing noose, with $\len(\eta_1)=\len(\tau)+\alpha-\rho'$,
and $\eta_2$ is a non-contractible noose, which is 2-freely homotopic to $\gamma$, with $\len(\eta_2)=\len(\tau)+2\rho'-\alpha$.
If $\alpha\leq 2\rho'+\len(\tau)-\rho$, then $\len(\eta_1) \leq \rho' + 2\len(\tau) - \rho < \rho'$, which contradicts the fact that $\gamma$ is a shortest orientation-reversing noose in $\phi$.
Otherwise, if $\alpha < 2\rho' + \len(\tau)-\rho$, then $\len(\eta_2) < \rho$, which contradicts the fact that the representativity of $\phi$ is $\rho$. 
\begin{center}
\scalebox{0.8}{\includegraphics{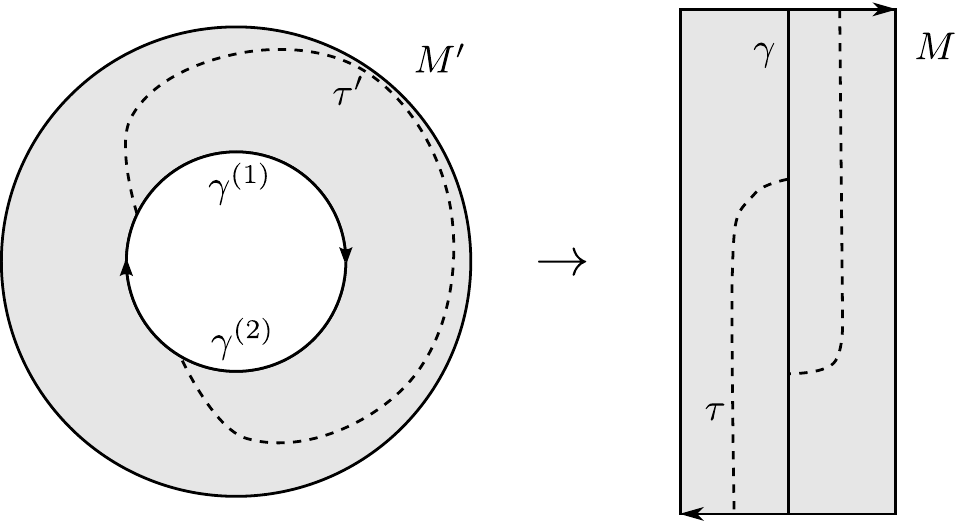}}
\end{center}
\end{description}

\item{Case 2: $|W| = 1$.}
Arguing as above, we construct a noncontractible noose in $\phi$, of length at most $4t_1+4 < \rho$, which is a contradiction.
\end{description}
This concludes the proof.
\end{proof}

We are now ready to prove the main result of this section.

\begin{theorem}[Approximating the orientable genus]\label{thm:orientable_genus}
There exists a polynomial-time algorithm which given a graph $G$ of maximum degree $\dmax$, and an integer $g>0$, either correctly decides that $\genus(G)>g$, or outputs a drawing of $G$ into a surface of orientable genus $O(\dmax^3 g^{14} \log^{19/2} n)$.
\end{theorem}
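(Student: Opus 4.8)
The plan is to bootstrap from the Euler-genus algorithm of Theorem~\ref{thm:main} and then repeatedly reduce the genus of the resulting surface by cutting along short non-contractible nooses. First note that $\eg(G)\le 2\genus(G)$, since an orientable surface of genus $\gamma$ has Euler genus $2\gamma$; hence $\genus(G)\le g$ implies $\eg(G)\le 2g$, and any certificate that $\eg(G)>2g$ certifies $\genus(G)>g$. So I would run the algorithm of Theorem~\ref{thm:main} with input parameter $2g$: either it reports $\eg(G)>2g$, in which case we output ``$\genus(G)>g$''; or it returns a drawing $\phi$ of $G$ into a surface $\mathcal S_0$ with $\eg(\mathcal S_0)=g_0:=O(\dmax^2 g^{12}\log^{19/2}n)$.

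Next I would run the following loop, maintaining a set $\mathcal F$ of triples $(\mathcal S,H,\psi)$ with $\psi$ a drawing of a subgraph $H\subseteq G$ into $\mathcal S$ (initially $\mathcal F=\{(\mathcal S_0,G,\phi)\}$), together with a set $Y\subseteq E(G)$ of deleted edges (initially empty). Fix a large universal constant $c$. For each triple with $\mathcal S$ non-orientable, compute in polynomial time the representativity $\rho$ of $\psi$ (the minimum length of a non-contractible noose). If $\rho\le cg^2$: take a non-contractible noose $\gamma$ of length $\rho$, let $U$ be the $\le\rho$ vertices it meets, delete from $H$ all $\le\dmax\rho$ edges incident to $U$ (adding them to $Y$), perturb $\gamma$ off the now-isolated vertices of $U$ so that it avoids the drawing, and cut $\mathcal S$ along $\gamma$, capping the new boundary circle(s) with disks. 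If $\gamma$ is one-sided this lowers the Euler genus by $1$; if two-sided and non-separating, by $2$; if two-sided and separating, $\mathcal S$ splits into two surfaces whose Euler genera sum to $\eg(\mathcal S)$, each at least $1$ (neither side bounds a disk, as $\gamma$ is non-contractible). Replace the triple by the one or two resulting triples and iterate; if $\rho>cg^2$, leave the triple alone.

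For the accounting: $\sum_{(\mathcal S,\cdot,\cdot)\in\mathcal F}\eg(\mathcal S)$ is always $\le g_0$ and non-increasing, and strictly decreases at each non-separating cut; while the number of surfaces in $\mathcal F$ of positive Euler genus is always $\le g_0$ and strictly increases at each separating cut. Hence there are at most $2g_0$ cuts, and since each deletes at most $\dmax\rho\le\dmax c g^2$ edges, $|Y|=O(g_0\cdot\dmax g^2)=O(\dmax^3 g^{14}\log^{19/2}n)$; in particular the loop runs in polynomial time. When it halts, every surface in $\mathcal F$ is orientable or is non-orientable with representativity $>cg^2$. In the latter case the subgraph $H$ already forces $\genus(G)>g$: if $\eg(\mathcal S)=1$ then $\mathcal S$ is the projective plane and Lemma~\ref{lem:projective_genus} gives $\genus(H)=\lfloor\rho/2\rfloor>g$; if $\eg(\mathcal S)\ge 2$ then, since $\rho>cg^2>10$, Lemma~\ref{lem:mobius_grid_minor} (via Lemma~\ref{lem:brunet_orientation-reversing}) produces an $(r\times r)$-M\"obius grid minor of $H$ with $r=\Omega(\sqrt\rho)=\Omega(\sqrt c\,g)$, which by Corollary~\ref{cor:mobius_genus} has orientable genus $\Omega(\sqrt c\,g)>g$. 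Since $H\subseteq G$ and orientable genus is minor-monotone, $\genus(G)>g$, and we output this.

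Otherwise every surface in $\mathcal F$ is orientable; take their connected sum (gluing a tube inside a face of each drawing), obtaining a single orientable surface of genus $\sum\genus(\mathcal S)\le g_0/2$ carrying a drawing of $G\setminus Y$, and then re-insert the edges of $Y$ one by one, adding a single handle each time. This yields a drawing of $G$ into an orientable surface of genus at most $g_0/2+|Y|=O(\dmax^3 g^{14}\log^{19/2}n)$, as required. The only place substantial topological work is needed is extracting the large M\"obius grid minor, which is exactly Lemma~\ref{lem:mobius_grid_minor}; given that, the remaining content is the genus accounting above, whose one delicate point is that separating non-contractible cuts do not shrink $\sum\eg(\mathcal S)$, which is why the termination argument must also track the number of positive-genus components.
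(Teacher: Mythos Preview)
Your proof is correct and follows essentially the same approach as the paper: invoke Theorem~\ref{thm:main}, then repeatedly cut along short non-contractible nooses until every component is orientable or some non-orientable component has representativity $\Omega(g^2)$, in which case Lemma~\ref{lem:projective_genus} or Lemma~\ref{lem:mobius_grid_minor} together with Corollary~\ref{cor:mobius_genus} certifies $\genus(G)>g$. The only cosmetic differences are that you delete edges rather than vertices, you are explicit about passing $2g$ to Theorem~\ref{thm:main} (the paper elides this factor of two), and you spell out the termination bookkeeping for separating versus non-separating cuts more carefully than the paper does.
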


\begin{proof}
Using the algorithm from theorem \ref{thm:main}, we can either correctly decide that $\genus(G) > g$, or compute a drawing $\phi$ of $G$ into a surface ${\cal S}$ of Euler genus at most $g'=O(\dmax^2 g^{12} \log^{19/2} n)$.

We inductively compute a sequence of graphs $G_0,\ldots,G_t$, and for each $i\in \{1,\ldots,t\}$, we also compute a drawing $\phi_i$ of $G_i$ into some surface ${\cal S}_i$.
Set $G_0=G$, ${\cal S}_0={\cal S}$, and $\phi_0=\phi$.
Given $G_i$, ${\cal S}_i$, and $\phi_i$, if ${\cal S}_i$ is orientable, then we set $t=i$, and terminate the sequence at $G_i$.
Otherwise, we compute $G_{i+1}$, ${\cal S}_{i+1}$, and $\phi_{i+1}$ as follows.
Let $\rho_i$ be the representativity of $\phi_i$.
Let us assume first that $\rho_i > \alpha g^2$, for some constant $\alpha>0$, to be determined later.
If ${\cal S}_i$ is the real projective plane, then Lemma \ref{lem:projective_genus} implies that $\genus(G) > \lfloor \alpha g^2 / 2 \rfloor > g$.
In this case, we can therefore correctly decide that $\genus(G) > g$, for $\alpha \geq 4$.
Otherwise, we may assume that ${\cal S}_i$ has nonorientable genus at least 2.
In this case, we can apply Lemma \ref{lem:mobius_grid_minor} to obtain a $(r\times r)$-M\"{o}bius grid minor, for some $r \geq \beta \sqrt{\alpha g^2}$, for some universal constant $\beta>0$ (determined by the hidden constant in Lemma \ref{lem:mobius_grid_minor}).
By corollary \ref{cor:mobius_genus}, we have that $\genus(G) > \beta' \beta \sqrt{\alpha} g$, for some universal constant $\beta'>0$.
By setting $\alpha = 1/(\beta \beta')$, we correctly decide that, in this case, $\genus(G)>g$.
Finally, it remains to consider the case $\rho_i \leq \alpha g^2$.
We find a shortest noose $\gamma_i$ in $\phi_i$ (this can be done in polynomial time, e.g.~with the algorithm of Cabello et al.~\cite{DBLP:conf/compgeom/CabelloVL10a}).
We set $G_{i+1} = G_i \setminus X_i$, where $X_i$ is the set of vertices that $\gamma_i$ intersects.
We also set ${\cal S}_{i+1}$ be the surface obtained by cutting ${\cal S}_i$ along $\gamma_i$, and replacing any created punctures by disks.
Note that the resulting surface might be disconnected, but this does not affect our algorithm.
Let also $\phi_{i+1}$ be the induced drawing of $G_{i+1}$ into ${\cal S}_{i+1}$.

Observe that since we arrive at ${\cal S}_{i+1}$ by cutting ${\cal S}_i$ along a noncontractible curve, it follows that the sequence $G_1,\ldots,G_t$ terminates at some finite $t=O(g')$.
Let 
\[
X = \bigcup_{i=0}^{t-1} X_i,
\]
i.e.~$X = V(G) \setminus V(G_t)$ is the set of all vertices removed by the algorithm.
We have
\begin{align*}
|X| &= \sum_{i=0}^{t-1} \rho_i\\
 &\leq t \alpha g^2\\
 &= O(\dmax^2 g^{14} \log^{19/2} n).
\end{align*}
Since ${\cal S}_t$ is orientable, we can obtain a drawing of $G$ into
an orientable surface ${\cal S}'$ by adding for every edge incident to
a vertex in $X$, a new handle in ${\cal S}_t$.  This results into a
drawing of $G$ into an orientable surface ${\cal S}'$ of orientable
genus
\begin{align*}
\genus({\cal S}') &= \genus({\cal S}_t) + \dmax \cdot |X|\\
 &= O(\genus({\cal S}) + \dmax \cdot |X|)\\
 &= O(\dmax^3 g^{14} \log^{19/2} n),
\end{align*}
as required.
\end{proof}

\section{Approximating the crossing number}\label{sec:crossing_number}

In this section we give an algorithm for approximating the crossing
number of a graph by building on our algorithm for approximating the
orientable genus of a graph.  
We remark that we can obtain a similar algorithm for crossing number  by instead using our algorithm for Euler genus.
However, using orientable drawings simplifies the analysis.

\begin{definition}[Augmented cylinder]
Let $r,t\geq 1$, and let $G$ be the $(r \times t)$-cylinder.
Let $G'$ be the graph obtained from $G$ by adding the edge $\{u,v\}$, between some vertex $u$ in the top of $G$, and some vertex $v$ in the bottom of $G$.
We say that $G'$ is an \emph{$(r \times t)$-augmented cylinder}.
\end{definition}

The following fact is not difficult, it can be proven e.g.~using the arguments from \cite{ChuzhoyMS11}.

\begin{lemma}\label{lem:augmented_cylinder_crossing_number}
Let $r\geq 3$, and let $G$ be an $(r\times r)$-augmented cylinder.
Then, the crossing number of $G$ is $\Omega(r)$.
\end{lemma}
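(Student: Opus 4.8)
The plan is to prove the stronger statement that \emph{every} drawing of the $(r\times r)$-augmented cylinder $G$ has $\Omega(r)$ crossings. Label the vertices of the underlying $(r\times r)$-cylinder as $v_{i,j}$, where $i\in\{1,\dots,r\}$ is the coordinate along the path and $j$ the coordinate along the $r$-cycle (read modulo $r$); let $C_i$ be the $i$-th ``row'' cycle $v_{i,1}v_{i,2}\cdots v_{i,r}v_{i,1}$, so that $C_1$ is the top and $C_r$ the bottom, and let $e=\{u,w\}$ be the augmenting edge with $u=v_{1,a}\in C_1$ and $w=v_{r,b}\in C_r$. For each $i\in\{2,\dots,r-1\}$ let $U_i$ be the subgraph of $G$ induced by rows $i-1,i,i+1$; this is a $(3\times r)$-cylinder, which is $3$-connected (here we use $r\ge 3$), so by Whitney's theorem it has a unique embedding in the sphere up to reflection, and in that embedding the curves $\phi(C_{i-1})$ and $\phi(C_{i+1})$ lie on opposite sides of the simple closed curve $\phi(C_i)$.

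The core step is to show that in any drawing $\phi$ of $G$, and for every $i\in\{2,\dots,r-1\}$, at least one crossing of $\phi$ involves an edge of $U_i$. Suppose not, so that no edge of $U_i$ is involved in any crossing. Then $\phi$ restricted to $U_i$ is a planar drawing, so $\phi(C_i)$ is a simple closed curve splitting the plane into two regions, one containing $\phi(C_{i-1})$ and the other $\phi(C_{i+1})$. Now the subgraph $A$ induced by rows $1,\dots,i-1$ is connected and contains both $u$ and the vertex $v_{i-1,a}\in C_{i-1}$; since $A$ is vertex-disjoint from $C_i$ and no edge of $C_i$ is crossed, $\phi(A)$ is disjoint from $\phi(C_i)$ and therefore lies in a single region of $\phi(C_i)$, which forces $\phi(u)$ onto the same side as $\phi(C_{i-1})$. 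Symmetrically, using the connected subgraph induced by rows $i+1,\dots,r$ (which contains $w$ and $v_{i+1,b}\in C_{i+1}$), the point $\phi(w)$ lies on the side containing $\phi(C_{i+1})$. Hence $\phi(u)$ and $\phi(w)$ are separated by $\phi(C_i)$, so the arc $\phi(e)$ must cross $\phi(C_i)$; but $\phi(e)$ cannot meet a vertex of $C_i$ (the endpoints of $e$ lie in rows $1$ and $r$, not row $i$), so it must cross an edge of $C_i\subseteq U_i$, contradicting the assumption.

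To finish, I would count: each crossing of $\phi$ is an intersection of two edges, and a single edge of $G$ lies in $E(U_i)$ for at most three values of $i$ --- a horizontal edge of row $\ell$ lies only in $U_{\ell-1},U_\ell,U_{\ell+1}$, a vertical edge between rows $\ell$ and $\ell+1$ lies only in $U_\ell,U_{\ell+1}$, and $e$ lies in no $U_i$. Thus a single crossing ``covers'' at most six of the $r-2$ indices of $\{2,\dots,r-1\}$, and since by the previous paragraph every such index is covered, $\phi$ has at least $(r-2)/6$ crossings; as $\phi$ was arbitrary, $\crossingnumber(G)\ge (r-2)/6=\Omega(r)$. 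The one delicate point is the topological argument of the second paragraph --- that when $U_i$ is uncrossed the two endpoints of $e$ are pinned to opposite sides of $\phi(C_i)$ --- which rests on the uniqueness of the planar embedding of the $(3\times r)$-cylinder together with the connectedness of the ``above'' and ``below'' parts of $G$; the remainder is Jordan-curve separation and bookkeeping. (Alternatively, the bound also follows from the bisection/well-linkedness estimates of \cite{ChuzhoyMS11}, but the self-contained argument above is shorter.)
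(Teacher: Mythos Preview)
Your proof is correct. The paper does not actually prove this lemma at all; it only remarks that the fact ``can be proven e.g.~using the arguments from \cite{ChuzhoyMS11}'', i.e., via bisection-width or well-linkedness lower bounds on the crossing number. Your argument takes a genuinely different and more elementary route: instead of invoking a global expansion/bisection estimate, you localize the problem by showing that each of the $r-2$ ``thick'' bands $U_i$ (a $3$-connected $(3\times r)$-cylinder) must individually contribute a crossing, because an uncrossed $U_i$ would, by Whitney uniqueness, trap the two endpoints of the augmenting edge on opposite sides of the Jordan curve $\phi(C_i)$. The final double-counting (each crossing covers at most six bands) then yields $(r-2)/6$ directly. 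This gives an explicit constant and avoids any dependence on the machinery of \cite{ChuzhoyMS11}; conversely, the cited route would also give (as you note) the sharper bound for the $(r\times 3)$-augmented cylinder without extra work, which your band argument would need to be adapted to recover. Both approaches are adequate for the application; yours is self-contained.
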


We remark that the same bound holds on the crossing number of an $(r \times 3)$-augmented cylinder.
This refined statement however, does not make any difference in our application.

We now show that any graph that admits a ``dense enough'' drawing into an orientable surface of positive orientable genus, contains a large augmented cylinder minor.
It seems that one should also be able to get a large toroidal minor, and in fact such a result has been obtained for graphs drawn into the torus by Schrijver \cite{DBLP:journals/jct/Schrijver93a}, and by Hlin\v{e}n\'{y}, and Salazar \cite{DBLP:conf/isaac/HlinenyS07}.
It is possible that an extension of these results to higher genus surfaces can be obtained by the techniques in \cite{DBLP:journals/jct/BrunetMR96}.
This however does not seem immediate.
Let us also point out that such an improvement would immediately yield an improved constant in the exponent of the approximation guarantee of our algorithm.

We will use the following results to construct the desired augmented cylinder.

\begin{lemma}[Brunet et al.~\cite{DBLP:journals/jct/BrunetMR96}]\label{lem:brunet_1}
Let $G$ be a graph, and let $\phi$ be a drawing of $G$ into an orientable surface ${\cal S}$, of positive orientable genus.
Let $\rho$ be the representativity of $\phi$.
Then, $G$ contains a set of $\lfloor (\rho-1)/2 \rfloor$ pairwise disjoint, pairwise freely homotopic nonseparating cycles.
\end{lemma}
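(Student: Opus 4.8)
The plan is to run the standard breadth-first-search layering argument for graphs embedded with large representativity, in the spirit of \cite{DBLP:journals/jct/BrunetMR96}. First I would pick a shortest \emph{non-separating} noose $\gamma$ of $\phi$; such a noose exists because ${\cal S}$ has positive orientable genus, and since every non-separating simple closed curve on an orientable surface is noncontractible, $\ell = \len(\gamma) \geq \rho$. Let $U = \{v \in V(G) : \phi(v) \in \gamma\}$, so $|U| = \ell$. Cutting ${\cal S}$ along $\gamma$, which is two-sided since ${\cal S}$ is orientable, yields a connected surface ${\cal S}'$ with exactly two boundary circles $\gamma^{(1)}, \gamma^{(2)}$, each a copy of $\gamma$, together with an induced drawing $\phi'$ of a graph $G'$ in which every $v \in U$ is split into a vertex $v^{(1)} \in \gamma^{(1)}$ and a vertex $v^{(2)} \in \gamma^{(2)}$.

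Next I would run a breadth-first search in $G'$ starting from $U^{(1)} = \{v^{(1)} : v \in U\}$, and let $L_i$ denote the set of vertices of $G'$ at distance exactly $i$. Two facts drive the argument. (a) The distance in $G'$ between $\gamma^{(1)}$ and $\gamma^{(2)}$ is at least $\lfloor (\rho-1)/2 \rfloor + 1$: a shorter $u^{(1)}$--$v^{(2)}$ path corresponds to a noose $\xi$ in ${\cal S}$ that meets $\phi(G)$ only in vertices and touches $\gamma$ only at its endpoints $u,v$, from the two different sides; concatenating $\xi$ with the shorter of the two arcs of $\gamma$ joining $u$ and $v$ gives a simple closed curve $\delta$ that crosses $\gamma$ transversally exactly once, so $[\delta]\neq 0$ by a mod-$2$ intersection count with the non-separating $\gamma$, and hence $\delta$ is a non-separating noose with $\len(\delta) < \ell$, contradicting the minimality of $\gamma$. (b) For every $i$ with $1 \le i \le \lfloor(\rho-1)/2\rfloor$ one can extract from $L_i$ a cycle $C_i$ of $G$ that, viewed back in ${\cal S}$, is freely homotopic to $\gamma$: letting $G'_i$ be the subgraph of $G'$ induced by $\bigcup_{j\le i} L_j$, the boundary walk of the face of $\phi'|_{G'_i}$ containing $\gamma^{(2)}$ lies in $L_i$, separates $\gamma^{(1)}$ from $\gamma^{(2)}$ in ${\cal S}'$, and therefore carries a cycle of winding number $\pm 1$ around $\gamma$ — otherwise one again routes a short curve through a single face bridging the two sides of $\gamma$ and obtains a noncontractible noose of length less than $\rho$. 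Since the layers $L_i$ are pairwise vertex-disjoint, the cycles $C_i$ are pairwise vertex-disjoint; each $C_i$ is freely homotopic to $\gamma$, hence the $C_i$ are pairwise freely homotopic, and each is non-separating because a curve freely homotopic to the non-separating curve $\gamma$ is itself non-separating; counting the admissible indices $i$ yields the required $\lfloor(\rho-1)/2\rfloor$ cycles.

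The step I expect to be the main obstacle is the cycle-extraction in (b): rigorously showing that each BFS layer in the relevant range carries an actual cycle of $G$ freely homotopic to $\gamma$, rather than merely a separating vertex set or a closed walk of higher winding number. This requires a careful planarity/duality ``sweep'' argument in the cut surface ${\cal S}'$, together with the bookkeeping to pull the cycle back to $G$ and to certify that it is non-separating. Given that, step (a) is a routine surgery computation once the mod-$2$ intersection/homology accounting is set up, and the ``pairwise freely homotopic'' conclusion is immediate since all the $C_i$ are homotopic to $\gamma$.
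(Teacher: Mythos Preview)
The paper does not prove this lemma; it is quoted verbatim as a result of Brunet, Mohar, and Richter \cite{DBLP:journals/jct/BrunetMR96} and used as a black box. So there is nothing in the paper to compare your argument against.

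That said, your sketch is broadly in the spirit of the techniques in \cite{DBLP:journals/jct/BrunetMR96}: cut along a shortest non-separating noose, set up layers by distance from one boundary copy, and extract one homotopic cycle per layer. Your own caveat is accurate: the delicate step is (b), where you must show that the outer boundary walk of the $i$-th ``level set'' in the cut surface actually contains a simple cycle of $G$ homotopic to $\gamma$, not merely a closed walk or a curve of higher winding number. In \cite{DBLP:journals/jct/BrunetMR96} this is handled with a careful induction (the ``cylinder of cycles'' construction) rather than a bare BFS, and the shortcutting/surgery argument you allude to in (a) has to be invoked repeatedly inside that induction, not just once. Your step (a) also needs a small patch: the concatenation of $\xi$ with an arc of $\gamma$ is certainly noncontractible by the mod-$2$ intersection count, but to contradict the minimality of $\gamma$ you need it to be \emph{non-separating}, and for that you should argue via homology (on an orientable surface a simple closed curve is non-separating iff it is nonzero in $H_1$, and $[\delta]$ has nonzero intersection with $[\gamma]$, hence $[\delta]\neq 0$). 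With these refinements your outline matches the original proof.
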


\begin{lemma}[Brunet et al.~\cite{DBLP:journals/jct/BrunetMR96}]\label{lem:brunet_2}
  Let $G$ be a graph, and let $\phi$ be a drawing of $G$ into a
  surface, with representativity $\rho$.  Let $C$, $C'$ be disjoint
  homotopic noncontractible cycles in $G$.  Then, $G$ contains $\rho$
  pairwise disjoint paths, each contained within the cylinder bounded
  by $C$, and $C'$, and each having one end in $C$, and the other in
  $C'$.
\end{lemma}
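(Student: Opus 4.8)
The plan is to prove this by a Menger-type packing argument inside the cylinder, turning a hypothetical small separator into a short noncontractible noose and thereby contradicting the assumption that $\phi$ has representativity $\rho$. First I would fix notation. Since $C$ and $C'$ are disjoint, homotopic, noncontractible cycles, their images cobound a closed annulus $A \subseteq {\cal S}$, as presupposed by the statement. Let $G_A$ be the subgraph of $G$ consisting of all vertices and edges whose $\phi$-images lie in $A$; note $C \cup C' \subseteq G_A$, and since edges of a drawing cannot cross vertices or edges, every curve contained in $A$ meets $\phi(G)$ only inside $\phi(G_A)$. Cap the two boundary circles of $A$ with disks $D, D'$ to obtain a sphere, add a vertex $c$ in the interior of $D$ joined by edges drawn in $D$ to every vertex of $V(C)$, and symmetrically a vertex $c'$ in the interior of $D'$ joined to every vertex of $V(C')$. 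This yields a plane graph $\widehat{G}$ with $cc' \notin E(\widehat{G})$. It suffices to produce $\rho$ internally disjoint $c$-to-$c'$ paths in $\widehat G$: deleting $c, c'$ and truncating each resulting path at its last vertex on $C$ and its first subsequent vertex on $C'$ gives $\rho$ pairwise vertex-disjoint paths, each running from a distinct vertex of $C$ to a distinct vertex of $C'$ with interior disjoint from $C\cup C'$, all contained in $G_A$ and hence inside the cylinder.

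Now suppose $\widehat G$ has fewer than $\rho$ internally disjoint $c$-to-$c'$ paths. By Menger's theorem there is a minimal vertex separator $S \subseteq V(\widehat G)\setminus\{c,c'\}$ with $|S| \le \rho-1$ disconnecting $c$ from $c'$, and necessarily $S \subseteq V(G_A)$. The key step is the standard planar-duality fact: a minimal $c$-$c'$ vertex separator in a plane graph is traced by a simple closed curve $\gamma$ that passes through exactly the vertices of $S$ (running through faces otherwise) and separates $c$ from $c'$ on the sphere. Since $\gamma$ meets $\widehat G$ only in $\phi(S)\subseteq A$, it can be isotoped off the open disks $D\setminus\phi(C)$ and $D'\setminus\phi(C')$ while still separating $c$ from $c'$; the resulting curve lies in $A$, separates its two boundary circles, and has $\len(\gamma) = |S| \le \rho-1$.

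Finally, a simple closed curve in the annulus $A$ separating its two boundary components is isotopic to the core of $A$, which is freely homotopic in ${\cal S}$ to $C$; as $C$ is noncontractible, $\gamma$ is a noncontractible noose of length at most $\rho-1$, contradicting the definition of representativity. Hence $\widehat G$ contains $\rho$ internally disjoint $c$-to-$c'$ paths, and the truncation described above yields the required $\rho$ pairwise disjoint $C$-to-$C'$ paths inside the cylinder bounded by $C$ and $C'$.

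The main obstacle is topological bookkeeping rather than any single hard idea: one must justify carefully (i) that the minimal planar separator $S$ is realized by a single simple closed curve through precisely the vertices of $S$ — this is cleanest when phrased via the radial (vertex–face incidence) graph of $\widehat G$, in which $S$ corresponds to a closed walk through $|S|$ of its $\phi(G)$-vertices — and (ii) that this curve can be pushed into $A$ and is genuinely noncontractible in ${\cal S}$, not merely in $A$, which follows because the inclusion $A\hookrightarrow{\cal S}$ carries the core of $A$ to a curve freely homotopic to $C$. Everything else (the capping construction, Menger's theorem, and the isotopy classification of simple closed curves in an annulus) is routine.
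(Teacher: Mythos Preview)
Your argument is correct and is precisely the ``max-flow/min-cut duality'' that the paper invokes (without further detail) immediately after stating the lemma: Menger's theorem in the capped annulus, together with the planar radial-graph fact that a minimal vertex separator is traced by a simple closed curve, converts a deficient cut into a noncontractible noose of length less than $\rho$. The paper gives no more than that one-line pointer to the Brunet--Mohar--Richter reference, so you have in effect supplied the omitted details of the same proof.
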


We point out that Lemma \ref{lem:brunet_2} is a consequence of the  max-flow/min-cut duality.
We now prove that large representativity in a orientable surface
implies the existence of a large augmented cylinder minor.

\begin{lemma}\label{lem:augmented_cylinder}
  Let $G$ be a graph, and let $\phi$ be a drawing of $G$ into an
  orientable surface ${\cal S}$, with orientable genus $g>0$.  Let
  $\rho \geq 3$ be the representativity of $\phi$.  Then, $G$ contains
  a $(r\times r)$-augmented cylinder minor, for some $r =
  \Omega(\rho)$.
\end{lemma}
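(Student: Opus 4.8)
The plan is to assemble the ``cylinder part'' of the augmented cylinder from the two results of Brunet, Mohar and Richter (Lemmas~\ref{lem:brunet_1} and~\ref{lem:brunet_2}), and then to use positive genus together with positive representativity to route a single extra ``handle'' edge outside the cylinder. We may assume $\rho$ is larger than a suitable absolute constant, since otherwise the claimed bound $r=\Omega(\rho)$ is vacuous. First I would apply Lemma~\ref{lem:brunet_1} to obtain $m=\lfloor(\rho-1)/2\rfloor$ pairwise disjoint, pairwise freely homotopic, nonseparating cycles in $G$. Since these cycles are disjoint and freely homotopic they are parallel, so they may be re-indexed as $C_1,\dots,C_m$ in such a way that each $C_i$ separates $C_{i-1}$ from $C_{i+1}$, and $C_2,\dots,C_{m-1}$ all lie in the cylinder $A$ co-bounded by $C_1$ and $C_m$ in ${\cal S}$.

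Next I would apply Lemma~\ref{lem:brunet_2} to $C_1$ and $C_m$ to get $\rho$ pairwise disjoint paths $P_1,\dots,P_\rho$, each lying in $A$ with one end on $C_1$ and the other on $C_m$. Each $P_j$ must meet every $C_i$, since $C_i$ separates the two ends of $P_j$ inside $A$. Traversing $P_j$ from its $C_1$-end, let $v_{i+1,j}$ be its first vertex on $C_{i+1}$ occurring after $v_{i,j}$, and let $v_{i,j}$ be its last vertex on $C_i$ before $v_{i+1,j}$ (with $v_{1,j}$ the starting vertex). Because $C_i$ separates $C_{i-1}$ from $C_{i+1}$ in $A$, the segment of $P_j$ between $v_{i,j}$ and $v_{i+1,j}$ is internally disjoint from $C_1\cup\cdots\cup C_m$; in particular $P_j$ meets $C_i$ only at $v_{i,j}$, so on each $C_i$ the vertices $v_{i,1},\dots,v_{i,\rho}$ appear in a common cyclic order and split $C_i$ into $\rho$ arcs, each internally disjoint from all the $P_j$. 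Contracting each of these arcs and each radial segment to an edge then exhibits an $(m\times\rho)$-cylinder minor of $G$; keeping only $m$ of the $\rho$ columns gives an $(r\times r)$-cylinder minor with $r=m=\Omega(\rho)$, with top cycle $T$ descending from $C_1$ and bottom cycle $D$ descending from $C_m$.

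To add the handle, let $B={\cal S}\setminus\operatorname{int}(A)$, a compact subsurface with $\partial B=C_1\sqcup C_m$. Since $C_1$ is nonseparating in ${\cal S}$, $B$ is connected; and since $\phi$ has representativity $\rho\ge 1$, the embedding of $G$ is cellular, so $G\cap B$ must contain a path $R$ from $C_1$ to $C_m$ --- otherwise $B$ would contain a noncontractible simple closed curve of ${\cal S}$ disjoint from $\phi(G)$, contradicting $\rho\ge1$. Taking $R$ minimal, its interior lies in $\operatorname{int}(B)$ and is therefore vertex-disjoint from $A$, hence from every $C_i$ and every $P_j$; moreover $R$ meets $C_1$ in a single endpoint $u$ and $C_m$ in a single endpoint $w$. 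Now I would redo the contractions above while keeping $u$ as a vertex of $T$ and $w$ as a vertex of $D$ (which only decreases $r$ by an additive constant), and also contract $R$ to a single edge between this vertex of $T$ and this vertex of $D$. Since $R$ is internally disjoint from everything contracted, the outcome is an $(r\times r)$-cylinder minor together with one extra edge joining a top vertex to a bottom vertex, i.e.\ an $(r\times r)$-augmented cylinder minor with $r=\Omega(\rho)$.

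The step I expect to be the main obstacle is producing the handle edge $R$ in the complementary region $B$: this is exactly where both hypotheses are needed --- positive genus guarantees that the cycles of Lemma~\ref{lem:brunet_1} are nonseparating, which makes $B$ connected, and positive representativity guarantees that the embedding is cellular and that $B$ admits no noncontractible curve missing $\phi(G)$. The underlying statement (a cellular embedding has no noncontractible noose avoiding $G$, so two boundary circles of a connected piece must be joined by a subpath of $G$) is standard surface topology but should be stated and justified with some care. The remaining ingredients --- extracting a clean cylinder minor via the first/last-hit choice of the $v_{i,j}$, and the final assembly --- are routine bookkeeping.
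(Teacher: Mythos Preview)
Your overall strategy matches the paper's: invoke Lemma~\ref{lem:brunet_1} for the nested homotopic nonseparating cycles, Lemma~\ref{lem:brunet_2} for the transversal paths inside the bounded cylinder, and then use connectedness of the complementary piece (from nonseparating plus cellularity) to produce the extra top--bottom edge. Your treatment of the ``handle'' path $R$ is essentially identical to the paper's, and in fact you spell out the cellularity justification a bit more than the paper does.

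The genuine gap is in your direct extraction of the cylinder minor. You assert that with your inductive first/last choice of $v_{i,j}$, ``$P_j$ meets $C_i$ only at $v_{i,j}$,'' and hence the $v_{i,1},\dots,v_{i,\rho}$ lie in a common cyclic order on every $C_i$. Neither claim follows. A path $P_j$ may cross a given $C_i$ many times; your definition picks out one clean segment per adjacent annulus, but these segments are not contiguous along $P_j$ (there is a gap between the ``first hit on $C_{i+1}$'' and the later ``last hit on $C_{i+1}$ before $C_{i+2}$''). More importantly, the intersections $P_j\cap C_i$ for different $j$ can be interleaved on $C_i$, so the cyclic order of your chosen representatives $v_{i,j}$ need not be consistent across different $i$. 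Getting a grid/cylinder minor from ``$m$ concentric cycles plus $\rho$ radial paths'' is a true statement, but it is exactly the nontrivial planar grid-minor argument, not a one-line contraction.

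The paper sidesteps this precisely because it is subtle: it caps the cylinder $A$ off to a torus by adding edges $e_j$ across a glued-in cylinder joining the two endpoints of each $P_j$, observes that the resulting toroidal embedding has representativity at least $\min\{m,\rho\}$, and then cites the toroidal-grid result of Hlin\v{e}n\'y--Salazar to extract an $(\Omega(\rho)\times\Omega(\rho))$-cylinder minor in $H=\bigl(\bigcup_i C_i\bigr)\cup\bigl(\bigcup_j P_j\bigr)$. If you want a self-contained argument, you should either replicate that trick or appeal directly to the planar concentric-cycles/radial-paths grid theorem rather than attempt the pointwise $v_{i,j}$ bookkeeping.
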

\begin{proof}
  By Lemma \ref{lem:brunet_1}, there exists a collection
  $C_1,\ldots,C_k$ of pairwise disjoint, pairwise freely homotopic
  nonseparating cycles in $G$, with $k = \Omega(\rho)$.  By a possible
  reordering of the indices, we have that there exists a cylinder
  ${\cal C} \subset {\cal S}$, with boundaries $C_1$, and $C_k$.  By
  Lemma \ref{lem:brunet_2}, there exists a collection
  $P_1,\ldots,P_{\ell}$ of pairwise disjoint paths in ${\cal C}$, each
  having an endpoint in $C_1$, and an endpoint in $C_k$, with $\ell =
  \lfloor (\rho-1)/2 \rfloor$.  
  We next argue that the graph $H=\left(\bigcup_{i=1}^{k}
    C_i\right) \cup \left( \bigcup_{j=1}^{\ell} P_j \right)$ contains
  a $(r\times r)$-cylinder minor, for some $r=\Omega(\rho)$.
  Construct a torus ${\cal T}$ by glueing a cylinder ${\cal C}'$ to ${\cal C}$ (i.e., each boundary cycle of ${\cal C}'$ is identified with a boundary cycle of ${\cal C}$).
  For each $j\in \{1,\ldots,\ell\}$, add an edge $e_j$ between the two endpoints of $P_j$, and embed $e_j$ into the cylinder ${\cal C}'$.
  This results into a drawing of the graph $H' = H \cup \left(\bigcup_{j=1}^{\ell} e_j\right)$ into the torus ${\cal T}$, with representativity $\rho'\geq \min\{k, \ell\}$.
  By the arguments in \cite{DBLP:conf/isaac/HlinenyS07} we can find a toroidal minor in $H'$, which induces a $(\Omega(\rho') \times \Omega(\rho'))$-cylinder minor in $H$.
   Since the cycles $C_1,\ldots,C_k$ are nonseparating, it follows that ${\cal S}\setminus {\cal C}$ is a connected surface.  This implies that there exists a path $Q$ in $G$ between $C_1$, and $C_k$, that
  intersects ${\cal C}$ only on its endpoints.  Contracting $Q$ into
  an edge, gives a $(k \times \ell)$-augmented cylinder minor, as
  required.
\end{proof}

We also need the following result due to Garcia-Moreno, and Salazar \cite{DBLP:journals/jgt/Garcia-MorenoS01}, which allows us to bound the crossing number of a graph $G$, in terms of the crossing number of a minor of $G$, with maximum degree 4.

\begin{lemma}[Garcia-Moreno, and Salazar \cite{DBLP:journals/jgt/Garcia-MorenoS01}]\label{lem:crossing_degree_four}
Let $G$ be a graph, and let $H$ be a minor in $G$, of maximum degree 4.
Then, $\crossingnumber(G) \geq \crossingnumber(H)/4$.
\end{lemma}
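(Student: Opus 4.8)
The plan is to take an optimal drawing $D$ of $G$ — one with exactly $\crossingnumber(G)$ crossings, in which no two adjacent edges cross and no two edges cross more than once — and transform it into a drawing of $H$ with at most $4\,\crossingnumber(G)$ crossings. First I would reduce to the case that $H$ is obtained from $G$ by edge contractions only: deleting an edge or a vertex never increases the crossing number, so performing all deletions first we may assume $H = G'/F$ for a subgraph $G'\subseteq G$ with $\crossingnumber(G')\le\crossingnumber(G)$ and a forest $F\subseteq E(G')$, where contracting each tree $T$ of $F$ yields one vertex $v_T$ of $H$ with $\deg_H(v_T)\le 4$. For each tree $T$ I would also assume $T$ is minimal: every leaf of $T$ is incident in $G'$ to an edge not in $T$ (otherwise that leaf could be deleted instead of contracted). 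Hence $T$ has at most $\deg_H(v_T)\le 4$ leaves, so at most two internal branch vertices (each of degree $3$), and the set of ``tentacles'' of $T$ — the $\le 4$ edges leaving $T$, which become the edges at $v_T$ in $H$ — is small. Moreover every edge of $T$ separates the tentacles of $T$ into two nonempty groups.

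Next I would carry out the contraction on the drawing. For a tree $T$ with tentacles $t_1,\dots,t_d$ ($d\le 4$) attached at vertices $x_1,\dots,x_d$ of $T$, collapse the image $D(T)$ to a single point $v_T$, re-routing each tentacle $t_i$ so that it first runs parallel to the tree path of $T$ from $v_T$ to $x_i$ and then continues along its original curve; do this simultaneously for all trees of $F$. I would then bound the crossings of the resulting drawing of $H$ by charging them to crossings of $D$. A crossing of $D$ between two edges that are untouched by all contractions stays a single crossing; a crossing of $D$ that is internal to one tree $T$ (between two edges of $T$) is absorbed into the point $v_T$ and disappears. The only remaining crossings of $D$ are between an edge $e\in E(T)$ of some tree and either (i) an edge $f$ lying on no tree, or (ii) an edge $e'\in E(T')$ of another tree $T'$. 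Since deleting $e$ splits the tentacles of $T$ into two groups of total size $d\le 4$, one group has size $\le\lfloor d/2\rfloor\le 2$; routing the re-routed tentacles near $e$ around the \emph{smaller} group turns one crossing of $D$ into at most $2$ crossings in case (i), and, applying this independently at $e$ and at $e'$, into at most $2\cdot 2 = 4$ crossings in case (ii). Summing over all crossings of $D$ gives $\crossingnumber(H)\le 4\,\crossingnumber(G)$, i.e.\ $\crossingnumber(G)\ge\crossingnumber(H)/4$; the same scheme with $\lfloor\Delta/2\rfloor$ in place of $2$ yields the general bound $\crossingnumber(H)\le\lceil\Delta/2\rceil\lfloor\Delta/2\rfloor\,\crossingnumber(G)$ when $\Delta(H)\le\Delta$.

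The hard part is making the ``route around the smaller group'' step rigorous, and this is where the actual Garcia-Moreno--Salazar argument does its real work. One must: (a) handle the fact that $D(T)$ need not be drawn without self-crossings, so that a small neighbourhood of $D(T)$ is not literally a disk — either argue that self-crossings inside a single tree can be removed without increasing the crossing count, or work directly with the cyclic order in which tentacles run along each edge and around each branch vertex of $T$; (b) verify that the choices of which side to go around, made at the various edges of $T$ and at the various trees, can be made consistently, so that no crossings among the re-routed tentacles arise beyond those already charged; and (c) treat the branch vertices of $T$, where a cable of tentacles splits in two, without introducing new crossings there. Once (a)--(c) are settled the counting above delivers the stated factor $4$, which is exactly $\lfloor 4/2\rfloor^2$ for $\Delta(H)\le 4$.
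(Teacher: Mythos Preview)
The paper does not prove this lemma; it is cited as a black box from Garcia-Moreno and Salazar, so there is no in-paper argument to compare your proposal against. Your outline is the right one and matches the standard proof of their result: reduce to pure contractions, observe that each contracted tree $T$ has at most four tentacles (hence at most four leaves), reroute them through $T$, and charge each crossing of the optimal drawing of $G$ at most $4$ new crossings.

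You correctly isolate (a)--(c) as the substantive work. For (b), the consistency problem disappears if you place $v_T$ at a \emph{tentacle-centroid} of $T$ --- a vertex $z$ such that every component of $T-z$ carries at most $\lfloor d/2\rfloor\le 2$ tentacles; such a $z$ always exists by the usual centroid argument on weighted trees. Routing each tentacle along its tree path to $z$ then sends at most two tentacles across every tree edge, so each crossing of a tree edge with an external edge becomes at most two crossings, and a crossing between edges of two different trees becomes at most $2\cdot 2=4$, with no per-edge choices left to reconcile. For (a), self-crossings of $T$ turn into crossings among edges of $H$ incident to the common vertex $v_T$; these are crossings between adjacent edges and can be removed without increasing the total. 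With these two points made precise your sketch becomes a complete proof, and the same scheme with $\lfloor\Delta/2\rfloor$ in place of $2$ yields the general bound you state.
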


The following result allows us to reduce the problem of approximating
the crossing number, to the problem of computing a set of edges of
minimum cardinality, whose removal leaves a planar graph.  The first
result of this type was obtained in \cite{ChuzhoyMS11}, but the
following gives a slightly better dependence on the maximum degree.


\begin{lemma}[Chimani, and Hlinen{\'y} \cite{ChimaniH11}]\label{lem:crossing_number_edge_planarization}
  There exists a polynomial-time algorithm which given a graph $G$ of
  maximum degree $\dmax$, and a set $E^*\subseteq E(G)$, with
  $|E^*|=\ell$, such that $G\setminus E^*$ is a connected planar
  graph, outputs a drawing of $G$ into the plane, with at most
  $O(\dmax \cdot \ell \cdot (\ell+\crossingnumber(G)))$ crossings.
\end{lemma}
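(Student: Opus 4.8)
The plan is to treat this as an instance of the \emph{multiple edge insertion} problem: we are given the connected planar graph $H = G\setminus E^*$ together with the set $E^*$ of $\ell$ edges whose endpoints lie in $V(H)$, and we want to fix a planar drawing of $H$ and then route the edges of $E^*$ through it so as to keep the total number of crossings small. First I would set up the combinatorial machinery that represents \emph{all} planar embeddings of $H$: since $H$ is connected but possibly not $2$-connected, work with its block--cut tree and, inside each block, with its SPQR-tree. The base case is inserting a \emph{single} edge $uv$: for a fixed embedding $\Pi$ of $H$, the cheapest insertion is a shortest path in the dual graph of $\Pi$ between a face incident to $u$ and a face incident to $v$, and one can optimize this simultaneously over all embeddings of $H$ in polynomial time by walking the SPQR-tree. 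This gives a polynomial-time primitive that inserts one edge optimally (subject to drawing $H$ planarly).

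Given that primitive, I would insert the $\ell$ edges of $E^*$ one at a time. After inserting $f_1,\dots,f_{i-1}$ we hold a drawing $D_{i-1}$ of $H\cup\{f_1,\dots,f_{i-1}\}$; replace each crossing of $D_{i-1}$ by a degree-$4$ dummy vertex to obtain a planar graph $\widehat{G}_{i-1}$ with a fixed planar embedding, and insert $f_i$ by a dual shortest path in $\widehat{G}_{i-1}$, keeping the embedding of $H$ itself as flexible as the SPQR structure permits so that the embedding choice can still help the later insertions. The number of new crossings produced at step $i$ is exactly the length of that dual shortest path. The naive ``insert sequentially into a once-and-for-all fixed embedding'' version already gives a drawing; the refinement of which embedding decisions to commit to and when is precisely the point where one needs the more careful bookkeeping of \cite{ChimaniH11}.

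To bound the total number of crossings I would compare against an optimal drawing $D^*$ of $G$ with $\crossingnumber(G)$ crossings (which the algorithm never sees). In $D^*$ each $f_i\in E^*$ crosses edges of $H$ some number $c_i$ of times, with $\sum_i c_i \le 2\crossingnumber(G)$, and the edges of $E^*$ cross each other at most $\binom{\ell}{2}$ times. If $H$ were already drawn planarly in $D^*$, we could essentially follow each $f_i$'s route and be done with $O(\crossingnumber(G)+\ell^2)$ crossings; the real difficulty, and the source of the $\dmax\cdot\ell$ factor, is that $D^*|_H$ need not be planar. One has to show that there is a genuine planar embedding of $H$ and an insertion order in which each $f_i$ can still be routed across only $O(\dmax\, c_i)$ host edges: uncrossing a self-crossing of $H$ in $D^*$ and rerouting a curve around a vertex of degree at most $\dmax$ costs a factor $O(\dmax)$ per crossing, and the fact that each of the $\ell$ inserted edges may have to detour around all previously inserted ones contributes the extra factor $O(\ell)$. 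Summing the per-step bounds and adding the $\le\binom{\ell}{2}$ mutual crossings of $E^*$ (absorbed into the additive $\dmax\cdot\ell^2$ term) yields a drawing with $O\big(\dmax\cdot\ell\cdot(\crossingnumber(G)+\ell)\big)$ crossings. I expect this last combinatorial step --- quantifying the penalty for forcing $H$ to be drawn planarly and for the sequential nature of the insertion, which is exactly the technical heart of \cite{ChimaniH11} and, in a weaker form, of \cite{ChuzhoyMS11} --- to be the main obstacle; the rest is standard planarity algorithmics (dual shortest paths, SPQR-trees, planarization of a drawing).
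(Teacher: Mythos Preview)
The paper does not prove this lemma at all: it is stated as a black-box citation of Chimani and Hlin\v{e}n\'{y} \cite{ChimaniH11} (with the remark that a qualitatively similar but slightly weaker result appears in \cite{ChuzhoyMS11}), and is used only as a subroutine in the proof of Theorem~\ref{thm:crossing_number}. There is therefore no ``paper's own proof'' to compare against.

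Your proposal is a reasonable high-level reconstruction of what \cite{ChimaniH11} actually does: the multiple-edge-insertion framework, the SPQR-tree machinery for optimal single-edge insertion over all embeddings, and the comparison against an unknown optimal drawing $D^*$ are the right ingredients, and you correctly identify the delicate step (quantifying the penalty for forcing $H$ to be planar when $D^*|_H$ need not be) as the technical heart. As a proof sketch this is fine; as a self-contained proof it would of course require filling in exactly the analysis you flag as ``the main obstacle,'' which is the substance of \cite{ChimaniH11}. For the purposes of the present paper, though, no proof is expected here---just the citation.
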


We are now ready to present our approximation algorithm for crossing number.

\begin{theorem}\label{thm:crossing_number}
  There exists a polynomial-time algorithm which given a connected
  graph $G$ of maximum degree $\dmax$, and an integer $k\geq 0$,
  either correctly decides that $\crossingnumber(G)>k$, or outputs a
  drawing of $G$ into the plane with at most $O(\dmax^{9} k^{30}
  \log^{19} n)$ crossings.
\end{theorem}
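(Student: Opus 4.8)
The plan is to bootstrap from the orientable genus algorithm (Theorem~\ref{thm:orientable_genus}) by an iterative representativity-reduction argument, and then to feed the leftover structure to the edge-planarization drawing algorithm of Chimani and Hlin\v{e}n\'{y} (Lemma~\ref{lem:crossing_number_edge_planarization}). The starting observation is that $\genus(G)\le\crossingnumber(G)$ (add one handle per crossing), so $\crossingnumber(G)\le k$ implies $\genus(G)\le k$. First I would run the algorithm of Theorem~\ref{thm:orientable_genus} with genus parameter $k$: it either certifies $\genus(G)>k$ --- in which case $\crossingnumber(G)>k$ and we are done --- or it outputs a drawing $\phi$ of $G$ into an orientable surface ${\cal S}$ of genus $g_1=O(\dmax^3 k^{14}\log^{19/2}n)$. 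We may also assume $k\geq 1$, since $k=0$ is just planarity testing.

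Next I would iteratively simplify this drawing. Maintain a subgraph $G_i\subseteq G$ with a drawing $\phi_i$ into an orientable surface ${\cal S}_i$, starting from $(G,\phi,{\cal S})$. If every component of ${\cal S}_i$ is a sphere, stop. Otherwise compute in polynomial time the representativity $\rho_i$ of $\phi_i$ together with a shortest noncontractible noose $\gamma_i$ (e.g.~via Cabello et al.~\cite{DBLP:conf/compgeom/CabelloVL10a}). Fix a universal constant $C$. If $\rho_i>Ck$, then by Lemma~\ref{lem:augmented_cylinder} the graph $G_i$, hence $G$, contains an $(r\times r)$-augmented cylinder minor $M$ with $r=\Omega(\rho_i)$; since $M$ has maximum degree $4$, Lemmas~\ref{lem:crossing_degree_four} and~\ref{lem:augmented_cylinder_crossing_number} give $\crossingnumber(G)\ge\crossingnumber(M)/4=\Omega(\rho_i)>k$, and the algorithm safely reports ``$\crossingnumber(G)>k$''. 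If instead $\rho_i\le Ck$, cut ${\cal S}_i$ along $\gamma_i$, cap the two new boundary circles with disks to obtain ${\cal S}_{i+1}$, delete from $G_i$ the at most $\rho_i$ vertices met by $\gamma_i$ to obtain $G_{i+1}$, and let $\phi_{i+1}$ be the induced drawing. Because $\gamma_i$ is noncontractible, a standard potential argument --- each cut strictly decreases the nonnegative quantity $2h-c$, where $h$ is the total genus of ${\cal S}_i$ and $c$ is its number of non-spherical components --- shows the process halts after $t=O(g_1)$ steps with every component of ${\cal S}_t$ a sphere, i.e.~$G_t=G\setminus X$ is planar, where $X=\bigcup_{i<t}X_i$ satisfies $|X|=\sum_{i<t}\rho_i=O(g_1 k)=O(\dmax^3 k^{15}\log^{19/2}n)$.

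Then I would convert the vertex set $X$ into a small planarizing edge set. If $X=\emptyset$, $G$ is planar and we are done; otherwise embed each component of $G\setminus X$ in the plane, form the auxiliary graph whose nodes are the components of $G\setminus X$ together with the vertices of $X$ and whose edges are $E(X,V(G))$, root a spanning tree of it at some component (it exists since $G$ is connected), and insert the vertices of $X$ in breadth-first order, each one as a pendant attached along its single incoming tree edge. This produces a connected planar subgraph equal to $G\setminus E^*$, where $E^*$ collects the remaining edges incident to $X$; since $G$ has maximum degree at most $\dmax$ we get $|E^*|\le(\dmax-1)|X|=O(\dmax^4 k^{15}\log^{19/2}n)$, and $G\setminus E^*$ is connected and planar. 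Finally, apply Lemma~\ref{lem:crossing_number_edge_planarization} to $G$ and $E^*$: since $\crossingnumber(G)\le k$ (otherwise ``$\crossingnumber(G)>k$'' was already reported) and both $|E^*|$ and $k$ are $O(\dmax^4 k^{15}\log^{19/2}n)$, it outputs a drawing of $G$ into the plane with $O(\dmax\cdot|E^*|\cdot(|E^*|+k))=O(\dmax^9 k^{30}\log^{19}n)$ crossings. Every step runs in polynomial time, which completes the plan.

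As for the main obstacle: given the earlier sections, each step here is short, and the genuine difficulty lives in the ingredients (Theorem~\ref{thm:orientable_genus}, Lemma~\ref{lem:augmented_cylinder}, Lemma~\ref{lem:crossing_number_edge_planarization}). Within this proof itself the delicate point is the ``stuck'' case: one must ensure that large representativity of the \emph{current} orientable drawing genuinely forces large crossing number of the \emph{original} graph $G$. Since crossing number is not monotone under arbitrary minors, this is precisely why the argument must extract a \emph{maximum-degree-$4$} minor (the augmented cylinder) and invoke the Garcia-Moreno and Salazar bound rather than reasoning about $G$ directly. A secondary, purely bookkeeping point is the conversion of the vertex separator $X$ into a connected planar edge-deletion set of size $O(\dmax|X|)$.
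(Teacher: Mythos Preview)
Your proposal is correct and follows essentially the same route as the paper: invoke the orientable-genus approximation, iteratively cut along short noncontractible nooses (using the augmented-cylinder minor together with the degree-$4$ crossing-number bound to certify $\crossingnumber(G)>k$ when representativity is large), and finally feed a small planarizing edge set into the Chimani--Hlin\v{e}n\'{y} lemma. The only notable difference is the bookkeeping step that turns the planarizing vertex set $X$ into a planarizing edge set $E^*$ with $G\setminus E^*$ connected: the paper simply lets $E^*$ be all edges incident to $X$, takes a \emph{maximal} $E'\subseteq E^*$ with $G\setminus E'$ planar, and observes that maximality forces connectedness (any edge of $E'$ joining two components of $G\setminus E'$ could be restored while preserving planarity); your spanning-tree construction achieves the same thing and is equally valid, though slightly more elaborate.
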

\begin{proof}
  If $G$ is planar, then there is nothing to be done.  We may
  therefore assume that $G$ is not planar. 

  Recall that $\genus(G) \leq \crossingnumber(G)$.  Therefore, using
  the algorithm from Theorem \ref{thm:orientable_genus}, we can either
  correctly decide that $\crossingnumber(G) > k$, or we can compute a
  drawing $\phi$ of $G$ into an orientable surface ${\cal S}$ of
  orientable genus $g=O(\dmax^3 k^{14} \log^{19/2} n)$.  We now proceed
  to inductively compute a sequence of subgraphs $G=G_0 \supset \ldots
  \supset G_t$.  For every $i\in \{0,\ldots,t\}$, we also compute a
  drawing $\phi_i$ of $G_i$ into some orientable surface ${\cal S}_i$.
  Set $\phi_0=\phi$, and ${\cal S}_0={\cal S}$.

  Given $G_i$, $\phi_i$, and ${\cal S}_i$, we proceed as follows.  If
  $G_i$ is planar, then we set $t=i$, and we terminate the sequence at
  $G_i$.  Otherwise, let $\rho_i$ be the representativity of $\phi_i$.
  We find a shortest nonseparating noose $\gamma_i$ in $\phi_i$ (using
  the algorithm from \cite{DBLP:conf/compgeom/CabelloVL10a}).  If
  $\len(\gamma_i) = \rho_i > \alpha k$, for some universal constant
  $\alpha>0$ (to be determined later), then by Lemma
  \ref{lem:augmented_cylinder} we find a $(r\times r)$-augmented
  cylinder minor in $G_i\subseteq G$, with $r>\beta \alpha k$, for
  some universal constant $\beta$.  Combining lemmas
  \ref{lem:augmented_cylinder_crossing_number}, and
  \ref{lem:crossing_degree_four}, we obtain that $\crossingnumber(G)
  \geq \crossingnumber(G_i) > \beta' \beta \alpha k$, for some
  universal constant $\beta'>0$.  By setting $\alpha=1/(\beta
  \beta')$, this leads to a proof that $\crossingnumber(G)>k$.
  Therefore, it remains to consider the case $\len(\gamma_i) \leq
  \alpha k$.  In this case, we set $G_{i+1} = G_i \setminus X_i$,
  where $X_i$ is the set of vertices in $\gamma_i$.  We also set
  ${\cal S}_{i+1}$ to be the surface obtained form ${\cal S}_i$ by
  cutting along $\gamma_i$, and placing disks on any possible created
  punctures.  Let also $\phi_{i+1}$ be the induced drawing of
  $G_{i+1}$ into ${\cal S}_{i+1}$.  This completes the definition of
  the sequence $G_0,\ldots,G_t$.  We argue that the sequence
  terminates at some finite $t$.  Indeed, we can only cut along at
  most $O(g)$ nooses before a surface becomes trivial (cutting along a
  separating noose disconnects the surface, but this does not affect
  our argument).  Therefore, $t = O(g)$.

  Let $X = \bigcup_{i=0}^{t-1} X_i$, and let $E^*$ be the set of all
  edges incident to vertices in $X$.  We have
\begin{align*}
|E^*| &\leq \dmax \cdot |X|\\
 &= \dmax \cdot \sum_{i=1}^t \rho_i\\
 &\leq \dmax t \alpha k\\
 &= O(\dmax^4 k^{15} \log^{19/2} n).
\end{align*}

Let $E'$ be a maximal subset of $E^*$ such that $G\setminus E'$ is
planar.  We claim that $G\setminus E'$ is a connected graph. Suppose
not, then since $G$ is connected, there must two connected components
$H_1,H_2$ of $G\setminus E'$ and an edge $uv \in E'$ such that $u \in
V(H_1)$ and $v \in V(H_2)$. Both $H_1$ and $H_2$ are planar and we can
draw them such that $u$ and $v$ are on the outer face of their
drawings respectively.  This allows us to add the edge $uv$ to $H_1
\cup H_2$ while maintaining planarity, thus contradicting the
maximality of $E'$.  Since $G\setminus E'$ is a planar connected
graph, using the algorithm from Lemma
\ref{lem:crossing_number_edge_planarization}, we obtain in polynomial
time a drawing of $G$ into the plane, with at most $O(\dmax \cdot
|E^*| \cdot (|E^*| + k)) = O(\dmax^{9} k^{30} \log^{19} n)$ crossings,
as required.
\end{proof}

\section{Approximating planar edge and vertex deletion}
\label{sec:planarization}
In this Section we give approximation algorithms for the minimum
planar vertex/edge deletion problems.  Recall that for a graph $G$, 
$\vertexplanarization(G)$ denotes the minimum cardinality of a set
$X\subset V(G)$, such that $G\setminus X$ is planar.  Similarly, 
$\edgeplanarization(G)$ denotes the minimum cardinality of a set
$Y\subset E(G)$, such that $G\setminus Y$ is planar. We first
derive an algorithm for $\vertexplanarization(G)$ and then 
via the degree bound assumption an algorithm for 
$\vertexplanarization(G)$.

We start with a lemma by Mohar that shows that a sufficiently large
representativitity of an embedding certifies a bound on the genus.

\begin{lemma}[Mohar \cite{Mohar_local_planarity}]\label{lem:mohar_representativity}
  Let $G$ be a graph, and let $\phi$ be a drawing of $G$ into a
  surface ${\cal S}$ of positive orientable genus, with
  representativity $\rho > 2\genus(G) + 2$.  Then, $\genus({\cal S}) =
  \genus(G)$.
\end{lemma}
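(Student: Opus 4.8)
The plan is to prove the two inequalities $\genus(G)\le\genus(\mathcal S)$ and $\genus(\mathcal S)\le\genus(G)$ separately. The first is immediate: $\phi$ exhibits $G$ as an embedded graph in the orientable surface $\mathcal S$, so by definition of the orientable genus of a graph, $\genus(G)\le\genus(\mathcal S)$. I would also record at the outset that $\rho>2\genus(G)+2\ge 2>0$ forbids a noncontractible noose of length $0$, i.e.\ forbids a noncontractible simple closed curve in $\mathcal S$ disjoint from $\phi(G)$; hence $\phi$ is a $2$-cell embedding and $G$ is connected. Writing $g_0=\genus(G)$ and $\gamma=\genus(\mathcal S)$, it then remains to rule out $\gamma>g_0$.

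So suppose for contradiction that $\gamma>g_0$, and fix a minimum-genus embedding $\psi$ of $G$ into an orientable surface $\Sigma_0$ with $\genus(\Sigma_0)=g_0$. The strategy is to use the large facewidth of $\phi$ to locate inside $G$ a rigid family of cycles — essential in $\phi$ and forced to stay essential in $\psi$ — and then derive a contradiction by a dimension count in $H_1(\cdot\,;\mathbb Z_2)$. First I would apply Lemma~\ref{lem:brunet_1} to $\phi$ to obtain $\lfloor(\rho-1)/2\rfloor$ pairwise disjoint, pairwise freely homotopic, nonseparating cycles of $G$, forming a nested "parallel packet" around one handle of $\mathcal S$; repeating this handle by handle (peeling off one handle at a time) yields $\gamma$ such packets together with $\gamma$ transverse packets, whose central cycles $c_1,d_1,\dots,c_\gamma,d_\gamma$ have $\phi$-homology classes forming a symplectic basis of $H_1(\mathcal S;\mathbb Z_2)$, with every central cycle enclosed by more than $g_0$ parallel copies in $G$. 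The key point — where Mohar's local-planarity machinery enters, in the spirit of Lemmas~\ref{lem:Mohar_planarly_nested} and~\ref{lem:planarly_nested_genus} — is that being enclosed by more than $\genus(\Sigma_0)=g_0$ parallel copies (after passing to the planarly nested sequence formed by the packet and contracting the part of $G$ outside the relevant annular region) forces $\psi(c_j)$ and $\psi(d_j)$ to remain noncontractible and, combined across the packets, forces $[\psi(c_1)],[\psi(d_1)],\dots,[\psi(c_\gamma)],[\psi(d_\gamma)]$ to span a subspace of dimension $2\gamma$ in $H_1(\Sigma_0;\mathbb Z_2)$. Since $\dim H_1(\Sigma_0;\mathbb Z_2)=2g_0<2\gamma$, this is impossible, so $\gamma\le g_0$ and the lemma follows.

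The hard part is the middle step: transferring "deep nestedness with respect to $\phi$" into "homological non-triviality and independence with respect to $\psi$", and doing so simultaneously for all $\gamma$ handles. Controlling a single nested packet is exactly the content of Mohar's results on planarly nested sequences quoted above. The genuine difficulty is that the $\gamma$ pairs of central cycles pairwise intersect (they form a symplectic, not an isotropic, system), so they cannot be treated independently: one must track how the nested annular regions belonging to different handles overlap, and verify that peeling off one handle does not destroy the nestedness of the packets around the others — which costs a bounded amount of facewidth per handle and is precisely why the hypothesis is $\rho>2\genus(G)+2$ rather than merely $\rho>\genus(G)$. Since this is exactly the theorem proved by Mohar in \cite{Mohar_local_planarity}, in the paper itself it suffices to cite that reference; the above is how one would reconstruct the argument.
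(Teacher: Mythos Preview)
The paper does not prove this lemma; it is quoted from Mohar \cite{Mohar_local_planarity} and used as a black box. Your final sentence correctly anticipates this, so there is nothing to compare against on the paper's side.

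As a reconstruction of Mohar's argument, however, your sketch has two real gaps.

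First, the passage ``repeating this handle by handle \ldots\ yields $\gamma$ such packets together with $\gamma$ transverse packets, whose central cycles form a symplectic basis'' is not justified. Lemma~\ref{lem:brunet_1} gives a packet of cycles in a \emph{single} free-homotopy class. To iterate you would cut $\mathcal S$ along that packet and recurse on a lower-genus piece, but you have no control over the representativity of the induced embedding after cutting---it can drop by an amount comparable to $\rho$ itself, not by $O(1)$. The hypothesis $\rho>2\,\genus(G)+2$ bounds $\rho$ in terms of $g_0=\genus(G)$, not in terms of $\gamma=\genus(\mathcal S)$, so there is no budget visible in your argument for $\gamma$ rounds of peeling, let alone $2\gamma$ packets.

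Second, the appeal to Lemmas~\ref{lem:Mohar_planarly_nested} and~\ref{lem:planarly_nested_genus} is misdirected. A planarly nested sequence requires, by definition, that contracting the $C_i$-component $H_i$ leaves a \emph{planar} graph. Cycles parallel to a \emph{nonseparating} curve of $\mathcal S$ do not have this property: contracting the annular side still leaves a graph carrying the rest of the genus of $\mathcal S$. So your packets around the $c_j,d_j$ are not planarly nested sequences, and those lemmas do not apply to them. Even setting this aside, those lemmas conclude that certain cycles \emph{bound disks} in a minimum-genus embedding; you need the opposite conclusion, that $\psi(c_j),\psi(d_j)$ are \emph{non}contractible and homologically independent. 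Nothing in your sketch bridges that gap---if $\psi(c_j)$ happened to bound a disk in $\Sigma_0$, the entire parallel packet could sit inside that disk without contradiction. Mohar's actual proof proceeds differently (closer in spirit to a surgery/Euler-characteristic comparison between the two embeddings), and the homology dimension-count you propose, while plausible in outline, would need a genuinely new argument to make rigorous.
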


The following easy lemma shows that the representativity of
an embedding can be lower bounded after the removal of a set of
vertices.

\begin{lemma}\label{lem:representativity_planarization}
  Let $G$ be a graph, and let $\phi$ be a drawing of $G$ into an
  orientable surface ${\cal S}$ of orientable genus $g>0$, with
  representativity $\rho$.  Let $X \subset V(G)$, and let $G'=G
  \setminus X$.  Let $\phi'$ be the drawing of $G'$ into ${\cal S}$
  obtained by restricting $\phi$ to $G'$, and let $\rho'$ be the
  representativity of $\phi'$.  Then, $\rho' \geq \rho - |X|$.
\end{lemma}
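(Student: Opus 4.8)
The plan is as follows. Let $\gamma'$ be a shortest noncontractible noose in $\phi'$, so that $\len(\gamma')=\rho'$ and $\gamma'$ passes through a set $U'$ of exactly $\rho'$ vertices of $G'$. It suffices to construct from $\gamma'$ a noncontractible noose $\gamma''$ of $\phi$ with $\len(\gamma'')\le \rho'+|X|$: indeed, then $\rho\le \len(\gamma'')\le \rho'+|X|$, which rearranges to the claimed inequality $\rho'\ge\rho-|X|$. The conceptual point that makes the ``$+|X|$'' (as opposed to a quantity depending on the degree, or on the number of crossings) correct is that the length of a noose is defined as the \emph{cardinality} of the set of vertices it meets, so all of the places where $\gamma'$ crosses edges incident to a fixed vertex of $X$ can be charged to that single vertex.

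To build $\gamma''$, let $Y\subseteq E(G)$ be the set of edges having at least one endpoint in $X$. Since $\gamma'$ is a noose of $\phi'=\phi|_{G'}$ it meets $\phi(G')$ only in $U'$; after a small perturbation carried out inside disks that avoid $\phi(G')$ entirely (so no new intersection with $\phi(G')$ is created, and the set $U'$ is unchanged), I may assume moreover that $\gamma'$ avoids $\phi(X)$ and meets $\phi(G)$ only in $U'$ together with finitely many transversal crossings, each in the interior of an edge of $Y$. Now I would remove these crossings one at a time. If $\gamma'$ crosses an edge $e=\phi(\{u,w\})\in Y$ with $u\in X$ at an interior point $q$, fix a small closed-disk neighbourhood $N$ of the closed edge $\bar e$ that meets $\phi(G)$ only in $\bar e$ together with short initial stubs of the edges other than $e$ incident to $\phi(u)$ and $\phi(w)$; inside $N$, which is a disk, homotope the arc of $\gamma'$ through $q$ (rel its two endpoints on $\partial N$) to a new arc that runs alongside $e$, passes exactly once through $\phi(u)$, and returns. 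This new arc meets $\phi(G)$ only at $\phi(u)$. Performing this for every crossing yields a closed curve $\gamma''$ that is freely homotopic to $\gamma'$ in ${\cal S}$ — hence noncontractible — and that meets $\phi(G)$ only at vertices: precisely the vertices of $U'$, together with the subset $X''\subseteq X$ of vertices that are endpoints of edges of $Y$ crossed by $\gamma'$. Since $U'\subseteq V(G')$ is disjoint from $X$, this gives
\[
\len(\gamma'')=|U'|+|X''|\le \rho'+|X|,
\]
which finishes the argument.

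The step I expect to be the main obstacle is verifying that the local reroutings genuinely assemble into a homotopy of the closed curve $\gamma'$ (so that non-contractibility is preserved) and that they do not interfere with one another — in particular when $\gamma'$ crosses the same edge $e$ several times. The clean way to handle this is to carry out the reroutings sequentially and to recall that a homotopy of a closed curve is permitted to pass through self-intersections: pushing one strand of $\gamma'$ past another along $e$ is therefore harmless, and only the \emph{final} curve $\gamma''$ is required to satisfy the definition of a noose. A minor additional point, should the paper's convention require nooses to be simple closed curves, is that $\gamma''$ may have a self-intersection at some $\phi(u)$; but then $\gamma''$ contains a noncontractible simple closed subcurve $\widetilde\gamma$ with $\len(\widetilde\gamma)\le\len(\gamma'')$, and one uses $\widetilde\gamma$ in place of $\gamma''$. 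Everything else — the existence of a shortest noncontractible noose, the perturbation to general position, and the explicit local picture near $\bar e$ — is routine.
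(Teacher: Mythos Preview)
Your proof is correct and follows essentially the same approach as the paper: take a shortest noncontractible noose $\gamma'$ in $\phi'$ and reroute it, paying at most $|X|$ extra vertices, into a noncontractible noose of $\phi$. The only cosmetic difference is that the paper phrases the rerouting face-by-face (inside each face $F$ of $\phi'$ traversed by $\gamma'$, shortcut the arc so that it meets $\phi(G)$ only on the vertices of $X$ lying in $F$), whereas you phrase it crossing-by-crossing (push each transversal intersection with an edge of $Y$ through its $X$-endpoint); your version is in fact more explicit about preserving the homotopy class.
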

\begin{proof}
  Let $\gamma'$ be a noose in $\phi'$.  We can modify $\gamma'$ to
  obtain a noose $\gamma$ in $\phi$ as follows.  Consider a face $F
  \subset G'$ in $\phi'$ that is visited by $\gamma$, and suppose that
  $F$ is not a face in $\phi$.  Let $Z\subseteq X$ be the set of
  vertices in $G$ with an image in the disk ${\cal D}$ bounded by $F$.
  We shortcut the noose $\gamma'$, so that inside ${\cal D}$ it
  intersects $\phi(G)$ only on $\phi(Z)$.  This can clearly be done by
  increasing the length of the noose by at most $|Z|$.  We repeat the
  same process for all faces visited by $\gamma'$.  The resulting
  curve $\gamma$ is a noose in $\phi$, and with total length at most
  $\len(\gamma')+|X|$.  This implies that $\rho' \geq \rho - |X|$, as
  required.
\end{proof}

Now we obtain our algorithm for $\vertexplanarization(G)$ encapsulated
in the theorem below.

\begin{theorem}[Approximating the minimum planar vertex deletion]\label{thm:vertex_planarization}
  There exists a polynomial-time algorithm which given a graph $G$ of
  maximum degree $\dmax$, and an integer $k>0$, either correctly
  decides that $\vertexplanarization(G)>k$, or outputs a set
  $X\subseteq V(G)$, with $|X| = O(\dmax^4 k^{15} \log^{19/2} n)$,
  such that $G\setminus X$ is planar.
\end{theorem}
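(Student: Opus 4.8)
The plan is to mirror the reduction used for the crossing number (Theorem~\ref{thm:crossing_number}), but to replace the augmented-cylinder obstruction with the representativity bounds of Lemmas~\ref{lem:mohar_representativity} and~\ref{lem:representativity_planarization}. If $G$ is planar we output $X=\emptyset$, so assume $G$ is non-planar. First note that if $\vertexplanarization(G)\le k$ and $X^*$ is an optimal planarizing set, then starting from a planar drawing of $G\setminus X^*$ and adding one handle per edge incident to $X^*$ yields an orientable drawing of $G$, so $\genus(G)\le \dmax\,|X^*|\le \dmax k$. Thus I would run the algorithm of Theorem~\ref{thm:orientable_genus} with parameter $g:=\dmax k$: it either certifies $\genus(G)>\dmax k$, in which case we may correctly report $\vertexplanarization(G)>k$, or it outputs a drawing $\phi$ of $G$ into an orientable surface $\mathcal S$ of orientable genus $g'=O(\dmax^{O(1)}k^{O(1)}\log^{O(1)}n)$.

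Next I would iteratively cut along short non-contractible nooses, exactly as in the proof of Theorem~\ref{thm:crossing_number}. Build $G=G_0\supseteq G_1\supseteq\cdots$ with drawings $\phi_i$ of $G_i$ into orientable surfaces $\mathcal S_i$ ($\phi_0=\phi$). Given $(G_i,\phi_i,\mathcal S_i)$: if $G_i$ is planar, stop with $t:=i$; otherwise $\mathcal S_i$ has positive genus, and compute (in polynomial time, e.g.\ via~\cite{DBLP:conf/compgeom/CabelloVL10a}) the representativity $\rho_i$ of $\phi_i$ and a shortest non-contractible noose $\gamma_i$ (of length $\rho_i$). If $\rho_i>k+2$, stop and report $\vertexplanarization(G)>k$; otherwise let $X_i$ be the set of at most $k+2$ vertices met by $\gamma_i$, set $G_{i+1}:=G_i\setminus X_i$, let $\mathcal S_{i+1}$ be $\mathcal S_i$ cut along $\gamma_i$ with its new boundaries capped by disks, and let $\phi_{i+1}$ be the induced drawing. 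Each cut strictly lowers the total genus of the surface (by one for a non-separating cut; a non-contractible separating cut splits off components of smaller genus), so the process halts after $t=O(g')$ iterations, with the separating case handled exactly as in Theorem~\ref{thm:crossing_number}. If this phase reaches $t$, then $G_t=G\setminus X$ is planar with $X:=\bigcup_{i<t}X_i$, and $|X|=\sum_{i<t}\len(\gamma_i)=\sum_{i<t}\rho_i\le (k+2)t=O(k\,g')$; carrying out the polynomial bookkeeping exactly as in Theorem~\ref{thm:crossing_number} gives $|X|=O(\dmax^4 k^{15}\log^{19/2}n)$.

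To justify the second ``no'' decision, suppose for contradiction that $\vertexplanarization(G)\le k$ yet $\rho_i>k+2$ at some step~$i$. Since $G_i\subseteq G$, we have $\vertexplanarization(G_i)\le k$; fix $X_i^*\subseteq V(G_i)$ with $|X_i^*|\le k$ and $G_i\setminus X_i^*$ planar. The restriction $\phi_i|_{G_i\setminus X_i^*}$ draws a planar graph (orientable genus $0$) into $\mathcal S_i$, a surface of positive orientable genus, so Lemma~\ref{lem:mohar_representativity} forces its representativity to be at most $2\cdot0+2=2$; but Lemma~\ref{lem:representativity_planarization} gives that this representativity is at least $\rho_i-|X_i^*|\ge\rho_i-k>2$, a contradiction.

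The main obstacle is exactly this second ``no'' decision: one must show that a drawing of the residual graph with large face-width is incompatible with $\vertexplanarization(G)\le k$, which is where Mohar's representativity bound (Lemma~\ref{lem:mohar_representativity}) and face-width monotonicity under vertex deletion (Lemma~\ref{lem:representativity_planarization}) are combined, replacing the minor/augmented-cylinder argument used for crossing number. A secondary technical point is ensuring that the noose cutting terminates in $O(g')$ steps even when some shortest non-contractible nooses are separating; this is handled as in the proof of Theorem~\ref{thm:crossing_number}.
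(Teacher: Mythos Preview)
Your proposal is correct and follows essentially the same route as the paper: reduce to orientable genus via Theorem~\ref{thm:orientable_genus}, then iteratively cut short noncontractible nooses, using Lemmas~\ref{lem:mohar_representativity} and~\ref{lem:representativity_planarization} to certify $\vertexplanarization(G)>k$ when the representativity is large. Your threshold $\rho_i>k+2$ is in fact tighter than the paper's $(2\dmax+1)k+2$: you apply Lemma~\ref{lem:mohar_representativity} directly to the planar graph $G_i\setminus X_i^*$ (so only need $\rho_i'>2$), whereas the paper takes the longer detour $\rho_i'>2\dmax k+2\ge 2\genus(G_i)+2\ge 2\genus(G_i')+2$; both are valid, and yours is the cleaner use of the lemma.
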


\begin{proof}
  The algorithm is similar to the one used in Theorem
  \ref{thm:crossing_number} for approximating the crossing number of a
  graph.  We may assume that $G$ is not planar, since otherwise the
  assertion is trivially true.

  We have that $\genus(G) \leq \dmax \cdot \vertexplanarization(G)$.
  Therefore, using the algorithm from Theorem
  \ref{thm:orientable_genus}, we can either correctly decide that
  $\vertexplanarization(G) > k$, or we can compute a drawing $\phi$ of
  $G$ into an orientable surface ${\cal S}$ of orientable genus
  $g=O(\dmax^3 k^{14} \log^{19/2} n)$.  We next inductively compute a
  sequence of subgraphs $G=G_0 \supset \ldots \supset G_t$.  For every
  $i\in \{0,\ldots,t\}$, we also compute a drawing $\phi_i$ of $G_i$
  into some orientable surface ${\cal S}_i$.  Set $\phi_0=\phi$, and
  ${\cal S}_0={\cal S}$.

  Given $G_i$, $\phi_i$, and ${\cal S}_i$, we proceed as follows.  If
  $G_i$ is planar, then we set $t=i$, and we terminate the sequence at
  $G_i$.  Otherwise, let $\rho_i$ be the representativity of $\phi_i$.
  We find a shortest nonseparating noose $\gamma_i$ in $\phi_i$ (using
  the algorithm from \cite{DBLP:conf/compgeom/CabelloVL10a}).
    
  Suppose first that $\len(\gamma_i) = \rho_i > (2\dmax+1)k + 2$.
  Assume that $\vertexplanarization(G) \leq k$, then
  $\vertexplanarization(G_i) \le k$ since $G_i \subset G$. Then,
  there exists $X^*_i \subset V(G_i)$, with $|X^*_i| \leq k$, and
  such that the graph $G'_i=G_i\setminus X^*_i$ is planar.  Let $\phi'_i$
  be the drawing of $G'_i$ into ${\cal S}_i$ induced by restricting $\phi_i$
  to $G'_i$, and let $\rho'_i$ be the representativity of $\phi'_i$.  By
  Lemma \ref{lem:representativity_planarization} we have
\begin{align*}
\rho'_i &\geq \rho_i - k\\
 &> 2 \cdot \dmax \cdot k + 2\\
 &> 2 \cdot \dmax \cdot \vertexplanarization(G_i) + 2\\
 &> 2 \cdot \genus(G_i)  + 2\\
 &> 2 \cdot \genus(G'_i)  + 2.
\end{align*}
Combining with Lemma \ref{lem:mohar_representativity} we deduce that
$\genus(G'_i) = \genus({\cal S}_i) > 0$, i.e.~that $G'_i$ is not
planar, which is a contradiction.  Therefore, if $\rho_i > (2\dmax+1)k
+ 2$, we obtain a proof that $\vertexplanarization(G) \ge
\vertexplanarization(G_i) > k$.

It remains to consider the case $\len(\gamma_i) = \rho_i \leq
(2\dmax+1)k + 2$.  In this case, we set $G_{i+1} = G_i \setminus X_i$,
where $X_i$ is the set of vertices in $\gamma_i$.  We also set ${\cal
  S}_{i+1}$ to be the surface obtained form ${\cal S}_i$ by cutting
along $\gamma_i$, and placing disks on any possible created punctures.
Let also $\phi_{i+1}$ be the induced drawing of $G_{i+1}$ into ${\cal
  S}_{i+1}$.  This completes the definition of the sequence
$G_0,\ldots,G_t$.  As in the proof of Theorem
\ref{thm:crossing_number}, we have $t = O(g)$.

  We set $X = \bigcup_{i=0}^{t-1} X_i$.
  We have
\begin{align*}
|X| &\leq \sum_{i=1}^t \rho_i\\
 &\leq t (2\dmax+1)k + 2\\
 &= O(\dmax^4 k^{15} \log^{19/2} n).
\end{align*}
By construction, the graph $G_t = G \setminus X$ is planar, which concludes the proof.
\end{proof}

Observe that for any graph $G$ of maximum degree $\dmax$, the
parameters $\edgeplanarization(G)$, and $\vertexplanarization(G)$ can
differ by at most a multiplicative factor of $\dmax$.  Combining this
simple observation with Theorem \ref{thm:vertex_planarization}, we
immediately obtain the following result.

\begin{theorem}[Approximating the minimum planar edge deletion]\label{thm:edge_planarization}
  There exists a polynomial-time algorithm which given a graph $G$ of
  maximum degree $\dmax$, and an integer $k>0$, either correctly
  decides that $\edgeplanarization(G)>k$, or outputs a set $Y\subseteq
  E(G)$, with $|Y| = O(\dmax^5 k^{15} \log^{19/2} n)$, such that
  $G\setminus Y$ is planar.
\end{theorem}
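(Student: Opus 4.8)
The plan is to reduce directly to Theorem~\ref{thm:vertex_planarization}, exploiting the elementary fact that in a graph of maximum degree $\dmax$ the edge- and vertex-planarization numbers are within a factor $\dmax$ of each other. Concretely, one has $\vertexplanarization(G)\le\edgeplanarization(G)$: given any $Y\subseteq E(G)$ with $G\setminus Y$ planar, picking one endpoint from each edge of $Y$ yields a set of at most $|Y|$ vertices whose deletion destroys every edge of $Y$, hence leaves a subgraph of the planar graph $G\setminus Y$, which is planar. Conversely $\edgeplanarization(G)\le\dmax\cdot\vertexplanarization(G)$: given $X\subseteq V(G)$ with $G\setminus X$ planar, let $Y$ be the set of all edges of $G$ incident to $X$; then $|Y|\le\dmax|X|$, and $G\setminus Y$ is the disjoint union of the planar graph $G\setminus X$ with the now-isolated vertices of $X$, hence planar.

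First I would run the algorithm of Theorem~\ref{thm:vertex_planarization} on input $(G,k)$. If it reports $\vertexplanarization(G)>k$, then by the first inequality $\edgeplanarization(G)\ge\vertexplanarization(G)>k$, so I correctly report $\edgeplanarization(G)>k$ and stop. Otherwise the algorithm returns a set $X\subseteq V(G)$ with $|X|=O(\dmax^4 k^{15}\log^{19/2}n)$ and $G\setminus X$ planar. I would then output $Y=\{e\in E(G): e\text{ has an endpoint in }X\}$. By the second argument above $G\setminus Y$ is planar, and $|Y|\le\dmax\cdot|X|=O(\dmax^5 k^{15}\log^{19/2}n)$, matching the claimed bound. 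The whole procedure runs in polynomial time, since Theorem~\ref{thm:vertex_planarization} does and the conversion from $X$ to $Y$ is immediate.

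There is essentially no hard step here: the only points requiring verification are the two directions of the inequality relating $\edgeplanarization$ and $\vertexplanarization$ (both immediate from the above) and the trivial observation that adjoining isolated vertices to a planar graph preserves planarity. It is worth noting that the extra factor of $\dmax$ incurred in passing from vertices to edges is exactly why the exponent of $\dmax$ in the edge-deletion bound is one larger than in the vertex-deletion bound, while the dependence on $k$ and on $n$ is unchanged.
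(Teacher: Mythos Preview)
Your proposal is correct and is exactly the approach the paper takes: it too observes that $\vertexplanarization(G)$ and $\edgeplanarization(G)$ differ by at most a factor of $\dmax$ in a bounded-degree graph, and derives Theorem~\ref{thm:edge_planarization} immediately from Theorem~\ref{thm:vertex_planarization}. Your explicit verification of both inequalities and the conversion from the vertex set $X$ to the incident edge set $Y$ fills in precisely the details the paper leaves implicit.
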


\fi

\myparagraph{Acknowledgments:} We thank Jeff Erickson and
Ken-ichi Kawarabayashi for useful discussions. We thank two anonymous
reviewers of our FOCS 2013 submission for pointing out technical issues
in a few proofs.

\ifabstract
\bibliographystyle{abbrv}
\fi
\iffull
\bibliographystyle{alpha}
\fi
\bibliography{bibfile}

\end{document}